\definecolor{DarkBlue}{rgb}{0.1,0.1,0.5}
\newtheorem{theorem}{Theorem}[section]
\newtheorem{definition}{Definition}
\newtheorem{lemma}[theorem]{Lemma}
\newtheorem{proposition}[theorem]{Proposition}
\newtheorem{corollary}[theorem]{Corollary}
\newcommand{\eps}{\varepsilon}
\newcommand{\gametree}{\textsf{GameTree}\xspace}
\newcommand{\statetree}{{\textsf{StateTree}}\xspace}
\newcommand{\randomtree}{{\textsf{RandomnessTree}}\xspace}
\newcommand{\potent}{potent\xspace}
\newcommand{\Sim}{{\mathsf{Sim}_\pi}}
\newcommand{\RTBC}{\textsf{RTC}\xspace}
\newcommand{\KTC}{\textsf{SBTC}\xspace}
\newcommand{\tree}{{\cal R}}
\begin{document}

\title{\textbf{Potent Tree Codes and their applications}: \\
Coding for Interactive Communication, revisited}
\author{Ran Gelles  \;\;\; Amit Sahai
 \\ \\ Department of Computer Science, UCLA, Los angeles
\\ \texttt{\{gelles, sahai\}@cs.ucla.edu}}
\date{}
\maketitle

\newcommand{\sectionline}{%
  \nointerlineskip \vspace{\baselineskip}%
  \hspace{\fill}\rule{0.5\linewidth}{.7pt}\hspace{\fill}%
  \par\nointerlineskip \vspace{\baselineskip}
}

\begin{abstract}

In this work, we study the fundamental problem of reliable
\emph{interactive} communication over a noisy channel.  In a
breakthrough sequence of papers published in 1992 and
1993~\cite{schulman92,schulman93}, Schulman gave
\emph{non-constructive} proofs of the existence of general methods
to emulate any two-party interactive protocol such that: (1) the
emulation protocol only takes a constant-factor longer than the
original protocol, and (2) if the emulation protocol is executed
over a noisy channel (BSC), then the probability that the emulation
protocol fails to perfectly emulate the original protocol 
is exponentially small in the total length of the
protocol. Unfortunately, Schulman's emulation procedures either
only work in a model with a large amount of shared
randomness~\cite{schulman92}, or are non-constructive in that they
rely on the existence of \emph{good tree codes}~\cite{schulman93}.
The only known proofs of the existence of good tree codes are
non-constructive, and finding an explicit construction remains an
important open problem. Indeed, randomly generated tree codes are
\emph{not} good tree codes with overwhelming probability.

In this work, we revisit the problem of reliable interactive
communication, and obtain the following results:

\begin{itemize}
\item
We introduce a new notion of goodness for a tree code, and define the notion of a
\emph{\potent tree code}.  We believe that this notion is of
independent interest.
\item
We prove the correctness of an explicit emulation procedure based on
any \potent tree code.  (This replaces the need for good tree codes
in the work of Schulman~\cite{schulman93}.)
\item
We show that a randomly generated tree code (with suitable constant
alphabet size) is a \potent tree code with overwhelming probability.  Furthermore we are able to partially derandomize this result
using only $O(n)$ random bits, where $n$ is the depth of the tree.
\end{itemize}

These (derandomized) results allow us to obtain the first fully
\emph{explicit} emulation procedure for reliable interactive
communication over noisy channels with a constant communication
overhead,  with failure probability that is exponentially small in
the length of the original communication protocol.

Our results also extend to the case of interactive multi-party communication among a constant number of parties.
\end{abstract}

\thispagestyle{empty}   
\newpage
\setcounter{page}{1}


\section{Introduction}

In this work, we study the fundamental problem of reliable
\emph{interactive} communication over a noisy channel.  The famous
coding theorem of Shannon~\cite{shannon48} from 1948 shows how to
transmit any message over a noisy channel with optimal rate such
that the probability of error is exponentially small in the length
of the message.  However, if we consider an interactive protocol
where individual messages may be very short (say, just a single
bit), even if the entire protocol itself is very long, Shannon's
theorem does not suffice.

In a breakthrough sequence of papers published in 1992 and
1993~\cite{schulman92,schulman93}, Schulman attacked this problem
and gave a \emph{non-constructive} proof of the existence of a
general method to emulate any two-party interactive protocol such
that: (1) the emulation protocol only takes a constant-factor longer
than the original protocol, and (2) if the emulation protocol is
executed over a noisy channel (specifically a Binary Symmetric
Channel\footnote{The Binary Symmetric Channel with crossover
probability $p$ is one that faithfully transmits a bit with
probability $1-p$, and toggles the bit with probability $p$.  Note
that Schulman's results as quoted here extend to the case of any
discrete memoryless channel with constant capacity, as do all of our
results.} with some constant crossover probability less than
$\frac12$), then the probability that the emulation protocol fails
to perfectly emulate the original protocol 
is exponentially small in the total length of the protocol.
Unfortunately, Schulman's 1992 emulation procedure~\cite{schulman92}
either required a nonstandard model in which parties already share a
large amount of randomness before they communicate, where the amount
of shared randomness is quadratic in the length of the protocol to
be emulated, or required inefficient encoding and decoding. On the
other hand, Schulman's 1993 emulation procedure~\cite{schulman93}
 is non-constructive in that it
relies on the existence of \emph{good tree codes}\footnote{ We note,
with apology, that what we are calling a ``good tree code'' is what
Schulman calls a ``tree code.''  We make this change of terminology
because we will introduce an alternative relaxed notion of goodness
for a tree code that will lead to our notion of a ``\potent tree
code.'' }.  The only known proofs of the existence of good tree
codes are non-constructive, and finding an explicit construction
remains an important open problem.  Indeed randomly generated tree
codes are \emph{not} good tree codes with overwhelming probability.

In this work, we revisit the problem of reliable interactive
communication, and give the first fully \emph{explicit} emulation
procedure for reliable interactive communication over noisy channels
with a constant communication overhead,  with failure probability
that is exponentially small in the length of the original
communication protocol\footnote{Here we assume that we know the
length of the protocol in advance.}.  To obtain this result, we do
the following:

\begin{itemize}\addtolength{\itemsep}{-0.5em}
\item
We introduce a new notion of goodness for a tree code, and define
the notion of a \emph{\potent tree code}.  We believe that this
notion is of independent interest.
\item
We prove the correctness of an explicit emulation procedure based on
any \potent tree code.  (This replaces the need for good tree codes
in the work of Schulman~\cite{schulman93}.)  This procedure is
efficient given a black box for efficiently decoding the \potent
tree code.
\item
We show that a randomly generated tree code (with suitable constant
alphabet size) is a \potent tree code with overwhelming probability.
Furthermore, we show that a randomly generated tree code (when
combined with a good ordinary error-correcting code) can be
efficiently decoded with respect to a BSC with overwhelming
probability.

\item
Finally, we are able to partially derandomize the above result using
only $O(n)$ random bits, where $n$ is the depth of the tree, while
maintaining the efficiency of decoding.
\end{itemize}

With the above work done, our result is immediate: Since only $O(n)$
random bits are needed, they can be chosen once and for all, encoded
using an ordinary block error-correcting code, and sent to the other
party.  Then a deterministic procedure can be used to finish the
protocol.

Our result extends to the case of any constant number of parties.
For the case of a super-constant number of parties, however, our
explicit emulation procedure will have a $O(m)$ slowdown for $m$
parties (regardless of the length of the protocol).  A
(non-explicit) emulation procedure based on good tree codes was
given by Rajagopalan and Schulman~\cite{RS94} that achieved a
$O(\log m)$ slowdown in the general case.

Also, another result we obtain relates to the recent work of
Braverman and Rao~\cite{BR10}. They consider whether good tree codes
can be used to improve the result of Schulman for \emph{adversarial}
errors --- which only works if the fraction of errors is below
$1/240$. They obtain a very significant improvement: as long as the
fraction of errors is at most $1/4-\epsilon$, any protocol can be
simulated with only a constant slowdown, using a good tree code over
a constant-size alphabet (the simulation tolerates a $1/8-\epsilon$
error fraction when using a binary alphabet). We show that a similar
result can be obtained replacing the good tree code with a \potent
tree, showing our notion is useful even for the case of arbitrary
(adversarial) errors. However, in this case, like all previous work
on the adversarial error case, we do not know how to obtain
efficient decoding against adversarial errors.

\paragraph{Our approach.}
We begin our investigation by asking the question: What properties
does a tree code need in order to be useful for emulating protocols
over noisy channels?  (Without loss of generality, assume that
protocols only exchange one bit at a time from each party.) For the
purpose of this paper, a tree code is simply any deterministic
on-line encoding procedure in which each symbol from the input
alphabet $\Sigma$ is (immediately) encoded with a single symbol from
the output alphabet $S$, but the encoding of future input symbols
can depend on all the input symbols seen so far.  As such, any such
deterministic encoding can be seen as a complete $|\Sigma|$-ary tree
with each edge labeled with a single symbol of the output alphabet
$S$.

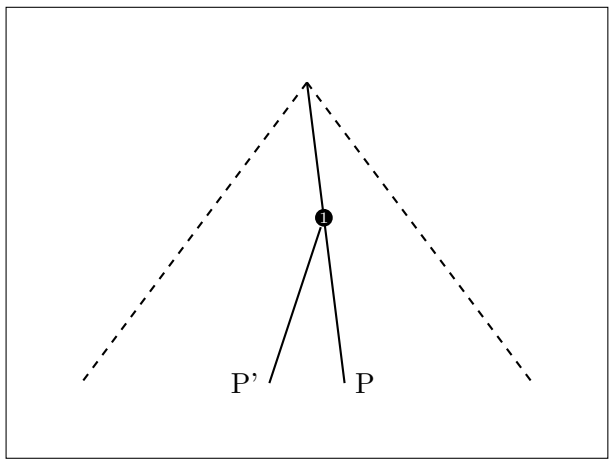
\begin{figure}[htb]
\begin{center}
\begin{tikzpicture}
\coordinate (top) at (4,5);

\draw[thick,dashed] (top) -- (1,1);
\draw[thick,dashed] (top) -- (7,1);
\draw[thick] (top) -- (4.5,1) node [right] {P}
  node (1) [pos=0.3] {}
  node (2) [pos=0.45,token] {1}
  node (3) [pos=0.6] {}
  node (4) [pos=0.75] {}
  node (5) [pos=0.9] {};

  \draw[-,thick] (2) -- (3.5,1) node [left] {P'} ;

\draw (0,0) rectangle (8,6);

\end{tikzpicture}
\end{center}
\caption{A very bad tree code}
\label{fig:verybad}
\end{figure}
The usefulness of some kind of tree code for protocol emulation
seems immediate, since each party must encode the bit it needs to
send, before knowing what other bits it needs to send later (which
it will not know until it receives messages from the other party).
Let us associate every path from the root to a node in the tree code
with the concatenation of output symbols along that path.  Then, at
first glance, it may appear that all we need from the tree code is
for ``long-enough'' divergent paths to have large relative Hamming
distance.  That is, suppose that the tree code illustrated in
Figure~\ref{fig:verybad} has the property that the relative Hamming
distance between the path from node 1 to P and the path from node 1
to P' is very small, even though each of those paths is long.  This
would certainly be problematic since the protocol execution
corresponding to each path could be confused for the other.  As long
as all long divergent paths had high Hamming distance, however, it
seems plausible that eventually the protocol emulation should be
able to avoid the wrong paths.  Also, it is important to note that
with suitable parameters, a randomly generated tree code would
guarantee that all long divergent paths have high relative Hamming distance with overwhelming probability.

However, this intuition does not seem to suffice, because while the protocol emulation is proceeding down an \emph{incorrect} path, one party is sending the \emph{wrong} messages -- based on wrong interpretations of the other party's communication.  After a party realizes that it has made a mistake, it must then be able to ``backtrack'' and correct the record going forward.  The problem is that even short divergent paths with small relative Hamming distance can cause problems.  Consider the tree code illustrated in Figure~\ref{fig:bad}.  In this figure suppose the path along the nodes 1, 2, and 3 is the ``correct'' path, but that the short divergent paths from 1 to A, 2 to B, and 3 to C all have small relative Hamming distance to the corresponding portions of the correct path.  Then in the protocol emulation, because of the bad Hamming distance properties, the emulation may initially incorrectly proceed to node A, and then realize it made a mistake.  But instead of correctin!
 g to a node on the correct path, it might correct to the node A' and proceed down the path to B.  Then it may correct to B', and so on.  Because the protocol emulation keeps making mistakes, it may never be able to successfully backtrack and communicate the messages that correspond to the actual protocol execution.
\begin{figure}[htb]
\begin{center}
\begin{tikzpicture}
\coordinate (top) at (4,5);

\draw[thick,dashed] (top) -- (1,1);
\draw[thick,dashed] (top) -- (7,1);
\draw[thick] (top) -- (4.5,1)
  node (1) [pos=0.3,token] {1}
  node (2) [pos=0.45,token] {2}
  node (3) [pos=0.6,token] {3}
  node (4) [pos=0.75,token] {}
  node (5) [pos=0.9,token] {};

  \draw[-,thick] (1) -- ++(-50:1.2cm) node (a) [token,fill=gray,label=right:A]{};
  \draw[-,thick] (2) -- ++(-120:1.2cm) node (b) [token,fill=gray,label=left:B] {};
   \draw[-,thick] (3) -- ++(-50:1.2cm) node[token,fill=gray,label=right:C]{};

   \draw[-,dashed] (a) -- +(-0.9,0) node [token,fill=red,label=left:A'] {};
   \draw[-,dashed] (b) -- +(1.1,0)  node [token,fill=red,label=right:B'] {};

   \draw[-, thick] (4) -- ++(-120:1.1cm);
   \draw[-, thick] (5) -- ++(-50:0.5cm);

\draw (0,0) rectangle (8,6);

\end{tikzpicture}

\end{center}
\caption{A bad tree code}
\label{fig:bad}
\end{figure}
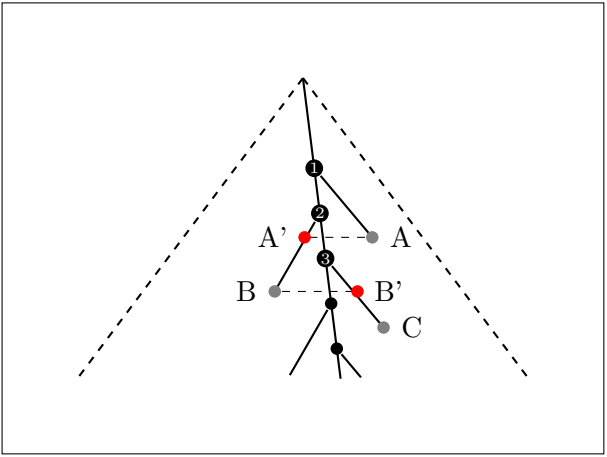

Schulman~\cite{schulman93} dealt with this problem by simply insisting that \emph{all} divergent paths have large relative Hamming distance in his definition of a good tree code.  This would prevent all such problems, and guarantee that errors in emulation could only be caused by actual channel errors.  The downside of this approach is that randomly generated tree codes would have short divergent paths with small (even zero) relative Hamming distance with overwhelming probability, and thus would not be good tree codes.

Our main observation is that this requirement goes too far.  If a tree code has the property that for every path from root to leaf, there are only a few small divergent branches with low relative Hamming distance (as illustrated in Figure~\ref{fig:potent}), then the emulation protocol will be able to recover from these few errors without any problems.  We call such tree codes \emph{\potent tree codes} since they are sufficiently powerful to enable efficient and reliable interactive communication over a noisy channel.

More precisely, let $\epsilon$ and $\alpha$ be two parameters from the interval $[0,1]$.
Define a path from node $u$ to a descendant node $v$ (of length
$\ell$) to be \emph{$\alpha$-bad} if there exists a path from $u$ to
another descendant node $w$ (also of length $\ell$) such that the
Hamming distance between the $u$-$v$ path and the $u$-$w$ path is less
than $\alpha \ell$.
Then an \emph{$(\epsilon,\alpha)$-potent tree code} of depth $n$ is such that for every path $Q$ from root to leaf, the number of nodes in the union of all $\alpha$-bad subpaths of $Q$ is at most
$\epsilon n$.

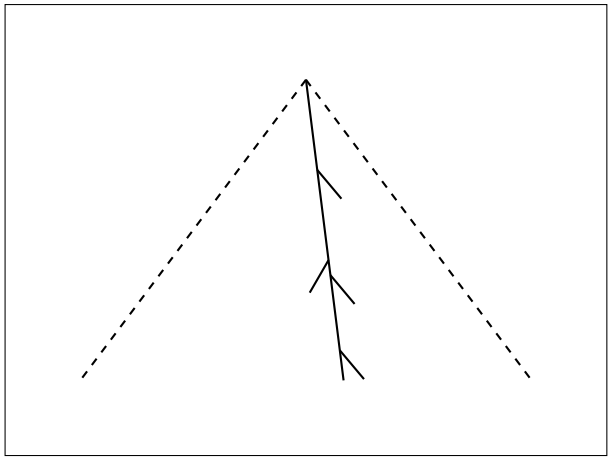
\begin{figure}[htb]
\begin{center}
\begin{tikzpicture}
\coordinate (top) at (4,5);

\draw[thick,dashed] (top) -- (1,1);
\draw[thick,dashed] (top) -- (7,1);
\draw[thick] (top) -- (4.5,1)
  node (1) [pos=0.3] {}
  node (2) [pos=0.45] {}
  node (3) [pos=0.6] {}
  node (4) [pos=0.65] {}
  node (5) [pos=0.9] {};

  \draw[-,thick] (1.center) -- ++(-50:0.5cm) ;
  \draw[-,thick] (3.center) -- ++(-120:0.5cm);
     \draw[thick] (4.center) -- ++(-50:0.5cm);

   \draw[thick] (5.center) -- ++(-50:0.5cm);

\draw (0,0) rectangle (8,6);

\end{tikzpicture}
\end{center}
\caption{A potent tree code}
\label{fig:potent}
\end{figure}

We show that randomly generated tree codes (with suitable constant
alphabet sizes) are potent tree codes with overwhelming probability.
As hinted above, because every root-leaf path has good properties, a
potent tree code will work for emulating \emph{any} (adversarially
chosen) interactive protocol.  With some additional randomization,
we show that within such emulations, decoding of a randomly
generated potent tree code can be done efficiently even for an
adversarially chosen protocol.

\paragraph{Naturalness of our definition.}
We argue that our notion of bad subpaths in a \potent tree code
captures the level of local ``confusion'' that is possible in a tree
code, in a manner that we see as analogous to how ordinary symbol
overlap (Hamming ``closeness'') captures such confusion in the
context of ordinary error-correcting codes, which are much less
structured objects.  In this analogy, \potent tree codes with
$\epsilon=0$ (which correspond to Schulman's good tree codes) are
analogous to maximum distance separable (MDS) codes in the context
of ordinary error-correcting codes.  Just as MDS codes are powerful
and useful objects, but not necessary for most applications of
error-correcting codes, we think of Schulman's good tree codes as
being powerful and useful objects, but not necessary for important
applications like reliable interactive communication where \potent
tree codes suffice.

\paragraph{Other Related Work.}
In 2006, Peczarski~\cite{peczarski06} provides a randomized way for
constructing good tree codes. The construction succeeds with
probability $1-\epsilon$ using alphabet with size proportional to
$\epsilon^{-1}$. Therefore, using Peczarski's method to construct a
good tree code with exponentially small failure probability
$\epsilon$, yields a polynomial slowdown; or a sub-linear but
super-logarithmic slowdown if  $\epsilon$ is negligible (in the
length of the simulated protocol). Other methods for constructing a
good tree code are reported by Schulman~\cite{schulman-email}, yet
they require polynomial-size alphabet (in the depth of the tree),
resulting in a logarithmic slowdown using Schulman's
emulation~\cite{schulman93}. Schulman~\cite{schulman-email} also
provides methods for constructing tree codes with weaker properties
such as satisfying the Hamming distance property for only a
logarithmic depth (which yields a failure probability that is
inverse-polynomial). Ostrovsky, Rabani, and Schulman~\cite{ORS05}
consider a relaxed problem of communication for control of
polynomially bounded systems, and gave explicit constructions of
codes suitable for that setting.

In work concurrent and independent to ours, Moitra~\cite{Moitra11}
introduced a relaxation of good tree codes that he calls local tree
codes, which allows him to obtain a fully explicit and deterministic
emulation protocol, but which obtains error probability that is a
fixed inverse polynomial in the length of the protocol.  In
contrast, our work obtains a fully explicit randomized emulation
protocol, but achieves error probability that is exponentially small
in the length of the protocol.

\section{Preliminaries}\label{sec:pre}

We begin with several definitions that we use later.
Unless otherwise mentioned, we use base 2 for all logarithms.
\begin{definition}
We say that a function $f(n)$ is negligible in $n$, and denote $f<neg(n)$
if for any polynomial $P$, and sufficiently large  $n$,
$f(n)< \frac{1}{P(n)}$.
\end{definition}
Our model of communication is based on a binary channel that flips each
bit with probability $p_{BSC}$, independently of other bits.
\begin{definition}
A binary symmetric channel (BSC) with error probability $p_{BSC}$
is a binary channel $\{0,1\}\to\{0,1\}$ such that for every inputed bit
outputs the same bit with probability $1-p_{BSC}$ or the complementary bit
with probability $p_{BSC}$, independently of previous transmissions (memoryless).
\end{definition}

One can use codes in order
to send messages which can be recovered except with arbitrary small probability.
This is done by adding redundancy to each message, according to the desired
error probability.
Shannon's coding theorem asserts the existence of an error-correcting code that reduces the
error probability (for a single message) to be exponentially small,
while increasing the amount of transmitted information
by only a constant factor.

\begin{lemma}[Shannon Coding Theorem~\cite{shannon48}]\label{lem:shannon}
For a BSC channel with capacity $C$, an alphabet $S$ and any $\xi>0$, there exists a code
$enc: S\to \{0,1\}^n$ and $dec: \{0,1\}^n \to S$ with $n=O(\tfrac1C\xi\log|S|)$ such that
\[
\Pr \left [dec ( BSC ( enc(m) ) ) \ne m \right] < 2^{-\Omega(\xi\log |S|)}.
\]
\end{lemma}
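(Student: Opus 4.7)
The plan is to prove this via Shannon's classical random-coding argument with maximum-likelihood (equivalently, minimum Hamming distance) decoding. First I would fix $n = \tfrac{K}{C} \xi \log |S|$ for a sufficiently large absolute constant $K$ (to be chosen later), and set $M = |S|$. The encoder is defined by drawing $M$ codewords $c_1,\ldots,c_M$ independently and uniformly from $\{0,1\}^n$, and letting $enc$ map the $i$-th element of $S$ to $c_i$. The decoder $dec$ receives a word $y \in \{0,1\}^n$ and outputs the index $i$ minimizing the Hamming distance $d_H(y, c_i)$, breaking ties arbitrarily. The claim is that for this random code, the expected probability of decoding error (over both the code and the channel noise) is at most $2^{-\Omega(\xi \log|S|)}$, so by averaging there must exist a deterministic code achieving this bound.

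Next I would carry out the error analysis for a fixed message index $i$. Let $Y = BSC(c_i)$; by the memoryless BSC, the Hamming weight of $Y \oplus c_i$ is a sum of $n$ i.i.d.\ Bernoulli$(p_{BSC})$ variables. Pick a small constant $\delta > 0$. Then a standard Chernoff bound gives
\[
\Pr\bigl[ d_H(Y, c_i) > n(p_{BSC} + \delta)\bigr] \leq 2^{-\Omega(\delta^2 n)}.
\]
Conditioned on this ``typical-noise'' event, a decoding error requires some $c_j$ with $j \ne i$ to lie within Hamming distance $n(p_{BSC}+\delta)$ of $Y$. Since each $c_j$ is uniform on $\{0,1\}^n$ and independent of both $c_i$ and the channel noise, for any fixed $Y$ the probability of this is at most $|B(n(p_{BSC}+\delta))|/2^n \leq 2^{-n(1-H(p_{BSC}+\delta))}$, where $H(\cdot)$ is binary entropy and $B(r)$ is the Hamming ball of radius $r$. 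A union bound over the $M-1$ other codewords gives an overall bound of
\[
(M-1) \cdot 2^{-n(1 - H(p_{BSC}+\delta))} \leq 2^{\log M - n(1 - H(p_{BSC}+\delta))}.
\]

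Now I would choose parameters. The capacity of the BSC is $C = 1 - H(p_{BSC})$, and by continuity of $H$ we can fix $\delta > 0$ small (depending only on $p_{BSC}$) so that $1 - H(p_{BSC}+\delta) \geq C/2$. Plugging $n = \tfrac{K}{C}\xi \log|S|$, the collision bound becomes $2^{\log|S| - K\xi\log|S|/2}$, which for $K \geq 4$ is at most $2^{-\xi \log|S|}$. The typical-noise bound is $2^{-\Omega(\delta^2 n)} = 2^{-\Omega(\xi \log|S|)}$ because $\delta$ and $C$ are constants. Summing the two contributions, the average error probability is $2^{-\Omega(\xi \log|S|)}$, and by Markov's inequality (or a direct averaging argument over the choice of code) there is a fixed deterministic code whose maximum-over-messages error probability is at most twice the average, still $2^{-\Omega(\xi \log|S|)}$.

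The only mildly delicate step is keeping the dependence on $\xi$ clean: one has to ensure the Chernoff exponent for atypical noise is indeed linear in $n$ (and thus linear in $\xi \log|S|$) while simultaneously forcing the rate $R = \log|S|/n = C/(K\xi)$ to sit at a constant fraction below $C$. Both are handled by the single choice of $K$ large enough, with $\delta$ depending only on $p_{BSC}$ (not on $\xi$). I do not expect any genuine obstacle — this is the textbook proof — but the accounting requires being careful that all hidden constants in the $O(\cdot)$ and $\Omega(\cdot)$ depend only on $p_{BSC}$ (equivalently $C$) and not on $\xi$ or $|S|$. Alternatively, for a cleaner self-contained exposition one could instead invoke jointly typical decoding and the asymptotic equipartition property, but the minimum-distance route above is both more elementary and more quantitative in this regime.
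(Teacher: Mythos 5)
The paper does not supply a proof of this lemma; it is stated as a citation of Shannon's coding theorem~\cite{shannon48} and used as a black box, so there is no in-paper argument to compare against. Your random-coding argument with minimum-distance decoding is the classical route and the overall structure is sound.

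There is, however, a quantitative slip you should tighten. You claim the union-bound term $2^{\log|S| - K\xi\log|S|/2}$ is at most $2^{-\xi\log|S|}$ ``for $K\ge 4$,'' but the inequality $\log|S|-K\xi\log|S|/2 \le -\xi\log|S|$ is equivalent to $K\ge 2+2/\xi$, which for $K=4$ requires $\xi\ge 1$; no single absolute constant $K$ covers all $\xi>0$. This is forced by your own observation that the rate $R=\log|S|/n = C/(K\xi)$ must sit a constant fraction below $C$, which already demands $\xi>1/K$. In other words, with a universal hidden constant in $n=O(\tfrac{1}{C}\xi\log|S|)$, the argument (and indeed the lemma's conclusion with these parameters) cannot hold for arbitrarily small $\xi$, since then the rate exceeds capacity and the code cannot even be injective once $n<\log|S|$. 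In the paper's uses $\xi$ is a fixed constant chosen at least $1$, so this is really an implicit restriction in the lemma statement rather than a defect in your approach, but you should either assume $\xi\ge 1$ explicitly or note that $K$ must be taken of order $1/\xi$ when $\xi$ is small, which then falls outside the stated $O(\cdot)$. The Chernoff step for typical noise and the averaging/expurgation step at the end are fine.
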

\noindent
Although throughout this paper we assume the channel is a BSC with symbol error of at most $p$ (using an error correction code), our result applies for any memoryless noisy channel with maximal symbol error probability $p$.

The main structure we use is a \emph{tree code}, introduced by Schulman~\cite{schulman93,schulman96}.
Each edge in a tree code is assigned with a label (from a given alphabet $S$),
such that strings obtained by concatenation of these labels
form a code, that is, have a large Hamming distance.

\begin{definition}
The Hamming distance of two strings $\sigma=\sigma_1\ldots\sigma_m$ and
$\sigma'=\sigma'_1\ldots\sigma'_m$ of the same length over an alphabet $S$,
is the number of positions $i$ such that $\sigma_i \ne \sigma'_i$.
The Hamming distance is denoted by $\Delta(\sigma,\sigma')$.
\end{definition}

As said earlier, we re-define the term \emph{tree code} to be any tree, such that
every arc $i$ in the tree has a label $\sigma_i$ over some fixed alphabet $S$.
Denote with $w(s)$ the label of the arc between
$s$ and its parent and $W(s)$ the concatenation of the labels along the
route from the root to the node $s$. Using our new terminology,
the tree codes introduced by Schulman~\cite{schulman96}
are denoted as \emph{good tree codes}.
\begin{definition}[Tree Codes~\cite{schulman96}] \label{def:TreeCode}
A  \textbf{good $d$-ary tree code} over an alphabet $S$, of distance parameter $\alpha$ and depth $n$,
is a $d$-ary tree code of depth $n$ such that
for every two nodes $s$ and $r$ at the same depth,
$$\Delta(W(s),W(r)) \ge \alpha l\text{,}$$
where $l$ is the distance from $s$ and $r$ to their least common ancestor.
\end{definition}
Tree codes can be used to communicate a node $u$ between the users, by sending the labels
$W(u)$. Decoding a transmission means recovering the node at the end of
the route defined by the received string of labels.
In order to reduce the error probability of the label
transmission, each label is separately coded using a standard error-correcting code.
It is shown in~\cite{schulman96} that for every distance parameter $\alpha\in(0,1)$, there exists a good $d$-ary tree code of infinite depth, labeled using  $|S|\le 2\lfloor (2d)^{\frac{1}{1-\alpha}}\rfloor-1$ symbols.
However, although it is known to exist, its explicit efficient construction
remains an open question.

\section{Potent Tree Codes}\label{sec:potent} 

\subsection{Potent Tree Codes and Their Properties}\label{sec:potent}

We now formally define the set of \emph{\potent trees} and its complement,
the set of \emph{bad trees}. The latter contains trees that are not useful for our purpose:
at least one of their paths is composed of
``too many'' sub-paths that do not satisfy the distance condition, i.e.,
the total length of these sub-paths
is at least $\varepsilon$ fraction of the tree depth $N$, for some fixed constant $\varepsilon>0$. Formally,
\begin{definition}\label{def:badNode}
Let $u,v$ be some nodes at the same depth $h$ of a tree-code, and let
$w$ be their least common ancestor, located at depth $h-\ell$.
The nodes $u$ and $v$ are
$\alpha$-\textbf{bad nodes} (of length $\ell$)
if $\Delta(W(u),W(v)) < \alpha\ell$.
In this case, the path (of length $\ell$) between $w$ and $u$ is called an $\alpha$-\textbf{bad path} (similarly, the path between $w$ and $v$ would also be a bad path).
Define the imposed $\alpha$-\textbf{bad interval} (of length $\ell$)
as the interval $[h-\ell, h]$.
\end{definition}

\begin{definition}\label{def:badTree}
An \textbf{$(\varepsilon,\alpha)$-bad tree}
is a tree of depth $N$
that has a path containing $\alpha$-bad subpaths, such that their
union is of total length at least $\varepsilon N$.
\end{definition}
\begin{definition}
An  \textbf{$(\eps,\alpha)$-\potent tree code}  is a tree of depth $N$,
such that for every path $Q$ from root to leaf,
the union of all bad subpaths of $Q$ is of length less than $\eps N$.
In other words,  the tree is \emph{not} an $(\eps,\alpha)$-bad tree.
\end{definition}
\noindent We stress that a bad tree is not necessarily bad in \emph{all} of its paths,
since the existence of a single bad path is sufficient.

Conveniently, it is rather simple to construct a \potent tree,
which makes it a feasible
tool for plenty of applications.
In the following we give two methods for constructing \potent trees.
The straightforward method is to randomly pick each label of the tree.
The obtained Random Tree Code (\RTBC) is a potent tree except with probability
exponentially small in the depth of the tree.
The drawback of the first construction, is that its description is exponential.
However, we observe that our proof does not require the entire tree to be random,
but rather makes a use of the fact that any two paths along the tree are \emph{independent}.
Using the method of Alon, Goldreich, H{\aa}stad and Peralta~\cite{AGHP92}
we are able to construct a  tree in which any two paths are \emph{almost independent}.
Moreover, such a tree has an efficient description.

\subsection{Random Tree Codes as Potent Trees}

\begin{definition}
Let $S$ be a finite alphabet.
A {\bf random tree code} (\RTBC) is a $d$-ary tree,
where each arc $i$ has a label $\sigma_i\in S$,
randomly and independently chosen.
\end{definition}

\noindent
\textbf{Intuition.}
It is important to note that a \RTBC might {\em not} be a good tree code
according to Definition~\ref{def:TreeCode}.
However, with high probability, the \RTBC can be used to replace a tree code, without significantly
damaging the probability of success.
Informally speaking, we can think of a \RTBC as using a good tree code, but increasing
the channel error rate. Indeed, the only difference is that with a probability of
$1/|S|$ two edges in the \RTBC are assigned with the same label
(and as a consequence some paths might not satisfy the distance condition).
An equivalent result is obtained by taking a good tree code and ``forcing'' the channel
to make an error during the transmissions related to those edges.
This leads to an expected increase of $1/|S|$ in the channel's error rate.

\begin{theorem}\label{thm:RTCisPotent}
Suppose $\eps,\alpha \in (0,1)$.
Except with probability $2^{-\Omega(N)}$,
a \RTBC with alphabet $|S|>(2d)^{(1+2/\eps)/(1-\alpha)}$
is $(\varepsilon,\alpha)$-\potent.
\end{theorem}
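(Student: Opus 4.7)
The plan is a union bound over root-to-leaf paths, coupled with an interval-covering reduction to disjoint bad subpaths. First, the single-pair estimate: for two independent uniform random strings in $S^\ell$, the probability of Hamming distance less than $\alpha\ell$ is at most $\sum_{k<\alpha\ell}\binom{\ell}{k}(1/|S|)^{\ell-k}\leq 2^\ell|S|^{-(1-\alpha)\ell}=(2/|S|^{1-\alpha})^\ell=:P_\ell$. Hence for any fixed depth-interval of length $\ell$ on a fixed root-to-leaf path $Q$ and any fixed companion path (diverging from $Q$ at the top of the interval and descending to its bottom), that pair is $\alpha$-bad with probability at most $P_\ell$.

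Second, I would reduce the ``union-of-bad-subpaths'' condition to a disjoint one. Fix $Q$ and let $U(Q)\subseteq\{0,\ldots,N\}$ be the set of depths lying in some $\alpha$-bad subpath of $Q$. I claim that if $|U(Q)|\geq \eps N$, then $Q$ contains \emph{pairwise disjoint} $\alpha$-bad subpaths of total length at least $\eps N/2$. Indeed, decompose $U(Q)$ into its maximal connected subintervals $U_1,\ldots,U_t$; within each $U_i=[a,b]$, greedily chain $\alpha$-bad subpaths $B_{(1)},B_{(2)},\ldots$, where $B_{(1)}$ contains $a$ with maximal right endpoint and each $B_{(k+1)}$ contains the node just past the right end of $B_{(k)}$ with maximal right endpoint. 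Maximality of the greedy choice forces $B_{(k)}$ and $B_{(k+2)}$ to be disjoint, so the odd-indexed and the even-indexed subcollections are each pairwise disjoint; since together they cover $U_i$, one of the two has total length at least $|U_i|/2$. Summing across $i$ gives the claim.

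Third, the union bound. Fix $Q$. A disjoint family of intervals in $\{0,\ldots,N\}$ is determined by its underlying set of positions, so there are at most $2^{N+1}$ such families. For each interval $I_j$ of length $\ell_j$ in the family, there are at most $d^{\ell_j}$ companion paths, and because the $I_j$ are pairwise disjoint, the corresponding edges of $Q$ and of the companion subtrees are pairwise disjoint as well (the companions branch off $Q$ at distinct depths and stay within depth-bands corresponding to their own intervals). Hence the events ``$I_j$ is $\alpha$-bad via the chosen companion'' are mutually \emph{independent} in the random tree code. For a family with total length $L$, the joint probability (after summing over companions) is at most $\prod_j d^{\ell_j}P_{\ell_j}=(2d/|S|^{1-\alpha})^L$. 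Provided $2d/|S|^{1-\alpha}<1$ (which the hypothesis guarantees), summing over families of length $L\geq \eps N/2$ yields $\Pr[|U(Q)|\geq \eps N]\leq 2^{N+1}(2d/|S|^{1-\alpha})^{\eps N/2}$. A final union bound over the $d^N$ root-to-leaf paths gives $\Pr[T\text{ is bad}]\leq 2\bigl[(2d)\,(2d/|S|^{1-\alpha})^{\eps/2}\bigr]^N$. One checks that $|S|>(2d)^{(1+2/\eps)/(1-\alpha)}$ is exactly the condition that makes the bracketed base strictly less than $1$, producing the claimed $2^{-\Omega(N)}$ bound.

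The main obstacle I anticipate is the disjoint reduction in step two. Without it, the natural union bound fails: distinct bad subpaths of $Q$ share labels (they all sit on $Q$), so their bad events are correlated, and the combinatorial count of (overlapping) bad subpaths that jointly cover $\eps N$ positions overcounts by an $\Theta(\ell)$-per-subpath multiplicity that ruins the exponent. Enforcing disjointness simultaneously restores independence of the selected bad events and tightens the combinatorial accounting to exactly match the prescribed alphabet-size threshold.
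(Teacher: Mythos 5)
Your proposal is correct and follows essentially the same route as the paper's proof (Proposition~\ref{lem:probBadRTBC} in the Appendix): the same per-pair Hamming estimate, the same reduction from ``union of bad subpaths has length $\ge \eps N$'' to ``disjoint bad subpaths of total length $\ge \eps N/2$'' (the paper cites Schulman's Lemma~\ref{lem:intervals}, whereas you reprove it via the greedy odd/even chaining), the same independence-across-disjoint-intervals step, and the same union bound over $\le 2^{N}$ interval patterns and $d^N$ leaves, arriving at the identical alphabet threshold $|S|>(2d)^{(1+2/\eps)/(1-\alpha)}$.
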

\noindent The proof is given in Appendix~\ref{app:RTBC}.
\bigskip

\subsection{Small-Biased Random Trees as Potent Trees}

In order to agree on a \RTBC with alphabet $S$, the users need to communicate
(or pre-share) $O(d^N\log |S|)$ random bits. Surprisingly, we can reduce the description size to
$O(N\log|S|)$ and still have a \potent code with overwhelming probability.
This is allowed due to Alon et al.'s construction of  a sample space with an efficient description
that is $\epsilon$-biased~\cite{AGHP92}.

\begin{definition}[$\epsilon$-biased sample space~\cite{NN90, AGHP92}]
A sample space $X$ on $n$ bits is said to be $\epsilon$-biased with respect to linear tests
if for every sample $x_1\dotsm x_n$ and every string $\alpha_1\dotsm\alpha_n \in \{0,1\}^n\smallsetminus\{0\}^n$, the random variable $y=\sum_{i=1}^n \alpha_ix_i \mod 2$ satisfies
$|\Pr[y=0]-\Pr[y=1]| \le \epsilon$.
\end{definition}
We use~\cite[Construction 2]{AGHP92} to achieve a sample space $\mathbf B_n$
which is $\epsilon$-biased with respect to linear tests.
Let $p$ be an odd prime such that $p>(n/\epsilon)^2$,
and let $\chi_p(x)$ be the quadratic character of {$x$ (mod~$p$)}.
Let $\mathbf B_n$ be the sample space described by the following construction.
A point in the sample space is described by a number $x\in [0,1,\ldots, p-1]$,
which corresponds to the $n$-bit string $r(x)=r_0(x)r_1(x)\dotsm r_{n-1}(x)$
where $r_i(x) = \frac{1-\chi_p(x+i)}{2}$.
\begin{proposition}[\cite{AGHP92}, Proposition~2]
The sample space $\mathbf B_n$ is $\frac{n-1}{\sqrt p}+\frac{n}{p}$-biased with respect to linear tests.
\end{proposition}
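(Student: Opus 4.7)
The plan is to express the bias as a single character sum and then invoke Weil's bound. Fix a nonzero test $\alpha=\alpha_1\dotsm\alpha_n \in \{0,1\}^n$, set $T=\{i : \alpha_i=1\}$ and $k=|T|$ (so $1\le k \le n$), and write $y(x)=\sum_i \alpha_i r_i(x)\bmod 2$. Then the bias can be written as
\[
\Pr[y=0]-\Pr[y=1] = \frac{1}{p}\sum_{x=0}^{p-1}(-1)^{y(x)} = \frac{1}{p}\sum_{x=0}^{p-1}\prod_{i\in T}(-1)^{r_i(x)}.
\]
The critical observation is that for every $x$ with $x+i \not\equiv 0 \pmod{p}$ for all $i\in T$, the definition $r_i(x)=(1-\chi_p(x+i))/2$ gives $(-1)^{r_i(x)} = \chi_p(x+i)$, so the product equals $\chi_p(f(x))$, where $f(x):=\prod_{i\in T}(x+i) \in \mathbb{F}_p[x]$.

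Next I would split the outer sum into the generic part (where $f(x)\ne 0$) and the at most $k$ exceptional terms (where some $x+i$ vanishes in $\mathbb{F}_p$). On the generic part, since $\chi_p(0)=0$, extending the sum back to all of $\mathbb{F}_p$ costs nothing, so this contribution equals $\sum_{x\in\mathbb{F}_p}\chi_p(f(x))$. Because $p>n\ge k$, the residues $\{-i\bmod p : i\in T\}$ are distinct in $\mathbb{F}_p$, so $f$ is a squarefree polynomial of degree $k\ge 1$; in particular it is not a constant times a perfect square, which is exactly the hypothesis needed to apply Weil's character sum bound. Weil then yields $\bigl|\sum_{x\in\mathbb{F}_p}\chi_p(f(x))\bigr| \le (k-1)\sqrt{p} \le (n-1)\sqrt{p}$, while the exceptional terms contribute at most $k\le n$ in absolute value, regardless of how $(-1)^{r_i(x)}$ is defined when $x+i\equiv 0$.

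Putting the two contributions together and dividing by $p$,
\[
|\Pr[y=0]-\Pr[y=1]| \;\le\; \frac{(n-1)\sqrt{p} + n}{p} \;=\; \frac{n-1}{\sqrt{p}} + \frac{n}{p},
\]
which is exactly the claimed bound. The only non-elementary ingredient is Weil's bound on multiplicative character sums over $\mathbb{F}_p$, and I expect that to be the main ``obstacle'' --- the rest is a Fourier rewriting of bias as a sign sum, identification of the product as $\chi_p\circ f$, a squarefreeness check that reduces to $p>n$, and bookkeeping around the (at most $n$) zeros of $f$.
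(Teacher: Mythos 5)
Your argument is correct and is essentially the proof given in Alon--Goldreich--Håstad--Peralta (the paper here only cites \cite[Prop.~2]{AGHP92} and does not reprove it): write the bias as $\tfrac1p\sum_x\prod_{i\in T}(-1)^{r_i(x)}$, identify the product with $\chi_p\bigl(\prod_{i\in T}(x+i)\bigr)$ away from the $\le k\le n$ exceptional points, note the polynomial is squarefree because $p>n$ makes the shifts distinct mod $p$, and apply Weil's bound $\bigl|\sum_{x\in\mathbb{F}_p}\chi_p(f(x))\bigr|\le(\deg f-1)\sqrt p$. The only detail worth flagging is that $r_i(x)$ is not a bit when $\chi_p(x+i)=0$, so one fixes it by convention; you already handle this correctly by absorbing those $\le n$ terms into the $n/p$ part of the bound.
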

We use the above to construct
a $d$-ary tree code of depth $N$ with labels over an alphabet $S$.
Without loss of generality we assume that $|S|$ is a power of 2, and
describe the tree as the $d^N\log|S|$-bit string constructed by concatenating of all
the tree's labels in some fixed ordering.
Since each $n$-bit sample describes a tree-code,
we are sometimes negligent with the distinction between these two objects.

\begin{definition}\label{def:SBTC}
A $d$-ary \emph{Small-Biased Tree Code} (\KTC) of depth $N$, is a tree described by some sample
from the sample space $\mathbf B_n$ with $n=d^N\log |S|$, $\epsilon = 1/2^{cN\log |S|}$ for some constant $c$ that we can choose later.
\end{definition}

We note that small-bias trees have several properties which are very useful for our needs.
Specifically, every set of labels are almost independent.
\begin{definition}[almost $k$-wise independence~\cite{AGHP92}]\label{def:independent}
A sample space on $n$ bits is \textbf{$(\epsilon,k)$-independent} if for any
$k$ positions $i_1 < i_2 < \dotsm < i_k$ and $k$-bit string $\xi$,
\[
\lvert \Pr[ x_{i_1}x_{i_2}\dotsm x_{i_k} = \xi] - 2^{-k}\rvert \le \epsilon
\]
\end{definition}
Due to a lemma by Vazirani~\cite{Vazirani86} (see also corollary~1 in~\cite{AGHP92}),
if a sample space is $\epsilon$-biased with respect to linear tests, then for every $k$, the sample space is $((1-2^{-k})\epsilon, k)$-independent. Thus, $\mathbf B_n$ is $(\epsilon,k)$-independent, for any $k$.
\begin{corollary}\label{cor:k-independence}
Let ${\cal T}$ be a $d$-ary \KTC of depth $N$, then any $1\le k\le d^N$ labels of ${\cal T}$ are almost independent,
that is, any $k\log|S|$ bits of ${\cal T}$'s description
are $(2^{-cN\log|S|},k)$-independent.
\end{corollary}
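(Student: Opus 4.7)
The plan is to derive the corollary as an immediate consequence of the Vazirani-lemma fact already invoked in the text, together with the specific parameters fixed in Definition~\ref{def:SBTC}.

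First I would observe that a \KTC is, by construction, a sample point from $\mathbf B_n$ with $n=d^N\log|S|$, and that the preceding Proposition from~\cite{AGHP92} tells us $\mathbf B_n$ is $\epsilon$-biased with respect to linear tests, where $\epsilon$ can be driven down to $2^{-cN\log|S|}$ for any desired constant $c$ (by suitable choice of the prime $p$ used in the construction; the bound $\tfrac{n-1}{\sqrt p}+\tfrac{n}{p}$ from the Proposition is polynomially small in $p$ while $n$ is fixed, so pushing $p$ up gives the stated $\epsilon$).

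Next I would invoke the Vazirani corollary quoted right before the statement: any sample space that is $\epsilon$-biased with respect to linear tests is $((1-2^{-m})\epsilon,m)$-independent for every $m$. Instantiating this with $m=k\log|S|$, we obtain that any $k\log|S|$ specified bit positions of the description are $\bigl((1-2^{-k\log|S|})\epsilon,\,k\log|S|\bigr)$-independent, which in particular is upper-bounded by $(\epsilon,k\log|S|)=(2^{-cN\log|S|},k\log|S|)$-independent in the sense of Definition~\ref{def:independent}. The only small bookkeeping step is to note that since $|S|$ is a power of $2$ and the description is formed by concatenating the labels in a fixed ordering, any choice of $k$ labels of $\cal T$ corresponds to a fixed set of $k\log|S|$ bit positions in the description, so the independence statement about bit positions translates verbatim into an almost-independence statement about labels. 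The range constraint $1\le k\le d^N$ is just the requirement $k\log|S|\le n$, which matches the length of $\mathbf B_n$.

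I do not expect any genuine obstacle here: the work has all been done by the AGHP construction, the quoted Proposition, and Vazirani's lemma. The only tiny item worth flagging is that the statement writes ``$(2^{-cN\log|S|},k)$-independent'' rather than ``$(2^{-cN\log|S|},k\log|S|)$-independent'' — the slack $(1-2^{-k\log|S|})$ and the conversion between counting labels and counting bits can both be absorbed into the constant $c$, so no tightening argument is needed.
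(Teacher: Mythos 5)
Your proposal is correct and matches the paper's (implicit) approach: the corollary is a direct instantiation of Vazirani's lemma applied to the $\epsilon$-biased space $\mathbf B_n$ with the $\epsilon = 2^{-cN\log|S|}$ already fixed in Definition~\ref{def:SBTC}, exactly as you lay out. You also rightly flag the paper's small notational slip (writing ``$k$'' rather than ``$k\log|S|$'' as the second independence parameter), which is cosmetic and does not affect the substance.
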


Finally, let us argue that such a construction is efficient.
Let $p=O((n/\epsilon)^2)$ and assume a constant alphabet $|S|=O(1)$.
Each sample $x$ takes $\log p=O(N)$ bits, and each $r_i(x)$ can be computed by $poly(N)$ operations.

\bigskip
We now show that the properties shown in Appendix~\ref{app:RTBC} for a \RTBC,
hold for a \KTC as well.
\begin{proposition}\label{prop:KTCisPotent}
Suppose $\eps,\alpha \in (0,1)$.
Except with probability $2^{-\Omega(N)}$,
a \KTC of depth $N$ over alphabet $|S|>(2d)^{(2+2/\eps)/(1-\alpha)}$
is $(\varepsilon,\alpha)$-\potent.
\end{proposition}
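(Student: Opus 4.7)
The plan is to mimic the proof of Theorem~\ref{thm:RTCisPotent} for random tree codes (the proof referenced in Appendix~\ref{app:RTBC}), but wherever that proof used full independence of the labels to multiply out probabilities of ``this pair of paths is $\alpha$-bad'' events, I substitute the almost-independence guarantee from Corollary~\ref{cor:k-independence}. The increase in the exponent from $(1+2/\eps)/(1-\alpha)$ (the \RTBC\ case) to $(2+2/\eps)/(1-\alpha)$ (the \KTC\ case) provides an extra factor of $(2d)^{1/(1-\alpha)}$ in $|S|$, which I expect to be exactly what absorbs the additive error introduced by small-bias rather than true randomness.

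Concretely, I would first fix a root--leaf path $Q$ and, as in the \RTBC\ proof, enumerate \emph{configurations}: a choice of a collection of sibling paths branching off $Q$ whose imposed bad intervals cover total length at least $\eps N$ of $Q$. For each configuration I express the event ``all these sibling paths are $\alpha$-bad against $Q$'' as a union of atomic events on the $k$ label-bits involved, where $k = O(N\log|S|)$ counts the labels along $Q$ together with those along the relevant sibling subpaths. Under truly uniform labels each atomic event has probability $2^{-k}$, so the \RTBC\ analysis bounds the configuration probability by summing (number of agreeing atoms) times $2^{-k}$, yielding a bound like $|S|^{-(1-\alpha)\ell}$ per bad sibling path of length $\ell$. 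For a \KTC, Corollary~\ref{cor:k-independence} says each atom's probability lies within $\epsilon = 2^{-cN\log|S|}$ of $2^{-k}$; so the probability I actually obtain is bounded by the \RTBC\ bound plus $(\text{number of atoms in the event}) \cdot \epsilon$. Finally I union-bound over all configurations on $Q$ and over all $d^N$ root--leaf paths.

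The main obstacle is verifying that the additive small-bias error never swamps the genuine probability bound once we union-bound. The danger is that the number of atomic events inside one configuration is itself as large as $2^k = |S|^{O(N)}$, so I must pick $c$ large enough that $|S|^{O(N)}\cdot 2^{-cN\log|S|}$ is still $2^{-\Omega(N)}$ after the outer union bound over configurations and paths. The extra alphabet factor $(2d)^{1/(1-\alpha)}$ exactly rescales the uniform bound $|S|^{-(1-\alpha)\ell}$ by an additional $(2d)^{-\ell}$, giving enough slack (after combining with the $d^N$ path count and the combinatorial count of configurations) so that a suitably large but universal constant $c$ makes the small-bias error negligible compared to the main term.

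After this accounting, the rest of the calculation is identical in shape to the \RTBC\ proof: summing a geometric-like series over the possible partitions of the ``bad'' portion of $Q$ into sibling subpaths of various lengths, and then multiplying by $d^N$ for the choice of $Q$, produces a final bound of $2^{-\Omega(N)}$ on the probability that the \KTC\ fails to be $(\varepsilon,\alpha)$-\potent. Thus the proposition follows by plugging the almost-independence estimate into the already-established combinatorial machinery.
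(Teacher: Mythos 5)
Your proposal follows the same strategy as the paper's own proof: mimic the \RTBC\ analysis, replace true independence by the almost-$k$-wise independence of Corollary~\ref{cor:k-independence}, union-bound over root--leaf paths and disjoint-interval patterns, and absorb the small-bias additive error by choosing $c$ large enough (the paper packages this as a per-pair Lemma~\ref{lem:ktcHD} and a factor-of-$2$ slack, which the extra $(2d)^{1/(1-\alpha)}$ in the alphabet bound then eats). Your additive-error bookkeeping is the right way to make the paper's ``almost independent'' multiplication rigorous, and together with the (implicit) reduction to disjoint intervals via Lemma~\ref{lem:intervals} this is essentially the same argument.
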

\begin{proof}
We show that the probability of a \KTC to be
$(\varepsilon,\alpha)$-bad is exponentially small.
We begin by fixing a leaf $z$, and later use a union bound
to bound the probability over the entire tree.
Assume that the tree is bad, that is,
there exist bad intervals of total length $\varepsilon N$.
Due to Lemma~\ref{lem:intervals}
there must exist \emph{disjoint}
bad intervals of total length at least $\varepsilon N/2$.

There are at most $\sum_{j=\varepsilon N/2}^N{N \choose j} \le 2^{N}$
ways to distribute these disjoint
intervals along the path from  root to $z$.
In a similar way to Lemma~\ref{lem:prob4smallHamming},
we can bound the probability of  having
a node $u$ at the same depth as $z$ which imposes a bad interval of length $\ell$.
Since the tree is $(1/2^{cN\log|S|}, 2l\log|S|)$-independent,
the suffixes (of length $l$) of the label sequences $W(u)$ and $W(z)$ are almost independent.
For $l>0$ and a node $u$  denote by $W_l(u)$ the last $l$ symbols of $W(u)$.
\begin{lemma}\label{lem:ktcHD}
For any two nodes at the same level $z,u$ with a common ancestor $l$ levels
away,
$\Pr [\Delta (W(u),W(z)) = j ] \le {l \choose l-j} \left(\frac{1}{|S|}\right)^{l-j} + 2^{-\Omega(N)}$
\end{lemma}
\begin{proof}
Note that $W(u)$ and $W(v)$ are identical except for their suffix of length $l$.
\begin{align*}
\Pr [\Delta(W(u),&W(v))=j]  = \\
&\sum_{\xi_u,\xi_v} \Pr[W_l(u)=\xi_u, W_l(v)=\xi_v] \Pr [\Delta(W_l(u),W_l(v))=j \mid W_l(u)=\xi_u, W_l(v)=\xi_v]  \\
&\le (2^{-2l\log|S|}+2^{-cN\log|S|})\sum_{\xi_u,\xi_v} \Pr [\Delta(W_l(u),W_l(v))=j \mid W_l(u)=\xi_u, W_l(v)=\xi_v] \\
&\le (2^{-2l\log|S|}+2^{-cN\log|S|}) 2^{2l\log|S|} {l \choose l-j} \left(\frac{1}{|S|}\right)^{l-j}\left(\frac{|S|-1}{|S|}\right)^{j}
\end{align*}
Choosing $c>3$ completes the proof.
For the ease of notation, in the following we use $2{l \choose l-j} \left(\frac{1}{|S|}\right)^{l-j}$ as an upper bound of the above probability.
\end{proof}
The above lemma leads to the following bound on the probability that two nodes are $\alpha$-bad.
\begin{corollary}\label{cor:defect}
\begin{align*}
\Pr [\Delta (W(u),W(z)) \le \alpha l ] &=
\sum_{j=0}^{\alpha l} \Pr [\Delta (W(u),W(z)) = j] \\
&\le \sum_{j=0}^{\alpha l} 2{l \choose l-j} \left(\frac{1}{|S|}\right)^{l-j}
\le 2\frac{2^l}{|S|^{(1-\alpha)l}}\text{ .}
\end{align*}
\end{corollary}
\noindent Using a union bound, the probability that there exist a node
$u\ne z$ with common ancestor $l$ level away, such that
$z$ and $u$ do not satisfy the distance requirement is bounded by
$\sum_u 2\frac{2^l}{|S|^{(1-\alpha)l}}=2(2d/|S|^{1-\alpha})^l$

Consider again the path from root to $z$, and the disjoint
bad intervals of total length at least $\varepsilon N/2$ along it.
There are at most $2N$ labels involved (along both the path to $z$ and the colliding paths).
Since the intervals are disjoint, their probabilities are almost independent as well, and
the probability that a specific pattern of interval happens is bounded by the
multiplication of the probabilities of each interval.

According to the above, 
the probability for a \KTC to be $(\varepsilon,\alpha)$-bad is bounded by
\begin{align*}
\Pr[\text{ \KTC is $(\varepsilon,\alpha)$-bad }] &\le \sum_{z}
\sum_{\stackrel{\ell_1,\ell_2, \ldots \text{ disjoint,}}{\text{ of length} \ge \eps N/2}}
\prod_i 2(2d/|S|^{1-\alpha})^{\ell_i}  \\ &
\le d^N \cdot 2^N (4d/|S|^{1-\alpha})^{\sum_i \ell_i} \le (2d)^N (4d/|S|^{1-\alpha})^{\eps N/2}
\end{align*}
which is exponentially small in $N$ for $|S|> (4d\cdot(2d)^{2/\eps})^{1/(1-\alpha)}$.
\end{proof}

\section{Applications - Simulation with Adversarial Errors}\label{sec:appBR}
In a recent paper~\cite{BR10} Braverman and Rao show how to simulate
any 2-party protocol over a noisy channel, that is able to withstand an error
rate of up to $1/4-\epsilon_2$, for any constant $\epsilon_2>0$.
Their simulation uses good tree codes to communicate
the process of the simulated protocol over the noisy channel.

We show that the analysis of Braverman and Rao
can be repeated using a $(\epsilon_1,1-\epsilon_2)$-potent tree instead of a good tree code,
and withstand error rate of up to $1/4-2\epsilon_1-\epsilon_2$.
Intuitively, for every node which is not $\alpha$-bad, the potent tree code behaves
exactly like a good tree code (i.e., many channel errors are required for having a decoding error).
On the other hand, for every possible path along the potent tree,
there are at most $\epsilon_1N$ nodes which are $(1-\epsilon_2)$-bad, that is,
at most additional $\epsilon_1N$ times in which the scheme differs from a good tree code (in each one of the directions of communication).
This gives an algorithm that withstand up to $1/4-(2\epsilon_1+\epsilon_2)$
fraction of (adversarial) errors.

\begin{theorem}\label{thm:BR}
For any 2-party binary protocol $\pi$ and any constant $\epsilon>0$ there exist a
protocol $\Pi$ that simulates $\pi$ over a noisy channel using \potent tree-codes,
imposes a constant slowdown
and succeeds except with negligible probability.
\end{theorem}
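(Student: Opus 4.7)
The plan is to run the Braverman--Rao (BR) simulation protocol of~\cite{BR10} verbatim, but with the good tree code replaced by an $(\epsilon_1, 1-\epsilon_2)$-\potent tree code, where $\epsilon_1$ and $\epsilon_2$ are chosen so that $2\epsilon_1 + \epsilon_2 < \epsilon$. By Proposition~\ref{prop:KTCisPotent}, such a \KTC over a constant-size alphabet exists (and can be sampled using only $O(N)$ random bits) except with probability $2^{-\Omega(N)}$; this exponentially small event will be the sole source of the negligible failure probability in the statement.

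Next I would revisit the BR analysis, which bounds the number of ``confused'' rounds in the simulation via a potential function counting the length of the longest correctly decoded prefix jointly held by the two parties. In each round, any discrepancy between the received transcript and the correct one forces the adversary to ``pay'' in channel errors, where the exchange rate comes from the distance $\Delta \ge (1-\epsilon_2)\ell$ guaranteed by a good tree code at the relevant ancestor subpath. I would repeat this argument while partitioning the positions along each party's root-to-leaf path into two classes: \emph{clean} positions, whose ancestor subpath is not $(1-\epsilon_2)$-bad, and \emph{bad} positions, which lie inside some $(1-\epsilon_2)$-bad subpath. At clean positions the original BR charging goes through unchanged; at bad positions the distance guarantee is lost and we simply concede the confusion, charging it to a separate budget.

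The key accounting observation is that, by the definition of an $(\epsilon_1, 1-\epsilon_2)$-\potent tree code, along each root-to-leaf path the union of all bad subpaths has total length at most $\epsilon_1 N$. Summing over both directions of communication (one \potent tree code per party), the total ``free'' bad-position budget is at most $2\epsilon_1 N$. Combining this with the adversary's error budget of at most $(1/4-\epsilon)N$ actual channel errors yields an effective confusion budget of at most $(1/4 - \epsilon + 2\epsilon_1)N < (1/4 - \epsilon_2)N$, which by the BR progress lemma is strictly smaller than what is needed to derail a simulation based on a code with distance parameter $1-\epsilon_2$. Consequently the simulation succeeds with only a constant slowdown, conditional on the \potent tree code having been sampled correctly.

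The main obstacle I anticipate is making this clean-vs-bad charging rigorous rather than merely morally correct. In the BR proof the distance property is invoked at many ancestor levels, so one must verify that a decoding mistake whose ancestor subpath is bad does not cascade into additional mistakes at clean levels whose quota has not yet been spent. I expect this to reduce to showing that the BR potential can only decrease when the decoder is currently sitting at an ancestor whose next step is either $(1-\epsilon_2)$-bad (absorbed into the $2\epsilon_1 N$ budget) or surrounded by competitors at distance $\ge (1-\epsilon_2)\ell$ that beat the correct path only because of many channel errors (absorbed into the $(1/4-\epsilon)N$ budget). Once this invariant is established, the remainder of the proof is identical to BR's, with the negligible-probability failure mode inherited entirely from the \potent tree code construction.
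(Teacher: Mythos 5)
Your proposal matches the paper's proof in all essential respects: both replace BR's good tree code with an $(\epsilon_1,1-\epsilon_2)$-\potent tree, charge confusions inside $\alpha$-bad subpaths to a budget of at most $2\epsilon_1 N$ (summed over the two directions), and let the remaining confusions be accounted for by the adversary's channel-error budget, recovering a tolerable error rate of $1/4 - 2\epsilon_1 - \epsilon_2$. The paper makes your ``clean-vs-bad'' charging rigorous in exactly the way you anticipate needing: it redefines ${\cal N}(i,j,d)$ to count channel errors \emph{conditional} on the total length of bad intervals in $[i,j]$ being at most $d$, and re-proves BR's Lemma~4 and Lemma~6 with an explicit $-d$ slack, with the induction in Lemma~6 splitting the budget as ${\cal N}(1,i,d)={\cal N}(1,m(i),x)+{\cal N}(m(i)+1,i,d-x)$. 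The cascading concern you flag is handled in the adapted Lemma~4 by the ``maximal bad length'' observation: if the current node is $\alpha$-bad only up to some length $\le d$, then a decoding error of magnitude $i-m(i)>d$ still exits the bad region, restoring the distance bound $\Delta \ge \alpha(i-m(i))$ and forcing roughly $\alpha(i-m(i))/2$ channel errors; and if the magnitude is $\le d$, it is absorbed trivially into the budget. So your plan, together with the specific lemma-surgery you said you would carry out, is precisely the paper's argument.
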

\noindent See proof in Appendix~\ref{app:BR}.

\section{Applications - Efficient Simulation with   Random Errors }\label{sec:app}

We provide an efficient randomized algorithm that succeeds to simulate
any interactive protocol over noisy channel with overwhelming probability.
In 1992 Schulman proposed an efficient randomized scheme that solves this
problem~\cite{schulman92} which requires quadratic communication\footnote{The simulation itself impose a constant slowdown, however the users must share a parity-checking matrix of quadratic size.}.
By using potent trees (realized via \KTC{}s), we improve the result of Schulman and obtain
a linear communication (i.e., a constant dilation) which includes the communication required to agree on the same \KTC.
The scheme we obtain is efficient and constructive.
We then extend our proof to any multiparty protocol following the analysis of Rajagopalan and Schulman~\cite{RS94}, again, by replacing the good tree code with a potent tree.

\subsection{Interactive Protocol Over Noisy Channels}
Our setting considers a distributed computation of a fixed function $f$, performed by several
users who (separately) hold the inputs. We begin by considering only two users and later
extend our result to any number of users.
Let $\pi$ be a 2-party distributed protocol
which on inputs $x_A, x_B$, both parties output the value $f(x_A,x_B)$.
In each round, $A$ and $B$ send a single message to each other, based on their input and
messages previously received. The protocol $\pi$ assumes an ideal communication channel which
contains no errors. Under these assumptions, $\pi$ takes $T$ rounds of communication to output the correct answer, where one round means both users simultaneously send each other a message.

In a more realistic model, the channel between $A$ and $B$ may be noisy, so that each message
needs to be encoded in order to identify and correct possible errors.
Shannon's \emph{Coding Theorem}~\cite{shannon48} (see Lemma~\ref{lem:shannon})
shows that an exponentially small decoding error in the length of the message $|m|$
can be achieved, if the message is encoded into a code word of length $c|m|$,
for some constant $c$
determined by the channel \emph{capacity}.
However, if we use a standard Shannon code to encode multiple messages,
then the probability of having at least a single decoding error
is proportional to the number of messages sent, rather than arbitrarily small.
In this paper we explore the worst case scenario of the above tradeoff between the
number of messages and their size.
Namely, we assume that a total amount of $T$ bits of information  is divided into
$T$ messages of a single bit each. Our aim is to send $O(T)$ bits over the channel
and obtain an exponentially small failure probability.

Let us formulate the computation process of the protocol $\pi$.
During each round, each user $i\in \{A,B\}$ sends one bit according to
its input $x_i$ and the messages received so far. Let $\pi_i(x_i,\emptyset)$
denote the  first bit sent by user $i$, and let $\pi(x,\emptyset)\in \{00,01,10,11\}$ be the
two bits transmitted in the first round by A and B respectively, where $x=x_Ax_B$.
Generally, let $m_1,\ldots,m_t$ be the first $t$ (2 bit-)messages exchanged during the protocol, then the information sent in round $t+1$
is defined by $\pi(x, m_1\ldots m_t)$.

\begin{figure}[htb]
\begin{framed}
\begin{tikzpicture}
\node (root) [above] {\gametree} ;
\coordinate
  child {    edge from parent node [near end,right] {00} }
  child {    edge from parent node [near end,right] {01} }
  child  {
    child {    edge from parent node [near end,right] {00} }
    child {    edge from parent node [near end,right] {01} }
    child {    edge from parent node [near end,right] {10} }
    child {    edge from parent node [near end,right] {11} }
    edge from parent node[near end,right] {10} }
  child {    edge from parent node [near end,right] {11} }
  ;
\end{tikzpicture}
\begin{tikzpicture}
[level/.style={sibling distance=9mm}]   
\node (root) [above] {\statetree} ;
\coordinate
  child {node {00x0}}
  child {node {00x1}}
  child {node {01x0}}
  child {node {01x1}}
  child {node {10x0}
      child child child {edge  from parent [draw=white] node {$\ldots$}}  child child}
  child {node {10x1}}
  child {node {11x0}}
  child {node {11x1}}
  child {node {Hx0}}
  child {node {Hx1}}
  child {node {Bx0}}
  child {node {Bx1}}
  ;
\end{tikzpicture}
\end{framed}
\caption{The \gametree and the \statetree}
\label{fig:trees}
\end{figure}
The computation (over a noiseless channel) can be described as a single route $\gamma_{\pi,x}$
along the \gametree,
a 4-ary tree of depth $T$ (see Figure~\ref{fig:trees}). The path $\gamma_{\pi,x}$ begins
at the root of the tree
and the $t^{\text{th}}$ edge is determined by the 2 bits exchanged in the $t^{\text{th}}$ round,
i.e., the first edge in the path is $\pi(x,\emptyset)$, the second
is $\pi(x, \pi(x,\emptyset))$, etc.

\subsection{Simulating $\pi$ Over a Noisy Channel}

\subsubsection{The basic scheme}\label{sec:basicScheme}
Our goal is to calculate a protocol $\pi$ over a noisy channel. In order to do so, we use the method
of Schulman~\cite{schulman96} described in Figure~\ref{alg:protocol}.
[The protocol is described for user $A$. The protocol for $B$ is identical.]
The idea behind the simulation is the following.
Each user keeps a record of  (his belief of) the  current progress of $\pi$,
described as a pebble on one of the \gametree nodes.

\begin{figure}[htb]
\begin{framed}
\small
Begin with own pebble at the root of \gametree\ and own state $S_A$ at the \statetree root's child labeled
$H \times \pi_A(x_A, \emptyset)$.
Repeat the following $N$ times\footnotemark:
\begin{enumerate}
\item Send $w(S_A)$ to user B.
\item Given the sequence of messages Z received so far from user B, guess the
current state $g$ of B as the node that minimizes $\Delta(W(g),Z)$.
From the guess $g$, infer
B's pebble movements and compute the (alleged) current position pebble$(g)$ of B's pebble and the bit $b$ outputted by B for this round.
\item Set your pebble movement and new state according to the current position $v$ of your pebble  and the following:
    \begin{enumerate}
    \item If $v={}$pebble$(g)$ then move own pebble according to the pair of bits
    $(\pi_A(x_A,v),b)$ to a state $v'$.
    The new state is $S_A$'s child labeled with the arc $(\pi_A(x_A,v),b)\times \pi_A( x_A, v')$.
    \item If $v$ is a strict ancestor of pebble$(g)$: own movement is $H$, and the
    next state is along the arc $H\times \pi_A(x_A,v)$.
    \item Otherwise, move own pebble backwards. New state is along the arc $R\times \pi_A(x_A,v')$
    where $v'$ is the parent of $v$.
    \end{enumerate}
\end{enumerate}
\end{framed}
\caption{Interactive protocol $\Sim$ for noisy channels~\cite{schulman96}}
\label{alg:protocol}
\end{figure}
\footnotetext{For the simulation to be well defined,
    we must extend $\pi$ to $N$ rounds.
    We assume that in each of the  $N-T$ spare rounds,
    $\pi$ outputs 0 for each user and every input.}

Each round, according to the transmissions received so far,
the user makes a guess for the position of the other user's pebble,
and infers how his own pebble should  move.
The user sends a message that describes
how he moves his pebble (out of the six possible movements matching the
4 child nodes, `H' to keep the pebble in the same place or `B' to back up to the parent node) and
the bit outputted by him, assuming the protocol is
described by the new position of his pebble.
Each one of these  12 options represents a child in a 12-ary
tree denoted as the \statetree (Figure~\ref{fig:trees}).
The user communicates\footnote
{We imply here using a (standard) error-correcting code in order to send the label over the
noisy channel, with constant slowdown (as given by Lemma~\ref{lem:shannon}).
Throughout the paper, any transmission of a label is to be understood in this manner.}
the label
assigned to the edge in the \statetree that describes his move. The \emph{state} of the user
is the current node on the \statetree, starting from its root,
and changing according to the edge communicated.

Informally speaking, the simulation works since the least common ancestor  of
both the user's pebbles always lie along the path $\gamma_{\pi,x}$. If both users take the
correct guess for the other user's pebble position, they simulate $\pi$ correctly and their pebbles move along $\gamma_{\pi,x}$.
Otherwise, their pebbles diverge, yet the common ancestor remains on $\gamma_{\pi,x}$.
On the following rounds, when the users acknowledge an inconsistency in the pebbles' positions,
they move their pebbles backwards until the pebbles reach their common ancestor, and
the protocol continues.  The users will simulate $\pi$ as long as the number of
divergences is small enough.
It is shown in~\cite{schulman96} that repeating the above process for
$N=O(T)$ rounds is sufficient for simulating
$\pi$ with exponentially small error probability (over the
channel errors).
We refer the reader to~\cite{schulman96}
for a detailed description of the protocol and its analysis.

In order to be able to construct such
a simulation, we replace the (non-constructive)
good 12-ary tree code with $\alpha=0.5$ originally used by Schulman, by a
\potent tree.
Surprisingly, this simple change is enough to obtain a constructible scheme
for simulating interactive protocols over noisy channels.
We note that in Section~\ref{sec:multiparty} we show that the same can be done using a ternary tree
instead of a 12-ary tree, using the methods of~\cite{RS94}. However, for the clarity of the
presentation, we first analyze the scheme using  a 12-ary tree
and only later optimize the result.

\begin{theorem}\label{thm:mainA}
Given a $(\frac{1}{10},\alpha)$-\potent tree code with a constant-size alphabet $|S|$
(which depends on the constant $ \alpha\in (0,1)$)
and an oracle for a decoding procedure of that tree code,
the protocol $\Sim$ (Figure~\ref{alg:protocol})
is an efficient simulation of the protocol $\pi$ (that has $T$ rounds).
It takes $N=O(T)$ rounds and succeeds with probability $2^{-\Omega(T)}$ over the channel errors,
assuming the use of an error correcting code with (label) error probability $p<2^{-42/\alpha}$.
\end{theorem}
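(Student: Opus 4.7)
The plan is to follow Schulman's original analysis of the simulation protocol \cite{schulman96}, modifying it only at the place where the good-tree-code property is invoked. Schulman's argument reduces correctness to a purely combinatorial statement: if during the $N$ rounds the number of \emph{misleading rounds} (rounds in which the maximum-likelihood decoding $g$ used by a party does not coincide with the true current state of the other party) is at most a suitable constant fraction of $N$, then the least common ancestor of the two pebbles stays on the correct path $\gamma_{\pi,x}$ throughout the simulation, and after $N=O(T)$ rounds both pebbles rest on the leaf corresponding to the correct transcript. I will cite his combinatorial lemma as a black box and concentrate on bounding the number of misleading rounds.

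The new ingredient is this bound when the tree code is only $(1/10,\alpha)$-\potent rather than good. I partition the misleading rounds along the actual decoded path $Q$ into two types: (i) \emph{structural} misleads, occurring while the decoder's position lies inside some $\alpha$-bad subpath of $Q$; (ii) \emph{noise} misleads, occurring at positions of $Q$ that are covered by no $\alpha$-bad subpath. Type (i) is controlled directly by the potency assumption: since $Q$ is a single root-to-leaf path, the union of its $\alpha$-bad subpaths has total length at most $\eps N = N/10$, so at most $N/10$ rounds can be structural. Crucially, the potency bound is uniform over \emph{every} root-to-leaf path, so it applies regardless of which path the decoder actually follows.

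For type (ii) I mimic Schulman's Chernoff-type argument in the good-tree-code setting. At any position of $Q$ outside all $\alpha$-bad subpaths, a misleading length-$\ell$ decoding requires that the channel corrupt at least $\alpha\ell/2$ of the last $\ell$ transmitted tree-code labels, because the correct and incorrect tree paths differ in at least $\alpha\ell$ symbols. Using Lemma \ref{lem:shannon}, each label (encoded by a Shannon code of constant length $O(\log|S|/\alpha)$) is wrong independently with probability at most $p<2^{-42/\alpha}$; therefore the probability of such a length-$\ell$ confusion is at most $\binom{\ell}{\alpha\ell/2}p^{\alpha\ell/2}$, which is exponentially small in $\ell$ by the choice of $p$. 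Summing over $\ell$, over the $N$ rounds, and over the constant-degree state-tree neighbors, a standard Chernoff bound shows that the number of type-(ii) misleads exceeds $N/10$ with probability only $2^{-\Omega(N)}=2^{-\Omega(T)}$.

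Combining (i) and (ii), with probability $1-2^{-\Omega(T)}$ the total number of misleading rounds is at most $N/5$, which is below the threshold required by Schulman's combinatorial lemma for the $12$-ary simulation to return the correct output. Efficiency is immediate: each round transmits one tree-code symbol encoded by a constant-length error-correcting code, decoded using the assumed oracle, so the total communication and computation are $O(N)=O(T)$. The main obstacle I anticipate is the careful bookkeeping in the second paragraph: Schulman's notion of ``bad round'' is defined with respect to the true path $\gamma_{\pi,x}$, whereas the potent property bounds bad subpaths along the decoder's actual (possibly incorrect) path, so the partition of misleads into types (i) and (ii) must be set up so that both bounds apply simultaneously along the same path.
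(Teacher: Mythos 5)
Your high-level approach is the same as the paper's: split bad moves into those caused by $\alpha$-bad nodes (bounded by $\eps N = N/10$ via potency) and those caused by channel noise alone, and then argue the latter count is small with overwhelming probability. The potency step is handled correctly, and your citation of Schulman's ``mark'' lemma (Lemma~\ref{lem:moves}) as a black box is exactly what the paper does.

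The gap is in the noise step. You write that ``a standard Chernoff bound'' controls the number of type-(ii) misleads, but the decoding-error events at distinct rounds $t$ and $t'$ are \emph{not} independent: the error intervals $[t-\ell,t]$ and $[t'-\ell',t']$ can overlap, and a single cluster of channel errors can cause many decoding errors of many different magnitudes simultaneously, so there is no independence to feed Chernoff. The paper resolves this with Lemma~\ref{lem:intervals}: any family of intervals whose union has length at least $N/10$ contains a \emph{disjoint} subfamily of total length at least $N/20$. Decoding errors on disjoint intervals are genuinely independent (they depend on disjoint sets of channel transmissions), so for any \emph{fixed} disjoint pattern of total length $L$ the probability is at most $(2p^{\alpha/2})^L$, and a union bound over the at most $2^N$ disjoint patterns yields $2\cdot 2^N(2p^{\alpha/2})^{N/20}$, which is the source of the threshold $p<2^{-42/\alpha}$ in the statement. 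Without the disjoint-interval extraction (or an equivalent device to eliminate the double-counting caused by overlaps), the probabilistic step of your argument does not go through as written.

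Your closing worry about $\gamma_{\pi,x}$ versus the decoded path is a non-issue, but for a reason worth making explicit. The potency property is quantified over \emph{every} root-to-leaf path of the state tree, so it applies to whichever sequence of states a party happens to traverse, however those states were determined. The care you should take is slightly different: the relevant path is the one actually \emph{transmitted} by the other party (the sequence of that party's own states), not the sequence of decoded guesses, since the decoder can be misled at time $t$ only if the transmitted node at depth $t$ is $\alpha$-bad or the channel erred heavily; potency then bounds the number of such bad depths along the transmitted path.
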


Moreover,
if we are given an oracle to a tree code decoding procedure, the obtained protocol
is efficient. In Section~\ref{sec:eff} we show a decoding procedure
that is efficient  on average, given that the tree is \KTC. This immediately
leads to the following (main) Theorem.

\begin{theorem}[Main]\label{thm:main}
There exists an efficient simulation that computes any distributed 2-party protocol $\pi$ of length $T$,
using a BSC for communication and a pre-shared \KTC. The simulation imposes a constant slowdown,
and succeeds with probability $1-2^{-\Omega(T)}$ over the channel errors and the choice of the \KTC.
\end{theorem}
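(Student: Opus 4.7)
The plan is to combine Theorem \ref{thm:mainA}, which gives a working simulation conditional on a $(\tfrac{1}{10},\alpha)$-\potent tree code together with a decoding oracle, with Proposition \ref{prop:KTCisPotent}, which says that a \KTC is \potent with overwhelming probability, and the expected-efficient \KTC decoder announced in Section \ref{sec:eff}. The \KTC is the key ingredient: it is specified by an $O(N)$-bit sample from $\mathbf B_n$ and has $\mathrm{poly}(N)$-time label evaluation, so it simultaneously gives an explicit tree code and an explicit descriptor of constant (indeed, linear in $T$) size.

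First I would fix parameters. Set $\eps = \tfrac{1}{10}$ as required by Theorem \ref{thm:mainA}. Pick $\alpha \in (0,1)$, and then apply Lemma \ref{lem:shannon} to concatenate each arc label with an outer Shannon code whose constant blocklength $\xi$ is chosen large enough that the effective per-label error probability is below the threshold $2^{-42/\alpha}$ demanded by Theorem \ref{thm:mainA}; this is possible because $p_{BSC} < \tfrac12$ is a fixed constant. Proposition \ref{prop:KTCisPotent} then fixes a constant alphabet size $|S|$ (depending only on $\alpha$) such that a $12$-ary \KTC of depth $N = O(T)$ is $(\tfrac{1}{10},\alpha)$-\potent except with probability $2^{-\Omega(N)} = 2^{-\Omega(T)}$.

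Next, I would assemble the protocol. The parties share the sample defining $\T$ in advance, which accounts for $O(T)$ random bits and does not alter the constant-slowdown accounting. They then run $\Sim$ (Figure \ref{alg:protocol}) on $\T$, communicating each \statetree label through the outer Shannon code. Conditioned on $\T$ being $(\tfrac{1}{10},\alpha)$-\potent, Theorem \ref{thm:mainA} yields successful simulation of $\pi$ with probability $1 - 2^{-\Omega(T)}$ over the BSC. A union bound with the $2^{-\Omega(T)}$ probability that $\T$ itself fails to be \potent produces the claimed $1 - 2^{-\Omega(T)}$ overall success guarantee over both sources of randomness.

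The main obstacle is the efficiency claim, since each round of $\Sim$ must decode a received prefix $Z$ to the node $g$ minimizing $\Delta(W(g),Z)$ in what is a priori an exponentially large tree. Label evaluation via $r_i(x)$ and pebble/state bookkeeping are both $\mathrm{poly}(N)$, so the bottleneck is decoding. The plan is to invoke the \KTC decoder of Section \ref{sec:eff}, which runs in expected polynomial time per round, and to truncate it after a constant multiple of its expected runtime; the rare truncations are folded into decoding errors, which Theorem \ref{thm:mainA} already tolerates via the $\eps = \tfrac{1}{10}$ slack built into the \potent property. The delicate point will be verifying that this truncation budget combines cleanly with the BSC error budget so that the union of all error sources still fits inside the $\tfrac{1}{10}$-fraction of bad nodes allowed along any root-to-leaf path, thereby preserving both the constant slowdown and the $1 - 2^{-\Omega(T)}$ success probability promised by the main theorem.
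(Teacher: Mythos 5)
Your high-level plan matches the paper's: Theorem~\ref{thm:mainA} gives the simulation guarantee conditional on a $(\tfrac{1}{10},\alpha)$-\potent tree and a decoding oracle, Proposition~\ref{prop:KTCisPotent} shows the \KTC is \potent except with probability $2^{-\Omega(T)}$, and a union bound over the two failure sources gives the claimed bound. The Shannon-code wrapping of labels and the parameter accounting are also as in the paper.

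Where you diverge is on the efficiency claim, and there your proposal has a genuine gap. The paper does \emph{not} truncate the decoder: Section~\ref{sec:eff} establishes that the decoder (i) runs in \emph{expected} time $O(N)$ and, crucially, (ii) \emph{always} returns the Hamming-minimizing node whenever the transmitted node is not $\alpha$-bad. Property (ii) is what lets the paper reuse the proof of Theorem~\ref{thm:mainA} verbatim, because that proof already charges every $\alpha$-bad node as a bad move and shows the remaining bad moves require channel errors; the efficient decoder introduces no new error events. The paper then simply settles for expected polynomial running time as its notion of efficiency.

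Your truncation scheme, by contrast, would inject a third category of bad move that is not present in the Theorem~\ref{thm:mainA} analysis, and there is no slack left to absorb it: the $\eps = \tfrac1{10}$ potency budget and the $N/10$ channel-error budget together exactly account for the $N/5$ bad moves tolerated when $N = 5T$ (since each good move raises the mark by 1 and each bad move can lower it by 3). So truncations are \emph{not} ``already tolerated''; you would have to shrink $\eps$, enlarge $N$, or re-derive the mark accounting with a three-way budget. Moreover, a per-round truncation threshold set by Markov's inequality gives only a constant (say $1/c$) failure probability per round, and the truncation events at different rounds are correlated through shared channel errors, so you cannot appeal to Chernoff directly; you would need a disjoint-intervals argument analogous to the one in the proof of Theorem~\ref{thm:mainA}. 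You also do not address how to characterize what a truncated decoder \emph{outputs} — the magnitude-$l$ error taxonomy that the analysis relies on does not automatically apply to an aborted search. None of this is fatal, but as written the truncation route is incomplete, whereas simply stating that the decoder is expected-polynomial-time (as the paper does) closes the proof.
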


We now give the proof idea  for Theorem~\ref{thm:mainA} and later complete the formal proof.
We begin by defining a good move: a move that advances the simulation of $\pi$ in one step,
and a bad move: an erroneous step in the simulation that requires us to back up and re-simulate
that step. We show that any bad move is associated with a decoding error,
i.e., recovering a wrong node $u$, due to channel errors or tree defects.
Thus, we can bound the number of bad moves by bounding the probability for
channel errors and tree defects.
Using (Shannon's) error correcting codes, the probability of a channel error is arbitrarily small,
and so is the probability of having many channel errors.
Furthermore, we use a \potent tree code, to guarantee
a small number of tree defects, except with an exponentially small probability.

\vspace{1em}

%
Recall the following properties of the simulation $\Sim$.
\begin{lemma}[\cite{schulman96}] \label{lem:lcaOnTrack}
The least common ancestor of the two pebbles lies on $\gamma_{\pi,x}$.
\end{lemma}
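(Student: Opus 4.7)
The plan is to prove the lemma by induction on the round number $t$, with the invariant that at the end of round $t$, the LCA $c$ of the two pebbles $(v_A, v_B)$ lies on $\gamma_{\pi,x}$. The base case is trivial since both pebbles start at the root, which lies on $\gamma_{\pi,x}$. For the inductive step, I would exploit the fact that in one round each pebble moves by at most one edge (forward, stay via $H$, or back via $R$), so the LCA can only shift by one level.

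Partition the inductive step by how the LCA changes. If the new LCA $c'$ is an ancestor of (or equal to) $c$, then $c' \in \gamma_{\pi,x}$ automatically, since $\gamma_{\pi,x}$ is a single root-to-leaf path and every ancestor of $c$ lies on it. The only nontrivial case is when $c'$ is a proper descendant of $c$, which forces both new pebbles $v_A', v_B'$ to lie in a common subtree $T_j$ rooted at a child $c_j$ of $c$. This can happen in only two ways: (A) $v_A = v_B = c$ and both pebbles move forward into the same child $c_j$; or (B) one of the two pebbles (say $v_B = c$) was at the old LCA while the other ($v_A$) lay strictly inside some subtree $T_k$ of $c$, and the pebble at $c$ now steps forward into $T_k$, making $c_k$ the new LCA.

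For case (A), A's forward move labels its edge with the pair $(\pi_A(x_A,c),b_A)$ (A's true bit, A's guess of B's bit), while B's move labels its edge with $(b_B,\pi_B(x_B,c))$. For these two edges to coincide, mutual consistency requires $b_B = \pi_A(x_A,c)$ and $b_A = \pi_B(x_B,c)$, so the common move is exactly $(\pi_A(x_A,c),\pi_B(x_B,c)) = \pi(x,c)$, which is the on-$\gamma$ child and hence $c_j \in \gamma_{\pi,x}$. For case (B), use that each player always outputs its own bit honestly: A's first forward step out of $c$ (which took A into $T_k$) had the form $(\pi_A(x_A,c),\cdot)$, so the edge label of $c_k$ has first coordinate $\pi_A(x_A,c)$; and for B to move into $c_k$ in this round, the second coordinate of $c_k$'s label must equal $\pi_B(x_B,c)$. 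Combining the two, the label of $c_k$ is $(\pi_A(x_A,c),\pi_B(x_B,c)) = \pi(x,c)$, so again $c_k \in \gamma_{\pi,x}$.

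The only real obstacle is bookkeeping in case (B): one must resist the temptation to think that decoding errors let the pebble at $c$ wander into an arbitrary subtree of $c$, and instead observe that while the guesses $b_A,b_B$ are freely corruptible, the coordinates corresponding to each player's \emph{own} bit are determined by $\pi_A(x_A,\cdot)$ and $\pi_B(x_B,\cdot)$ and cannot be faked. This rigidity is what pins the descending LCA back onto $\gamma_{\pi,x}$ in both cases, completing the induction.
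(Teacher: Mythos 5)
Your proof is correct. The paper does not reproduce the proof of this lemma (it is cited from Schulman's 1996 paper), but your inductive argument is the standard one: the LCA can only descend if a pebble sitting at $c$ steps forward into the subtree already containing the other pebble, and since every forward step a player takes out of $c$ carries that player's own honestly-computed bit ($\pi_A(x_A,c)$, resp.\ $\pi_B(x_B,c)$) as the corresponding coordinate, the only child of $c$ reachable by both pebbles is the one labeled $\pi(x,c)$, which is the $\gamma_{\pi,x}$-child. Your case analysis is complete, including the easily-overlooked observation that if both pebbles are strictly below $c$ in distinct subtrees, single-step moves cannot place them in a common subtree of $c$, so the LCA cannot descend in that situation.
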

\begin{lemma}[\cite{schulman96}]\label{lem:moves}
Let $v_A$ and $v_B$ be the positions of the two pebbles in the \gametree at some time $t$,
and let $\bar v$ denote the least common ancestor of $v_A$ and $v_B$.
Define the {\em mark} of the protocol as the depth of $\bar v$ minus the
distance from $\bar v$ to the further of $v_A$ and $v_B$.

If during a specific round, both users guess the other's state correctly
(a {\em good} move), the mark increases by 1.
Otherwise (a {\em bad} move), the mark decreases by at most 3.
\end{lemma}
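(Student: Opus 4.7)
The plan is to prove the two claims — good moves raise the mark by $1$, bad moves lower it by at most $3$ — by tracking three quantities: the depth $h$ of $\bar v$, and $d_A = d(\bar v, v_A)$, $d_B = d(\bar v, v_B)$, so that $m = h - \max(d_A, d_B)$. Primed symbols denote post-round values. The one global fact I need about the protocol is that each pebble moves at most one edge per round (forward to a child, back to the parent, or stays), so $\text{depth}(v_X') \le \text{depth}(v_X) + 1$ and $v_X'$ lies within one edge of $v_X$ for $X \in \{A,B\}$.

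For a good move, both users' guesses are the true pebbles, so the move each executes is determined by rules (a)--(c) of Figure~\ref{alg:protocol} applied to the \emph{actual} position of the other user. I split into three sub-cases by the relative position of $v_A, v_B$: (i) $v_A = v_B = \bar v$, (ii) one of $v_A, v_B$ is a strict ancestor of the other (so one equals $\bar v$), (iii) both are distinct descendants of $\bar v$ in different subtrees (so $d_A, d_B \ge 1$). In (i) both users hit rule (a) and advance to the common child corresponding to the correct next step of $\pi$ along $\gamma_{\pi,x}$, so $\bar v'$ moves one level deeper and $d_A' = d_B' = 0$. In (ii) the ancestor invokes rule (b) and holds while the descendant invokes rule (c) and backs up one step toward $\bar v$. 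In (iii) both back up via rule (c). A direct computation of $\bar v', d_A', d_B'$ in each sub-case yields $m' = m + 1$.

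The bad-move direction is the main content. I would establish two structural inequalities that are insensitive to which guess was wrong and to what the wrong guess was, using only the one-edge motion bound. First, $h' \ge h - 1$: since every $v_X$ equals $\bar v$ or is a strict descendant of $\bar v$, and $v_X'$ is within one edge of $v_X$, both $v_A'$ and $v_B'$ remain in the subtree rooted at the parent $P$ of $\bar v$ (replace $P$ by the root if $\bar v$ is the root). Hence $P$ is a common ancestor of $v_A', v_B'$, so $\bar v'$ has depth $\ge h-1$. Second, since $\bar v'$ is an ancestor of $v_X'$,
\[
d_X' \;=\; \text{depth}(v_X') - h' \;\le\; (\text{depth}(v_X)+1) - h' \;=\; d_X + 1 - (h'-h).
\]
Combining the two,
\[
m' \;=\; h' - \max(d_A', d_B') \;\ge\; m + 2(h'-h) - 1 \;\ge\; m - 3,
\]
using $h'-h \ge -1$ at the last step.

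The part I expect to be sharpest is the structural claim $h' \ge h - 1$: one might worry that if both users back up the lca jumps up by more, but the one-step motion bound keeps each new pebble inside the subtree rooted at $P$, which is the whole argument. The good-move analysis is essentially a bookkeeping check, but must be carried out for all three sub-cases so that rule (b) versus rule (c) is applied consistently on each side. The root case ($\bar v$ is the root) is handled separately and only strengthens the bound to $h' \ge 0 = h$, so it cannot worsen the inequality.
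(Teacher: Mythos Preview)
The paper does not actually prove this lemma: immediately after stating it, the text reads ``A proof for both of the above lemmas is given in~\cite{schulman96},'' so there is no in-paper argument to compare against. Your proposal is therefore a reconstruction of Schulman's original proof rather than a comparison target.

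That said, your argument is correct and is essentially the standard one. The good-move case analysis (i)--(iii) matches the three branches (a)--(c) of the protocol exactly, and in each branch the bookkeeping $h' - \max(d_A',d_B') = m+1$ goes through as you describe. For the bad-move bound, your two structural inequalities are the right abstractions: the one-edge motion constraint forces $v_A', v_B'$ to remain in the subtree rooted at the parent of $\bar v$, giving $h' \ge h-1$; and writing $d_X' = \mathrm{depth}(v_X') - h'$ immediately yields $d_X' \le d_X + 1 - (h'-h)$. The combination $m' \ge m + 2(h'-h) - 1 \ge m - 3$ is clean and avoids any case analysis on which guess was wrong, which is exactly the point. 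The root edge case is handled correctly. There is no gap.
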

\noindent A proof for  both of the above lemmas is given in~\cite{schulman96}.

Our goal is to show that the probability of having more than $cN$ bad rounds
is exponentially small. By setting $c=1/5$ and $N=5T$ we guarantee
that  at the end of the calculation the mark will be (at least) $T$.
Since the common ancestor of the pebbles always lies along the path $\gamma_{\pi,x}$,
a mark of value $T$  indicates that the common ancestor has reached depth $T$,
and $\pi$ was successfully simulated.

For a bad round at time $t$, we assume that (at least) one of the users takes a wrong
guess of the (other user's) current state.
Suppose that
the least common ancestor of the right state and
the wrongly guessed state in the \statetree, is distanced $l$ levels away (i.e., an error of magnitude $l$).
Define the {\em error interval} (of length $l$) corresponding to the erroneous guess
as  $[t-l, t]$.

We now show that given a \potent tree,
$\Sim$  simulates $\pi$ over a noisy channel with overwhelming probability.

\begin{proof} \textbf{(Theorem~\ref{thm:mainA})}.
Suppose the parties share a $(\frac{1}{10},\alpha)$-\potent tree code\footnote{
    Proposition~\ref{prop:KTCisPotent} guarantees  that
    as long as $|S|>(2d)^{22/(1-\alpha)}$,
    only a negligible fraction of the \KTC{}s are $(\frac{1}{10},\alpha)$-bad. Therefore,
     for obtaining a \potent tree with overwhelming probability, we require $\log |S|\ge101$.},
for some $0<\alpha<1$.
Assume that a specific run of a simulation failed, and thus
it must be that more than $N/5$ errors have occurred.

Note that the simulation defines a path along the \statetree, from the root
to one of the leaves. Since the \statetree is $(\frac{1}{10},\alpha)$-\potent, the
specific path contains bad intervals of total length
at most $N/10$. We assume a worst case scenario in which
each $\alpha$-bad node causes a bad move in the simulation.
We show that
the probability of having $N/10$ additional  bad moves
(in the remaining nodes, which are not $\alpha$-bad)
is exponentially small.

Consider a specific bad move caused by erroneously decoding a node which is not $\alpha$-bad,
at time $t$. Namely, the user guesses a wrong node $r$ instead of the real transmitted node $s$.
For an error of magnitude $l_i$, $W(s)$ and $W(r)$ are identical from the root
to the least common ancestor of $r$ and $s$ at level $t-l$.
Since the decoding is done by minimizing the Hamming distance,
making this wrong guess is independent of transmissions prior to round $t-l$.
It follows that such an error (of magnitude $l$) can happen only if at least $\alpha l/2$
channel errors have occurred during the last $l$ rounds.
Due to the same reason, it is easy to see that decoding errors of which the
error intervals are disjoint, are independent.

We consider again the bad moves which are associated with
a decoding error of  nodes which are not $\alpha$-bad. Each such a bad move (i.e., a decoding error)
impose an error interval of length $l_i >1$, and the union of these intervals must be
of length at least $N/10$.
Each such an error happens with probability at most
$\sum_{j=\alpha l_i/2}^{l_i} {{l_i} \choose j} p^j \le 2^{l_i} p^{\alpha l_i/2}$.
Due to Lemma~\ref{lem:intervals} we can find a set of disjoint intervals
of length at least $N/20$. Due to the discussion above, these errors are independent, and their
probability to jointly occur is bounded by
\begin{align*}
 \prod_{i} 2^{l_i} p^{\alpha l_i/2} = (2p^{\alpha/2})^{l}\text{.}
\end{align*}

We conclude the proof by bounding the probability
for having any possible error pattern of total length at least $N/20$
along the bad moves associated with nodes which are not $\alpha$-bad,
by using the union bound
over all possible error patterns (there are
at most $\sum_{j=N/20}^N{N \choose j}\le 2^N$ such patterns),
for each one of the users. The probability is bounded by
$$
\sum_{\text{user } U}\ \sum_{\stackrel{\text{\scriptsize pattern of }}{l\ge N/20 \text{ errors}} }  (2p^{\alpha/2})^{l} \le
2\cdot 2^N (2p^{\alpha/2})^{N/20},
$$
which is $2^{-\Omega (N)}=2^{-\Omega (T)}$ for  $p< 2^{-42/\alpha}$.
\end{proof}

\subsubsection{Performing decoding in an efficient way}\label{sec:eff}

A decoding process outputs the node $u$ (at depth $t$) that minimizes the Hamming distance
between $W(u)$ and the received string of labels ${\mathbf{r}=r_1r_2\dotsm r_t}$.
Although the above Theorem~\ref{thm:mainA} is proven assuming an oracle to
tree-code decoding procedure, this requirement is too strong for our needs.
Since we count any node which is $\alpha$-bad as an error (even when no error have occurred),
it suffices to have an oracle that decodes correctly given that the (transmitted) node
is not $\alpha$-bad.

We follow the techniques
employed by Schulman~\cite{schulman96}
(which are based on ideas from~\cite{wozencraft57, reiffen60, fano63}),
and show an efficient decoding that succeeds if the node is
not $\alpha$-bad.
While the decoding process of~\cite{schulman96} is based on the fact that
the underlying tree is a good tree code,
in our case the tree code is a \KTC.\footnote{A similar proof works also for a \RTBC.}

The decoding procedure is the following.
For a fixed time $t$, let $g_{t-1}$ be the current guess of the other user's state,
and denote the node along the path from the root to $g_{t-1}$ as $g_1, g_2, \ldots, g_{t-1}$.
Also, recall that  $r_1,r_2, \ldots, r_t$ are the labels received so far. If there exists a child of $g_{t-1}$
whose edge is labeled $r_t$, choose that child (break ties arbitrarily),
otherwise, arbitrarily choose one of $g_{t-1}$'s child nodes.
Denote with $g_t$ the new guess.

Recall that $W_m(u)$ denotes the $m$-suffix of $W(u)$, i.e., the last $m$ symbols along the path from the tree's root to the node $u$.
We look at the earliest time $i$ such that $\Delta( r_ir_{i+1}\cdots r_t, W_{t-i+1}(g_t)) \ge \alpha (t-i)/2$.
For that specific $i$, exhaustively search the subtree of $g_i$ and set
the new guess $g$ as the node $u$
(at depth $t$) that minimizes the Hamming distance $\Delta(r_1r_2\cdots r_t, W(u))$.

Note that when $g_t$ is an $\alpha$-bad node of maximal length $l$,
any path from the root to some other node $g'_t$,
where the least common ancestor of $g_t$ and $g'_t$ is located $l'>l$ levels away,
must have
a Hamming distance $\Delta(W_{l'}(g_t) , W_{l'}(g'_t)) \ge \alpha l'$.
Therefore, if all the suffixes of length $l'>l$ satisfy
$\Delta( r_{t-l'+1}\cdots r_t, W_{l'}(g_t)) < \alpha l'/2$,
it is guaranteed that the
node minimizing the Hamming distance is within the subtree of $g_{t-l}$.
However, if $g_t$ is an $\alpha$-bad node of length $l$,
the decoding process might yield a wrong guess,
i.e., a node in the subtree of $g_{t-l}$ that does not minimize the Hamming distance.

The following proposition bounds the probability for a decoding error of magnitude $l$.
\begin{proposition}\label{pro:errorProb}
Assume a \KTC is used to communicate the string $W(v)$ over a BSC.
Using the efficient decoding procedure (with some constant $\alpha\in(0,1)$),
the probability for a specific user to make
a decoding error of magnitude $l$
is  bounded by $2\left(\frac{4d}{|S|}\right)^l + 2\left(\frac{2d}{|S|^{1-\alpha}}\right)^l$,
if an error correction code with (label)
error probability less than $|S|^{-2}$ is used.
\end{proposition}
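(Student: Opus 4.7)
The plan is to decompose a magnitude-$l$ decoding error at a fixed time $t$ according to the location of the exhaustive search's base node $g_i$. Let $v$ be the transmitted node at depth $t$ and $g$ the decoder's output; a magnitude-$l$ error means $\mathrm{LCA}(g,v)$ sits at depth $t-l$. Since $g$ belongs to the subtree of $g_i$ and $g$ diverges from $v$ at depth $t-l+1$, the base $g_i$ sits at depth at most $t-l$. I would split on whether $g_i$ lies on the correct path (ancestor of $v$) or not.

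\textbf{Case A:} $g_i$ is an ancestor of $v$. Then $v$ lies in the searched region, and since $g$ beats $v$ in Hamming distance,
\[
\Delta(r_{[t-l+1:t]}, W_l(g)) \;\le\; \Delta(r_{[t-l+1:t]}, W_l(v)) \;=\; E,
\]
where $E$ is the number of channel-label errors in the last $l$ positions. A Chernoff-style split further restricts attention to $E \le \alpha l/2$ (otherwise absorb the low-probability heavy-noise event into the final bound, using that by the ECC guarantee each symbol is wrong with probability less than $|S|^{-2}$). Then $\Delta(W_l(g), r_{[t-l+1:t]}) \le \alpha l/2$, and by Corollary~\ref{cor:k-independence} the $l$-suffix labels of $g$ are almost independent of $W_l(v)$ and of the channel noise; applying the reasoning of Corollary~\ref{cor:defect} bounds this probability by $2\cdot 2^l/|S|^{(1-\alpha)l}$ per candidate $g$. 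Union-bounding over the at most $d^l$ nodes $g$ diverging from $v$ exactly at depth $t-l+1$ yields the $2(2d/|S|^{1-\alpha})^l$ contribution.

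\textbf{Case B:} $g_i$ is off the correct path, so greedy already diverged from the path to $v$ at some depth $j \le i \le t-l$. For greedy to have made a wrong choice at a given step requires either a channel error at that symbol (probability less than $|S|^{-2}$) or a sibling whose label matches the received symbol (probability at most $(d-1)/|S|$ plus a negligible bias, by almost $k$-wise independence of the \KTC via Corollary~\ref{cor:k-independence}). Because only $O(l)$ label coordinates are involved, these events are almost independent across the $l$ divergent steps, giving per-trajectory probability at most $(d/|S| + 1/|S|^2)^l$. Union-bounding over the at most $d^l$ possible divergent trajectories of length $l$, and absorbing constants, produces the $2(4d/|S|)^l$ contribution. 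Summing the two cases gives the advertised bound.

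The hard part will be \textbf{Case~A}: justifying the passage from the raw inequality $\Delta(W_l(g), r_{[t-l+1:t]}) \le E$ to the clean event $\Delta(W_l(g), r_{[t-l+1:t]}) \le \alpha l/2$, which requires a careful accounting of the search-triggering threshold together with the channel-error count. Throughout, one must verify that the \KTC bias $\epsilon = 2^{-cN\log|S|}$ is small enough (by choosing $c$ suitably large in Definition~\ref{def:SBTC}) to be absorbed into the constants of the final bound; this is automatic since the relevant local analyses invoke at most $O(l) \le O(N)$ label coordinates.
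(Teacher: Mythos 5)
Your decomposition into Case~A and Case~B does not match the geometry of the decoding procedure. You assert that ``the base $g_i$ sits at depth at most $t-l$,'' but this is not implied by $g$ belonging to the subtree of $g_i$: if $v$ is \emph{not} in the searched subtree, then $\mathrm{LCA}(g,v)=\mathrm{LCA}(g_i,v)$, which lies strictly above $g_i$, so in fact $i>t-l$. Once you (correctly) restrict to $i\le t-l$, the node $g_i$ is automatically an ancestor of $\mathrm{LCA}(g,v)$ and hence of $v$, so your Case~B is vacuous; and yet Case~B is precisely the event that is responsible for the $2(2d/|S|^{1-\alpha})^l$ term in the paper's proof --- namely, the event that the backtracking threshold was never triggered and the output, though Hamming-minimal within the wrong subtree, is not the global minimizer. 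The paper bounds this via the triangle inequality: the output $g_t$ must then satisfy $\Delta(W_l(g_t),W_l(v'))\le\alpha l$ against some closer node $v'$, and a union bound over $v'$ and possible Hamming distances gives $2(2d)^l|S|^{-(1-\alpha)l}$. Notice this is purely a tree-code event --- it should correspond to your Case~B, not your Case~A.

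Independently of the decomposition, the Chernoff-style split in your Case~A breaks the argument. The ``heavy-noise'' event $E>\alpha l/2$ has probability of order $\binom{l}{\alpha l/2}p^{\alpha l/2}\le 2^l|S|^{-\alpha l}$, which for $\alpha<1/2$ and large $|S|$ dominates both target terms $(4d/|S|)^l$ and $(2d/|S|^{1-\alpha})^l$, so it cannot simply be ``absorbed into the final bound.'' The proposition must hold for \emph{any} constant $\alpha\in(0,1)$. What the paper does instead is to couple the two sources of randomness: to beat $v$, any $u$ with $\Delta(W(u),W(v))=j$ requires at least $j/2$ channel errors among the $j$ disagreeing coordinates, so the joint probability has the form $\sum_j \binom{l}{l-j}|S|^{-(l-j)}\cdot\sum_{k\ge j/2}\binom{l}{k}p^k$, where $|S|^{j-l}\cdot|S|^{-2k}\le|S|^{-l}$ for every relevant pair $(j,k)$. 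That correlation is what produces the $(4d/|S|)^l$ bound for the ``output ties or beats $v$'' event --- which is your Case~A. In short: Case~A should give $2(4d/|S|)^l$ by the coupled sum over $j$ and $k$, Case~B should give $2(2d/|S|^{1-\alpha})^l$ by the triangle-inequality argument on the backtracking threshold, and your per-step greedy analysis of Case~B (per-step error probability times $d^l$ trajectories, yielding roughly $(d^2/|S|)^l$ rather than $(4d/|S|)^l$) does not capture the threshold mechanism governing where the exhaustive search is rooted.
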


\begin{proof}
A decoding error of magnitude $l$ occurs if the decoding process outputs a node
 $u\ne v$, such that the common ancestor of $u,v$ is $l$ levels away.
Such an error can happen due to one of the following reasons:
\begin{enumerate}
\renewcommand{\labelenumi}{(\roman{enumi})}
\setlength{\itemsep}{1pt}
\vspace{-0.3em}
\item For the received string ${\mathbf r}=r_1r_2 \ldots r_l$ it holds that
$\Delta({\mathbf r}, W(u)) \le \Delta({\mathbf r}, W(v))$. This happens when
the Hamming distance $\Delta(W(u),W(v))$ is $j=0,1,\ldots,l$ and more than $j/2$ channel errors  occurred.
\item The decoding process did not return the node that minimizes the Hamming distance.
\end{enumerate}
Note that we only need to consider the paths from root to $u$ and to $v$ and thus use the $2N$-wise independence of the tree's labels.
Recall that the probability to have specific set of $l<2N$ labels is $2^{-cN\log|S|}$ away from uniform, with $c=O(1)$, and the probability for a given Haming distance between $W(u)$ and $W(v)$ is bounded by Lemma~\ref{lem:ktcHD}.
Let $p < |S|^{-2}$ be the maximal label error of the channel.
Using a union bound for every possible node $u$,
the probability of part (i) is bounded by
\begin{align*}
\Pr[ \text{ Error of}&\text{ magnitude } l\ ]  \\ &
\le  \sum_u\sum_{j=0}^l \Pr [\Delta (W(v),W(u)) = j  ] \cdot  \Pr [\ge j/2 \text{ symbol-errors}] \\
& \le d^l \sum_{j=0}^{l}  2{l \choose l-j} \left(\frac{1}{|S|}\right)^{l-j}
   \sum_{k=j/2}^l {l \choose k} p^k (1-p)^{l-k} \\
& \le 2\cdot d^l \sum_{j=0}^{l}  {l \choose l-j} \sum_{k=j/2}^l {l \choose k} {|S|}^{j-l} |S|^{-2k} \\ &
  \le 2\cdot d^l  \cdot 2^l\cdot 2^l\cdot |S|^{-l}\text{ ,}
\end{align*}
which is exponentially small in $l$ as long as $|S|>4d$.

For part (ii),
note that the decoding process does not return the node that minimizes the Hamming
distance if
$l$ is larger than $t-i$, for the suffix determined by the decoding procedure
(using the notations described above for the efficient decoding procedure).
This implies that for the outputted node $g_t$, $\Delta(r_i\cdots r_t,W_i^{t}(g_t)) < \alpha(t-i)/2$.
Since $g_t$ is not the node that minimizes the Hamming distance,
there must exist a node $v$
of distance at most $l$,
such that $\Delta(W_i^t(v), r_i\cdots r_t) \le  \Delta(r_i\cdots r_t,W_i^{t}(g_t))$.
By the triangle inequality, the Hamming distance between the paths from
 $v$ and $g_t$  to
 their least common ancestor
must be at most $\alpha l$.
Using the union bound for any possible such $v$ and any possible Hamming distance up to $\alpha l$,
we bound the probability of this event by
$$d^l\sum_{j=0}^{\alpha l}2{l \choose l-j}|S|^{-(l-j)}\le 2(2d)^l |S|^{-l(1-\alpha)}\text{ .}$$
A union bound on the two cases completes this proof.
\end{proof}

We stress that the above decoding process always outputs the correct
node (i.e., the node which minimizes the Hamming distance), if the transmitted
node is not $\alpha$-bad. For that reason, the analysis performed
in the proof of  Theorem~\ref{thm:mainA} is still valid, since it
only requires the decoding procedure to succeed when the node is not $\alpha$-bad
(and assumes that the simulation has a bad move
in each node which is a bad node).

We now show that this procedure is efficient in expectation.
Let $L(t)$ be the depth of the subtree explored at time $t$.
The decoding process takes $O\big(\sum_{t=1}^N d^{L(t)}\big)$  steps
(this dominates terms of  $O(L(t))$ required to maintain the guess, etc).

For time $t$, if $L(t)=l$ then $\Delta(r_{t-l+1}\cdots r_t, W_{l}(g_t)) \ge \alpha l/2$
yet for $l'>l$, $\Delta(r_{t-l'+1}\cdots r_t, W_{l'}(g_t)) < \alpha l'/2$. Assume that the
sequence of labels transmitted is $W(v)$ for some node $v$ of depth $t$.
The above requirements imply that the suffixes (of length $l$)
of $W(v)$ and $W(g_t)$ have Hamming distances \emph{exactly} $\lceil \alpha l/2\rceil$.
This happens with probability at most
\begin{align*}
\le &\sum_{j=0}^{\lceil\alpha l/2\rceil-1}\Pr[\Delta(W_{l}(g_t),W_{l}(v) = j]\Pr [\ge \alpha l/2-j \text{ symbol-errors}] \\
&+ \sum_{j=\lceil\alpha l/2\rceil}^l \Pr[\Delta(W_{l}(g_t),W_{l}(v) = j]\Pr [\ge j-\alpha l/2 \text{ symbol-errors}]
\\
\le & \sum_{j=0}^l 2{l \choose l-j} \left(\frac{1}{|S|}\right)^{l-j}
\sum_{k=|\alpha l/2-j|}^l{l \choose k} p^{k} (1-p)^{l-k}
\le  2^{2l+1} |S|^{-l(1-\alpha/2)}\text{ ,}
\end{align*}
assuming $p<|S|^{-2}$.

With a sufficiently large yet constant alphabet, e.g.,  $|S|>(8d)^{1/(1-\alpha/2)}$,
we bound
the probability that $L(t)$ equals $l$ to be $2^{-\gamma l}< d^{-l}$.
The expected running time is then given by
\[
O\bigg(  \sum_{t=1}^N E \Big[ d^{L(t)}\Big]\bigg) =
O\bigg(  \sum_{t=1}^N\sum_{l=0}^t \Big[ 2^{-\gamma l}d^{l}\Big]\bigg) =
O\bigg(  \sum_{t=1}^N \frac{2^\gamma}{2^{\gamma} -d} \bigg) = O(N)\text{.}
\]
Since we repeat the simulation step for  $N=O(T)$ times,
the computation is efficient in expectation.
To complete the proof, we mention that~\cite{schulman96} presents  a data structure
which allows us to perform the above decoding with overhead $O(L(t))$.

\subsubsection{Simulating an adaptively chosen protocol}\label{sec:adaptive}

For a given protocol,  $\Sim$ fails with exponentially small probability that
depends on the choice of the \KTC and the BSC errors.
Assume that the we first pick a \potent tree
and then the protocol $\pi$ is (adversarially) chosen.
Due to Theorem~\ref{thm:mainA}, as long as the tree code is  \potent,
the simulation succeeds with overwhelming probability, over the
BSC errors alone. However, the decoding process described in
Section~\ref{sec:eff} above, might no longer be efficient,
since the adversary might force the simulation
to travel through the ``bad'' regions in the tree that require exploring large subtrees.

An interesting remedy to the above
can be achieved by
by introducing more randomness,
which prevents the adversary from fixing the path along the \statetree the simulation takes.
We now extend  the basic scheme $\Sim$
to the stronger notion
of adversarially chosen protocol (Section~\ref{sec:adaptive}), and prove the following theorem.
\begin{theorem}\label{thm:mainAdaptive}
Except for probability $2^{-\Omega(T)}$ over the choice of the \KTC, there exists
an efficient scheme to simulate any 2-party protocol $\pi$
of length $T$, with success probability at least $1-2^{-\Omega(T)}$ over the channel errors.
\end{theorem}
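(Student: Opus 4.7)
The plan is to augment the simulation protocol from Theorem~\ref{thm:main} with a small amount of extra shared randomness $R$ that the adversary does \emph{not} see when fixing $\pi$. Concretely, one party samples $R\in\{0,1\}^{O(T)}$ at the start, Shannon-encodes it and sends it with only $O(T)$ extra channel uses, and both parties interpret $R$ as a sequence of independent uniformly random permutations $\tau_1,\ldots,\tau_N$ of the $12$ move-types. Whenever $\Sim$ would have moved to the child indexed by $i$ at round $t$ of the \statetree, the augmented scheme moves to the child indexed by $\tau_t(i)$ instead; since both parties know $R$ they stay in sync, and the post-decoding logic is identical after applying $\tau_t^{-1}$.

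Each $\tau_t$ is a relabeling of children at level $t$, so it permutes the \KTC edge-labels at that level but preserves the multiset of labels incident to each \statetree node, and hence preserves the Hamming distance between any two root-to-leaf label-sequences. The augmented \statetree is therefore $(\tfrac{1}{10},\alpha)$-\potent whenever the underlying \KTC is, so the correctness analysis of Theorem~\ref{thm:mainA} applies verbatim and the simulation succeeds with probability $1-2^{-\Omega(T)}$ over the channel errors, for every adversarial $\pi$.

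For efficiency, the point is that $\pi$ is fixed without knowledge of $R$, so the actual sequence of children traversed in the \statetree is $\tau_1(i_1),\tau_2(i_2),\ldots$, which, because the $\tau_t$ are uniform and independent, is distributed as a uniformly random root-to-leaf path. In particular, the sequence of \KTC labels put on the channel during the execution of the adversarial $\pi$ has the same distribution as the sequence for a \emph{non-adaptively} chosen protocol, which is exactly the setting analyzed in Section~\ref{sec:eff}. Hence at every round $t$ we recover $E[d^{L(t)}]=O(1)$ via Proposition~\ref{pro:errorProb}, and the expected total decoding work is $O(T)$.

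The main obstacle is converting this expectation into a high-probability bound that still survives the ``for every $\pi$'' quantifier. The plan is to exploit the almost $k$-wise independence of the \KTC labels (Corollary~\ref{cor:k-independence}) to get concentration: I would partition the $N=O(T)$ rounds into $\Theta(T/k)$ blocks of length $k$ and observe that, conditional on the (random) block-boundary states, the block-contributions $\sum_{t\in\text{block}} d^{L(t)}$ are $(2^{-cN\log|S|})$-close to independent, since each such sum depends on only $O(k)$ edge-labels of the tree. A Chernoff-type bound on these almost-independent block-sums then gives $\Pr\!\left[\sum_t d^{L(t)}>CT\right]<2^{-\Omega(T)}$, and combining this with Proposition~\ref{prop:KTCisPotent} (the \KTC is \potent except with probability $2^{-\Omega(T)}$) and the union-bound accounting already used in the proof of Theorem~\ref{thm:mainA} yields the theorem.
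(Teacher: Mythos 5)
Your approach is genuinely different from the paper's, and the core idea you use is sound, so the comparison is worth spelling out.

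The paper introduces a second tree code, the \randomtree, of degree $d!$: at each round every party samples a \emph{fresh, local} permutation and encodes that choice along the \randomtree, which the other party must decode alongside the \statetree. This forces the paper to prove potency for two trees (with the tighter parameter $\tfrac{1}{20}$ in Theorem~\ref{thm:mainB} to absorb errors in both), and to run the efficient-decoding analysis twice, once for each tree, in Theorem~\ref{thm:adaptiveEff}. Your construction replaces the entire \randomtree with a single up-front Shannon-coded transmission of $O(T)$ random bits from which both parties deterministically derive the same permutations $\tau_1,\ldots,\tau_N$. Since both parties then know $\tau_t$ exactly, there is no second decoding problem and the correctness analysis of Theorem~\ref{thm:mainA} carries over verbatim with its original $\tfrac{1}{10}$ potency budget. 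The key insight --- that $\tau_t$ is independent of everything determined through round $t-1$, so $\tau_t(i_t)$ is uniform and the traversed \statetree path is distributed exactly as in the non-adaptive case for \emph{every} $\pi$ chosen without knowledge of $R$ --- is the same one the paper uses, but you get it without paying for the \randomtree. This is a legitimate simplification; the paper itself already ships the \KTC seed via an initial Shannon block, so there is no modeling obstacle to shipping the permutation seed the same way.

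Two caveats. First, your sentence ``it permutes the \KTC edge-labels at that level \ldots and hence preserves the Hamming distance'' has the argument backwards: a genuine permutation of the edge labels would \emph{not} in general preserve pairwise Hamming distances. What you actually want (and what your earlier sentence ``moves to the child indexed by $\tau_t(i)$'' describes) is that the physical tree and its labels are \emph{unchanged}; only the semantic meaning of child indices is permuted, so the decoder is applied to the same coded object and potency is trivially preserved. State it that way. Second, the Chernoff step you sketch does not go through as stated: conditional on a fixed (good) \KTC, the variables $d^{L(t)}$ only have polynomial tails ($\Pr[d^{L(t)}>x]\lesssim x^{-\gamma/\log d}$), not sub-exponential ones, so a direct block-Chernoff on $\sum_t d^{L(t)}$ cannot give an exponentially small failure probability; you would need a truncation argument on $L(t)$ (e.g.\ cap at $O(\log T)$ and union-bound the overflow event). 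Note, however, that the paper itself only claims \emph{expected} efficiency in Theorem~\ref{thm:adaptiveEff}, so this is you aiming at a strictly stronger statement than the paper proves, not a hole relative to the paper's own argument.
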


As said above,
the expected runtime for the decoding process described in Section~\ref{sec:eff}
is no longer efficient in this case. The adversary can choose
the simulated protocol $\pi$ and  ``fix'' a path along the
\statetree (up to channel errors). As the decoding process is efficient in expectation,
the path fixed by the adversary might be a path that is inefficient to decode.
However, by adding randomness, we are able to change $\Sim$ such that
for any protocol $\pi$, the actual traversed path in the \statetree is fully random.
This is done by permuting the nodes of the \statetree
separately for each level. The users need to communicate which permutation
is used for each level, which is done by sending the specific permutation in use via
additional (\potent) tree code, which we denoted as the \randomtree.

Define the \randomtree to be a \KTC of degree $d!$ (for our case $d=12$).
For a specific node, each one of  the $d!$ children
denotes one of the possible permutations on $d$ values.
Each round, the user chooses a random permutation by randomly selecting
one of the children of his current position in the \randomtree (starting from the root).
Recall that, in the \statetree, each node has 12 children where each represents one of
$\{00\text{x}0$, $00\text{x}1$, $\ldots\}$. We can assume a fixed order, that is, the first child always
represents $00\text{x}0$, the second represents $00\text{x}1$, etc.
For a time $t$, assume the chosen permutation is $P_t$.
In our randomized simulation, the $i^{\text{th}}$ child in the
\statetree has the meaning  $P_t(i)$. For instance, the first node represents
one of the meanings $\{00\text{x}0$, $00\text{x}1$, $\ldots\}$, determined by $P_t(1)$.

The adapted scheme, Randomized-$\Sim$, is described in Figure~\ref{alg:protocolAdaptive}.
For Theorem~\ref{thm:mainAdaptive}, we assume that
both the \statetree and the \randomtree are $(\frac1{20},\alpha)$-\potent trees,
which can be achieved (with overwhelming probability)
by having a large enough (yet a constant) alphabet size.
As before, we assume each label is sent using an error correcting code, such that
the error probability per transmission is less then $\min (|S|^{-2}, 2^{-82/\alpha})$.
Such an error correcting rate imposes
a constant slowdown, according to Lemma~\ref{lem:shannon}.

\begin{figure}[htb]
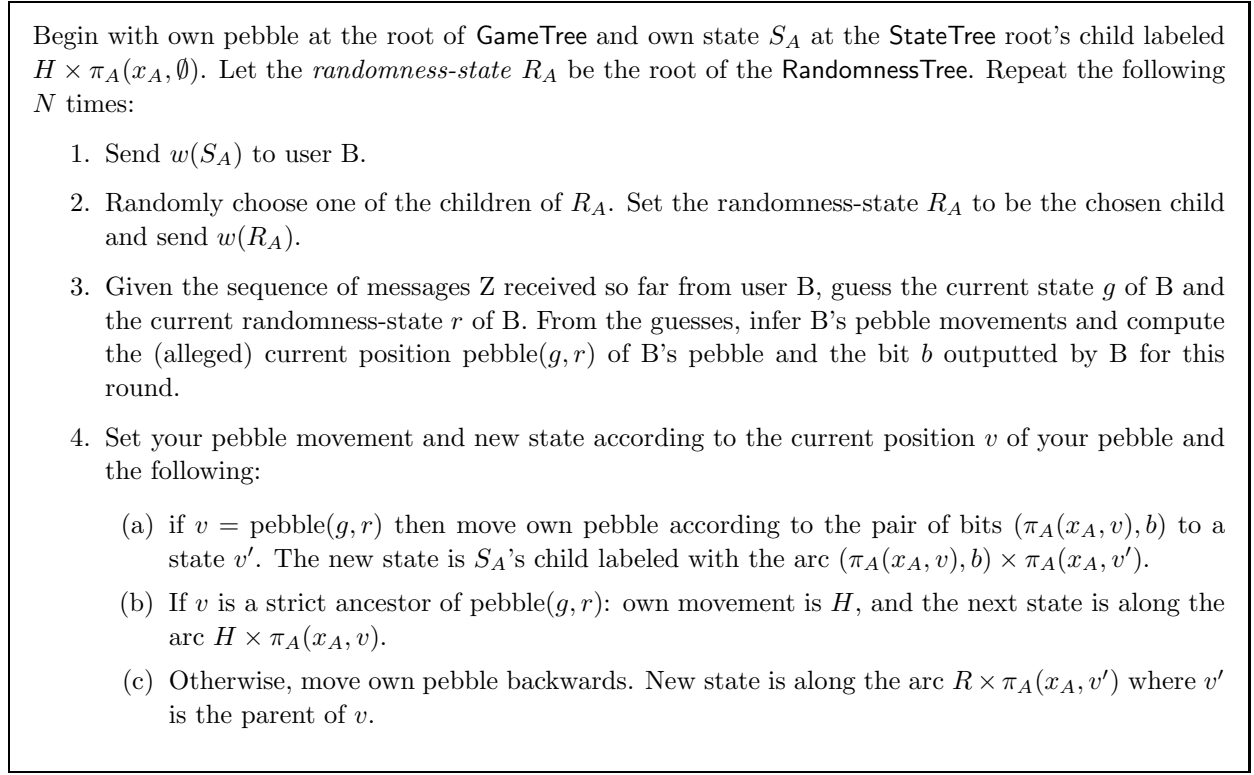

\begin{framed}
\small
Begin with own pebble at the root of \gametree\ and own state $S_A$ at the \statetree root's child labeled
$H \times \pi_A(x_A, \emptyset)$. Let the \emph{randomness-state} $R_A$ be the root of the \randomtree.
Repeat the following $N$ times:
\begin{enumerate}
\item Send $w(S_A)$ to user B.
\item Randomly choose one of the children of $R_A$. Set the randomness-state $R_A$ to be
the chosen child and send $w(R_A)$.
\item Given the sequence of messages Z received so far from user B, guess the
current state $g$ of B and the current randomness-state $r$ of B.
From the guesses, infer B's pebble movements and compute the
(alleged) current position pebble$(g,r)$ of B's pebble and the bit $b$ outputted by B for this round.
\item Set your pebble movement and new state
according to the current position $v$ of your pebble  and the following:
    \begin{enumerate}
    \item if $v={}$pebble$(g,r)$ then move own pebble according to the pair of bits
    $(\pi_A(x_A,v),b)$ to a state $v'$.
    The new state is $S_A$'s child labeled with the arc $(\pi_A(x_A,v),b)\times \pi_A( x_A, v')$.
    \item If $v$ is a strict ancestor of pebble$(g,r)$: own movement is $H$, and the
    next state is along the arc $H\times \pi_A(x_A,v)$.
    \item Otherwise, move own pebble backwards. New state is along the arc $R\times \pi_A(x_A,v')$
    where $v'$ is the parent of $v$.
    \end{enumerate}
\end{enumerate}
\end{framed}
\caption{Interactive protocol Randomized-$\Sim$ for
  simulating an adversarially chosen protocol $\pi$ over noisy channels}
\label{alg:protocolAdaptive}
\end{figure}

\vspace{0.5em plus 0.5em minus 0.2em}
Theorem~\ref{thm:mainAdaptive} immediately follows from the following theorems,
\begin{theorem}\label{thm:mainB}
Assume \randomtree and \statetree are $(\frac{1}{20},\alpha)$-\potent
trees for some $\alpha\in(0,1)$, and assume an oracle for tree-code decoding process,
then Randomized-$\Sim$ (Figure~\ref{alg:protocolAdaptive}) is an efficient simulation
of any protocol $\pi$ (that has $T$ rounds).
If any label is sent using a error correcting code with (label) error probability
$p$, then Randomized-$\Sim$
succeeds except with probability
$4\cdot2^N(2p^{\alpha/2})^{N/40}$ over the channel errors.
\end{theorem}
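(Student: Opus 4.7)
The plan is to mirror the proof of Theorem~\ref{thm:mainA} but now book-keep errors over \emph{two} tree codes (the \statetree and the \randomtree) and over both users, incurring an extra factor of two in the union bound and a corresponding factor-of-two loss in the exponent. First, I would invoke the mark analysis of Lemma~\ref{lem:moves}, which is independent of how the per-round state is encoded: a failed run of Randomized-$\Sim$ on $N=5T$ rounds must have more than $N/5$ bad moves. In the randomized scheme, however, a bad move at time $t$ now corresponds to at least one user wrongly guessing either the other user's \statetree-state or \randomtree-state, so every bad move can be attributed to a decoding error in at least one of four categories: (user, tree) with user $\in\{A,B\}$ and tree $\in\{\statetree,\randomtree\}$.

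Next I would use the $(1/20,\alpha)$-potency of each of the two trees. For every (user, tree) category, the path actually decoded during the simulation is a root-to-leaf path in a $(1/20,\alpha)$-potent tree, so the union of its $\alpha$-bad subpaths has total length at most $N/20$. Adopting the same worst-case assumption as in Theorem~\ref{thm:mainA} (every $\alpha$-bad node along a decoded path forces a bad move in its category), the $\alpha$-bad intervals can contribute at most $N/20$ bad moves per category. Since failure requires strictly more than $N/5$ bad moves in total, the extra bad moves must come from decoding errors at non-$\alpha$-bad nodes, and a pigeonhole argument over the four categories will force at least $N/20$ such extra decoding errors to appear in some single category. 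Applying Lemma~\ref{lem:intervals} then yields disjoint error intervals of total length at least $N/40$ within that one category.

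Then I would carry out the probability estimate category-by-category exactly as in Theorem~\ref{thm:mainA}, using the key fact (established inside Proposition~\ref{pro:errorProb}) that a decoding error of magnitude $\ell_i$ at a non-$\alpha$-bad node requires at least $\alpha\ell_i/2$ channel flips in the interval $[t-\ell_i,t]$, an event of probability at most $2^{\ell_i}p^{\alpha\ell_i/2}=(2p^{\alpha/2})^{\ell_i}$. Since the intervals in the selected set are disjoint, the corresponding channel events are independent (and here we also use that the distinct paths being decoded are themselves independent of the \RTBC/\KTC labels outside their own supports, as in the original argument), so the joint probability of a given pattern is bounded by $\prod_i (2p^{\alpha/2})^{\ell_i}\le (2p^{\alpha/2})^{N/40}$. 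Finally I would union-bound over at most $2^N$ patterns of disjoint intervals of total length at least $N/40$, and then over the four (user, tree) categories, obtaining the claimed bound $4\cdot 2^N (2p^{\alpha/2})^{N/40}$.

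The main obstacle I anticipate is step three: the attribution of bad moves to a single (user, tree) category has to be done carefully enough that the resulting error intervals are still well-defined relative to a single decoded path (so that the independence-of-disjoint-intervals argument still goes through) and so that no bad move is double-counted in a way that makes the pigeonhole subclaim fail. Everything else is a direct translation of the Theorem~\ref{thm:mainA} analysis, with the per-tree potency halved from $1/10$ to $1/20$ and the union bound inflated by an additional factor of two for the second tree.
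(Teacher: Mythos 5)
Your decomposition into four (user, tree) categories is a reasonable way to organize the argument, but the pigeonhole step in the middle does not go through as stated, and this is a genuine arithmetic gap. With four categories each able to contribute up to $N/20$ bad moves from $\alpha$-bad nodes, the $\alpha$-bad nodes alone account for up to $4\cdot N/20 = N/5$ bad moves, which exactly equals the failure threshold from Lemma~\ref{lem:moves}. Consequently ``failure requires strictly more than $N/5$ bad moves'' only yields that the number of bad moves at non-$\alpha$-bad nodes is positive, not that it is at least $N/5$; and without a surplus of at least $N/5$ you cannot apply pigeonhole over four categories to extract $N/20$ channel-error-induced decoding errors concentrated in one category. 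So the chain $N/20 \to$ (Lemma~\ref{lem:intervals}) $\to N/40$ that produces the exponent $(2p^{\alpha/2})^{N/40}$ never gets started.

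The paper's own proof avoids this by charging the $\alpha$-bad contributions per \emph{tree} rather than per (user, tree) pair: it fixes the paths traveled in the \statetree and the \randomtree, observes that each contributes at most $N/20$ $\alpha$-bad nodes for a total of at most $N/10$, deduces more than $N/10$ bad moves at non-$\alpha$-bad nodes, then splits those over just the two trees to get at least $N/20$ in one tree (and hence $N/40$ disjoint via Lemma~\ref{lem:intervals}); the union bound over the two users and two trees supplies the factor of $4$ at the end without it ever entering the bad-move counting. If you insist on the four-way attribution you describe, you would instead need to strengthen the potency hypothesis to $(\frac{1}{40},\alpha)$ (so that $4\cdot N/40 = N/10 < N/5$) or raise the failure threshold by a different choice of $N$; as written, the chosen constants do not support the pigeonhole you invoke. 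The remainder of your outline — the reduction of a bad move to a decoding error, the $(2p^{\alpha/2})^{\ell_i}$ per-interval bound at non-$\alpha$-bad nodes, independence of disjoint intervals, and the $2^N$ count of patterns — matches the paper's argument.
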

\begin{theorem}\label{thm:adaptiveEff}
For a given $\alpha\in(0,1)$, suppose that
the \randomtree and the \statetree are \RTBC{s} with
large enough (yet constant) alphabet size,  
and that $p < |S|^{-2}$,
then the decoding procedure (Section~\ref{sec:eff}) is
(i) efficient in expectation, and (ii) correctly decodes a node which is not $\alpha$-bad.
\end{theorem}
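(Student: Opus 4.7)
My plan is to dispatch the two parts separately, leaning on the analysis of Section~\ref{sec:eff} as much as possible. Part (ii) requires essentially no new argument: the claim that the decoding procedure returns the true transmitted node whenever that node is not $\alpha$-bad was established in Section~\ref{sec:eff} as a deterministic consequence of the tree's distance property together with the triangle inequality. That argument is oblivious to how the transmitted path was chosen, so it transports verbatim to the adaptive setting and applies to both the \statetree and the \randomtree decodings, provided each underlying \RTBC{} is $(\varepsilon,\alpha)$-\potent---which, by Theorem~\ref{thm:RTCisPotent}, fails only with probability $2^{-\Omega(N)}$ over the choice of the \RTBC{}s.

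The substance of the theorem is part (i). The obstacle is that in the adaptive setting the adversary picks $\pi$ knowing the tree, so in the absence of extra randomization they could pin the transmitted path $v$ in the \statetree onto a region that is expensive to decode; the efficiency argument of Section~\ref{sec:eff} crucially used the randomness of $W_l(v)$ relative to $W_l(g_t)$, which no longer holds when $v$ is adversarial and the labels are fixed. My plan is to recover the needed randomness from the \randomtree, which is precisely why it was introduced. For the \statetree, I will argue inductively on $t$ that, conditional on the entire past (previous pebbles, received bits, noise, and tree labels traversed so far), the fresh permutation $P_t$ drawn at round $t$ is uniform and independent of everything else, so the edge actually taken by the user in the \statetree is uniform over its $12$ children. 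Iterating, the depth-$t$ node $v$ visited is uniform over all $12^t$ depth-$t$ nodes, independently of the \RTBC{} labels. For the \randomtree the analogous uniformity is immediate by construction, since in step~2 of Randomized-$\Sim$ the user literally picks a uniformly random child at each round.

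With uniformity of the traversed path established, I can re-run the computation of Section~\ref{sec:eff} with the two randomness sources effectively swapped: instead of fixing $v$ and averaging over random labels, I fix the (adversarially chosen) labels and average over the random $v$. For any fixed guess $g_t$, the suffix $W_l(v)$ is distributed as a uniform element of $S^l$, because the $l$ edges below $v$'s depth-$(t-l)$ ancestor carry independent uniform labels in an \RTBC{} and $v$ is uniform among the $d^l$ descendants. Consequently $\Pr[\Delta(W_l(v),W_l(g_t))=j]$ satisfies the same binomial bound used in Section~\ref{sec:eff}, and combined with the channel-noise tail bound that triggers the event $L(t)=l$ this yields $\Pr[L(t)=l]\le 2^{-\gamma l}$ for $|S|$ a sufficiently large constant (with $p<|S|^{-2}$). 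The expected-runtime identity $\sum_{t=1}^N E\bigl[d^{L(t)}\bigr]=O(N)$ then follows for both the \statetree and the \randomtree decodings, giving the claimed efficiency in expectation.

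The part I expect to require the most care is formalizing the independence of the per-round permutation from the entire past. Since $g_t$ is a function of the noisy transcript---which itself depends on earlier permutations, noise, and the \RTBC{} labels on edges traversed so far---the conditioning has to be set up so that the freshly drawn $P_t$ really is uniform and independent of the tree labels on the unexplored portion of the tree; it is those labels that supply the randomness used in the bound on $\Pr[L(t)=l]$. Once this subtlety is handled cleanly via a careful induction on the round number, both parts of the theorem follow.
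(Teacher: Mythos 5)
Your proof is essentially the paper's: both exploit the per-round permutation to reduce to the Section~\ref{sec:eff} analysis, observing that the traversed path in each tree is a uniformly random walk independent of the \RTBC labels, and part (ii) is inherited directly from Section~\ref{sec:eff}. You make explicit the conditional-uniformity induction that the paper leaves as a one-line assertion (``choosing a random path in the \randomtree yields a random walk on the \statetree''), which is worth spelling out. One small framing caveat: the phrase ``fix the adversarially chosen labels and average over the random $v$'' slightly misstates what you actually do --- the bound on $\Pr[\Delta(W_l(g_t),W_l(v))=j]$ still comes from the \RTBC label randomness (as you acknowledge in the very next clause), and the role of the random $v$ is only to guarantee that conditioning on the traversed path does not bias the label distribution; that independence is exactly what lets you reuse the Section~\ref{sec:eff} computation unchanged, rather than a genuine ``swap'' of which source of randomness is averaged over.
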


We now show that the simulation succeeds except with an exponentially small probability.
Note that when both the \statetree and \randomtree are correctly decoded,
the user recovers the position of the other user's pebble and the move
is successful (in the notion of Lemma~\ref{lem:moves}).
However, if for a time $t$ there is an error either in the \statetree or in the \randomtree,
the move is a bad move.

\begin{proof}\textbf{(Theorem~\ref{thm:mainB})}.
Assume that a simulation failed, which means more than $N/5$ bad moves have occurred.
We fix the path traveled in both trees,
and show the probability of having this many errors is exponentially small.
Since our trees are $(\frac1{20},\alpha)$-\potent, each path includes bad intervals of total length
at most $N/20$, which can contribute
towards at most $N/10$ bad moves, for both trees.

The rest of the errors have occurred in nodes which are not $\alpha$-bad, and for at least
one of the trees, the number of such errors is at least $N/20$. Consider that specific tree.
Each such an error is associated with error interval of length $l_i>1$, such that
the length of the union of the intervals is at least $N/20$.
Using Lemma~\ref{lem:intervals}, we know there exist disjoint intervals of total length
at least $N/40$ along the same path.
Each error interval of length $l_i$ corresponds to an error of magnitude $l_i$, and
since the transmitted node is not $\alpha$-bad, these errors can only be caused by channel errors.
As explained above (see proof of Theorem~\ref{thm:mainA}),
the errors which correspond to these disjoint intervals are independent,
and the probability that a fixed specific pattern of disjoint intervals
of total length $l$ jointly occur
is at most $(2p^{\alpha/2})^{l}$, where $p$ is the label error probability.

The probability
of having any error pattern of total length at least $N/40$
is given by the union bound, summing over
all possible error patterns in both tree, and over each one of the users. There are
at most $2\cdot 2\sum_{j=N/40}^N{N \choose j}\le 4\cdot2^N$ such combinations, and
the probability is bounded by $4\cdot2^N(2p^{\alpha/2})^{N/40}$, which is exponentially small for
$p < 2^{-82/\alpha}$.
\end{proof}

We now show that the decoding procedure described in Section~\ref{sec:eff} is efficient in expectation, for both the trees.
\begin{proof} \textbf{(Theorem~\ref{thm:adaptiveEff})}.
The Randomized-$\Sim$ scheme makes a random walk\footnote{
    To be more accurate,
    it is a random walk from the root to a leaf, where the depth can only increase.}
on the \randomtree.
Following the analysis of Section~\ref{sec:eff},
we can choose a constant-size $S$ such that
the probability of exploring at time $t$ a subtree of depth $L(t)=l$
is bounded by $2^{-\gamma l} < (d!)^{-l}$
(when decoding a \randomtree node,
using the decoding procedure described in Section~\ref{sec:eff}).
Specifically, let
$|S| > (8d!)^{1/(1-\alpha/2)}$,
and require $p<|S|^{-2}$.
The expected time for decoding
the \randomtree during the simulation is given by
\[
O\bigg(  \sum_{t=1}^N E \Big[ (d!)^{L(t)}\Big]\bigg) =
O\bigg(  \sum_{t=1}^N\sum_{l=0}^t \Big[ 2^{-\gamma l}(d!)^{l}\Big]\bigg) =
O\bigg(  \sum_{t=1}^N \frac{2^\gamma}{2^{\gamma} -(d!)} \bigg) = O(N)\text{.}
\]

For a fixed protocol $\pi$, each path of the \randomtree
defines a corresponding path in the \statetree. Since the \randomtree contains all
possible permutations
for the \statetree nodes, choosing a random path in the \randomtree yields a random walk on the
\statetree. In the same manner as above,
the expected time for decoding the \statetree is efficient.

Property (ii) has been proven in Section~\ref{sec:eff}, and holds for this case as well.
\end{proof}

\subsection{Simulating $n$-Party Protocols}\label{sec:multiparty}

In this section we extend our result to support a simulation
of a protocol $\pi$ with any number $n$ of users.
This is done by incorporating the tools described in the previous sections
with the method of simulating an $n$-party protocol over a disturbed channel
developed by Rajagopalan and Schulman~\cite{RS94}.
The paper~\cite{RS94} shows that a scheme for simulating multiparty protocol
over a disturbed channel
{\em exists},
yet the question of its efficient implementation has been open since 1994.
The Scheme presented in~\cite{RS94}
obtains a communication dilation of $O(\log r)$ where $r$ is the maximal connectivity degree,
that is, the maximal number of parties connected to a specific user.

Rajagopalan and Schulman, in their work~\cite{RS94},
describe how to adapt the 2-party simulation of~\cite{schulman96}
to an arbitrary number of users. The key idea is to replace the 12-ary \statetree
with a ternary tree (that is, $d=3$),
where each node has three child nodes marked with $\{0,1,bkp\}$.
The values $0$ and $1$ indicate the output bit of the user in the simulated round,
and $bkp$ indicates that the last simulated round is suspected to be invalid
and should be  deleted and re-simulated.
The simulation (described here for a specific user $i$)
is completely defined by the following process. Each round, the user uses all the previous communications to infer the current simulated round of $\pi$ and sends his output bit to user $j$ (by communicating the label assigned with arc to child  $0$ or $1$ respectively,
in the ternary \statetree shared between users $i$ and $j$).
If the user finds an inconsistency, he transmits
$bkp$ which denotes deleting the last received (undeleted) bit and rolling the protocol $\pi$ one step back. The user shares such a ternary tree with each of the $r$ parties connected to him, and is allowed
to output a different bit to each party. Yet, when the user decides to roll back he outputs
$bkp$ on each of the outgoing links. Inconsistency is defined as one of the two following cases:
(1) the current decoded transcript of the \statetree disagrees with the bits sent so far, or (2)
the user received $bkp$ from one of his neighbors.
We refer the reader to~\cite{RS94} for a complete description and analysis of this scheme.

One can easily check that the bulk of the analysis performed in~\cite{RS94} applies
for the case of replacing the good ternary tree code with a ternary \KTC (or \RTBC).
The analysis is composed of two parts. The first part shows that if after
$t$ rounds the scheme simulates step $t-l$ of $\pi$ then at least
$l/2$ errors have occurred in decoding the correct tree-node during
the \emph{history cone} of the user at time $t$ (i.e., all the transmissions
that affect the user state at time $t$). The other part
bounds the probability of having a constant fraction of errors (out of the number of rounds).
While the first part is completely independent of the
fact that we replace the good tree code with a \KTC,
in order to complete the proof, we must adapt the second part to the usage of \KTC. This is done by
Lemma~\ref{lem:raja} below.

Let us formally describe these two parts.
We begin by defining the notion of the history cone~\cite{RS94}.
Let $(p,t)$ denote a user $p$ at time $t$.
\begin{definition}[\cite{RS94}]
$(p,t)$ and $(p',t')$ are \textbf{time-like} if messages sent by user $p$ at time $t$ has an affect on
the computation of user $p'$ at time $t'$ (or vice versa).
\end{definition}
That is, $(p,t)$ and $(p,t')$ are always time-like,
and $(p,t)$ and $(q,t+1)$ are time-like if $p$ and $q$ are neighbors.
\begin{definition}[\cite{RS94}]
A \textbf{$t$ time-like path} is a sequence $\{ (p_i,i)\ |\ 1\le i\le t\}$
such that  any two elements in the path are time-like
(i.e., for every $i$, $p_i$ and $p_{i+1}$ are either neighbors or the same party).
\end{definition}

The proof of~\cite{RS94} follows from the next two lemmas
\begin{lemma}[Lemma (5.1.1) of~\cite{RS94}]\label{lem:rsOne}
If a user $p$ at time $t$ has successfully simulated only the first $t-l$ rounds of $\pi$,
then there is a $t$ time-like sequence that ends at $(p,t)$ and
includes at least $l/2$ tree-decoding errors.
\end{lemma}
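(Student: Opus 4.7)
The plan is to mirror the structure of Lemma~\ref{lem:moves} from the two--party case, now extended to the multiparty setting via a ``chain of blame'' argument that yields a single time-like sequence along which the errors accumulate. First, I would introduce, for each $(q,s)$, a potential $\mu(q,s)$ measuring the depth of progress along $\pi$ that user $q$ believes to have completed at time $s$ (essentially the pebble depth, adjusted for pending rollbacks). In the ideal case with no decoding errors anywhere in the cone, $\mu(q,s) = s$, so the quantity $s - \mu(q,s)$ is a ``deficit.'' The lemma claims that if the deficit at $(p,t)$ is $l$, then we can find a $t$-long time-like path ending at $(p,t)$ on which at least $l/2$ decoding errors occur.

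Next, I would analyze how the deficit evolves along a single user's timeline. At each step, user $q$ decodes the incoming \statetree labels from each neighbor. If \emph{all} relevant decodings are correct and no neighbor is in a rollback state, then either a new round of $\pi$ is simulated (deficit unchanged) or a legitimate $bkp$ consistent with $\pi$'s history propagates (again no change in deficit). A decoding error at $(q,s)$ — either made by $q$ or inherited through a $bkp$ from an erroneous neighbor — can cause $q$ to advance into a fictitious state and then roll back, which is the only mechanism that makes $s - \mu(q,s)$ strictly grow. A short case analysis shows the per-error increment is at most $2$: one unit for the spurious forward step and one for the subsequent backup.

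Then I would construct the desired time-like path by backward induction. Start at $(p,t)$; at each time $s$, look at where the current ``blame'' for the deficit came from. If the deficit increased at $(p_s, s)$ because $p_s$ itself made a decoding error, record that error and move one step back along $p_s$'s timeline to $(p_s, s-1)$. If instead the increase was due to a $bkp$ received from a neighbor $q$, hop to $(q, s-1)$; by definition of the protocol and of time-like paths, this step is legal. Since each event on this traced path accounts for at most $2$ units of deficit, and the total deficit is $l$, the path must contain at least $l/2$ recorded decoding errors.

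The main obstacle is the bookkeeping around $bkp$ propagation: a single erroneous $bkp$ can fan out to many neighbors, and the risk is double-counting — attributing the same error to multiple time-like paths and thereby inflating the count artificially, or conversely having a chain in which two separate deficits on a neighbor's side are merged and lost. The traceback step above is designed precisely to avoid this: one follows a single back-pointer at each step so that each error is charged at most once, and one must verify that such a back-pointer always exists, i.e., that whenever the deficit strictly grows at $(p_s, s)$, there is a well-defined time-like predecessor causing that growth. This is essentially a causality argument on the protocol transcript, and once it is set up carefully the inductive bound $\text{deficit} \le 2 \cdot (\#\text{errors on the traced path})$ follows immediately, giving the lemma.
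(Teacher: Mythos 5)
The paper does not prove this statement. It is stated as Lemma~(5.1.1) of Rajagopalan--Schulman~\cite{RS94} and used as a black box; the text explicitly observes that Lemma~\ref{lem:rsOne} ``holds regardless of the tree in use,'' and the authors only reprove the companion error-counting lemma (Lemma~\ref{lem:rsTwo}, replaced by Lemma~\ref{lem:raja}) to accommodate potent tree codes. So there is no in-paper proof to compare your attempt against.

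With respect to the underlying~\cite{RS94} argument, your reconstruction has the correct shape: a per-user deficit, a $2$:$1$ trade between deficit growth and decoding errors (one wasted advance plus one rollback step per error), and a backward traceback through the history cone producing a single time-like path that concentrates the errors. The difficulty you flag---well-definedness of the back-pointer and avoidance of double-counting when a $bkp$ fans out---is indeed the crux. One caution on the bookkeeping: your inductive invariant must compare the deficit at $(p_s,s)$ to the deficit at the \emph{chosen predecessor} $(p_{s-1},s-1)$ on the time-like path, which may sit on a different user's timeline. The local bound ``deficit increases by at most $2$ per error'' holds along a single user's timeline, but it does not directly transfer across a hop to a neighbor; you additionally need a cross-user monotonicity claim, namely that whenever a $bkp$ from neighbor $q$ forces $p$ to roll back, $q$'s own deficit at time $s-1$ already accounted for at least as much. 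That claim, not merely the local $2$:$1$ bound, is what closes the induction, and your sketch leaves it implicit.
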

\begin{lemma}[Lemma (5.1.2) of~\cite{RS94}]\label{lem:rsTwo}
Using error correcting codes with dilation $O(\log(r+1))$,
the probability that any fixed t time-like path has more than
$t/4$ tree-decoding errors, is less than
$\frac{1}{(2(r+1))^t}$.
\end{lemma}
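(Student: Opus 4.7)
My plan is to mirror the proof of Theorem~\ref{thm:mainA} in the multi-party setting. First I would fix a $t$ time-like path $\{(p_i,i)\}_{i=1}^t$, and classify each tree-decoding error along this path into two types: (a) errors caused by the transmitted node being $\alpha$-bad in its \KTC, and (b) genuine decoding errors caused by channel noise on non-$\alpha$-bad nodes. Choosing the potency parameter $\eps$ small enough (say $\eps = 1/16$), I would argue that the total number of type-(a) errors along the time-like path is at most $t/8$: within each \KTC visited by the path, the decoding positions form a root-to-descendant sub-walk, and the potency property caps the total length of $\alpha$-bad sub-paths lying inside any such walk. Hence if the path sustains more than $t/4$ tree-decoding errors in total, more than $t/8$ of them must be of type (b).

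For type-(b) errors I would replay the argument of Theorem~\ref{thm:mainA} and Proposition~\ref{pro:errorProb}. An error of magnitude $l_i$ at a non-$\alpha$-bad node requires at least $\alpha l_i/2$ channel errors in the $l_i$ transmissions preceding it. Invoking Lemma~\ref{lem:intervals}, I extract a disjoint subset of error intervals whose total length is at least $t/16$. The $2N$-wise almost-independence of \KTC labels (Corollary~\ref{cor:k-independence}), combined with memorylessness of the BSC, makes the events on disjoint intervals (near-)independent. A fixed disjoint pattern of total length $L \ge t/16$ therefore occurs with joint probability at most $(2p^{\alpha/2})^{L}$, and a union bound over the at most $2^t$ such patterns yields an overall bound of the form $2^t (2p^{\alpha/2})^{t/16}$.

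Finally, I would tune the error-correcting code via Lemma~\ref{lem:shannon}: with dilation $\Theta(\log(r+1))$, the per-label error probability $p$ can be driven down to $(r+1)^{-c}$ for any prescribed constant $c$. Choosing $c$ large enough so that $2 \cdot (2p^{\alpha/2})^{1/16} \le 1/(2(r+1))$ reduces the above bound to at most $1/(2(r+1))^t$, matching the claim.

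The principal obstacle I anticipate is the bookkeeping for type-(a) errors. The potency property bounds $\alpha$-bad positions per tree, but the time-like path can interleave visits across many different \KTC{}s, so one must carefully sum the per-tree $\alpha$-bad budgets while exploiting that the total number of tree-decoding events along a length-$t$ time-like path is at most $t$; this is the step that dictates how small $\eps$ must be chosen. Once this accounting is in place, the remainder of the proof is a direct transplant of the single-tree analysis from Theorem~\ref{thm:mainA}.
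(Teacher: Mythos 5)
The paper does not actually prove Lemma~\ref{lem:rsTwo}; it cites it from~\cite{RS94}, where it is stated for \emph{good} tree codes. What the paper proves is the variant for \potent trees, namely Lemma~\ref{lem:raja} together with its corollary, and that is the argument you have reproduced. Structurally your outline matches the paper's Lemma~\ref{lem:raja} proof closely: split decoding errors into those forced by $\alpha$-bad nodes and those caused by channel noise, pass to disjoint error intervals via Lemma~\ref{lem:intervals}, exploit (near-)independence across disjoint intervals, union-bound over interval patterns, and finally tune the per-label error $p$ via the error-correcting code so the bound drops below $(2(r+1))^{-t}$. The last step is the mechanism by which the $O(\log(r+1))$ dilation enters, exactly as in the paper's corollary following Lemma~\ref{lem:raja}.

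The genuine gap is in your type-(a) accounting, and it is precisely the obstacle you flag but do not resolve. You propose $\eps = 1/16$ and claim the total number of type-(a) errors along the time-like path is at most $t/8$. But the time-like path interleaves up to $n$ distinct users, and each user's decoding walk lives in its own tree, each contributing a per-path $\alpha$-bad budget of up to $\eps N$. With $\eps = 1/16$, the aggregate budget is up to $n \cdot N/16$, which exceeds $t/8$ as soon as $n > 2$. The paper handles this by taking the potency parameter to be $\eps = \tfrac{1}{16n}$ (see Lemma~\ref{lem:raja}, which assumes $(\tfrac{1}{16n},\alpha)$-\potent trees); summing the per-user budgets then gives a total $\alpha$-bad contribution of $n \cdot \tfrac{N}{16n} = N/16$, leaving at least $N/8 - N/16 = N/16$ errors on non-$\alpha$-bad nodes, which is what makes the subsequent disjoint-interval bound of $(2p^{\alpha/2})^{N/16}$ go through. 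With the potency parameter corrected to scale as $1/n$, the rest of your argument is sound and coincides with the paper's. Note also that for the literal cited statement over \emph{good} tree codes, the type-(a) class is empty and this bookkeeping disappears entirely, making your argument a strict generalization.
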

The proof of the multiparty case is given by setting $t=N=2T$.
The first lemma states that if the simulation failed
(the first $N/2$ rounds of $\pi$ are not valid for some user)
then there must exist one user who has $N/4$ errors along one of his $N$ time-like sequences.
The probability of this event is bounded by the Lemma~\ref{lem:rsTwo} to be
less than $\frac{1}{(2(r+1))^t}$ summed over
all the $N(r+1)^N$ possible time sequences, which is bounded by $N2^{-N}$.

While the above Lemma~\ref{lem:rsOne} holds regardless of the tree in use,
we prove a variant of the above Lemma~\ref{lem:rsTwo}
for the case of using a \potent tree.
Moreover, although Lemma~\ref{lem:rsTwo} holds for any time $1\le t\le N$, only $t=N$ is required
for completing the proof for the multiparty case, which we prove in the following lemma.

\begin{lemma}\label{lem:raja}
Suppose each two users share a $(\frac{1}{16n},\alpha)$-\potent tree, for some
 $\alpha\in(0,1)$. If an error correcting code with  label error probability $p$ is used,
then for any fixed $N$ time-like path, the probability that there are more than
$N/4$ tree-decoding errors
is bounded by $(2^{17}p^{\alpha/2})^{N/16}$,
over the errors of the channel.
\end{lemma}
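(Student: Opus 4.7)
The plan is to follow the same strategy used in the two-party proof of Theorem~\ref{thm:mainB}, adapted to the time-like path framework of Rajagopalan and Schulman. Fix an arbitrary time-like path $\mathcal{P} = \{(p_i, i): 1 \le i \le N\}$ and consider the event that more than $N/4$ tree-decoding errors occur on $\mathcal{P}$. I would partition these errors into \emph{defect errors}, which occur when the transmitted node is $\alpha$-bad in the relevant shared \potent tree, and \emph{noise errors}, which happen at non-$\alpha$-bad nodes and therefore require genuine channel corruption of at least $\alpha \ell/2$ symbols over the last $\ell$ transmissions --- exactly as in the proof of Theorem~\ref{thm:mainA}.

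For each shared tree consulted along $\mathcal{P}$, the $(\tfrac{1}{16n},\alpha)$-\potent property bounds the union of bad subpaths along the relevant root-to-leaf walk by $\tfrac{N}{16n}$. Since the time-like path is a walk on the communication graph of the $n$ parties, it involves $O(n)$ distinct shared trees (one per edge it actually traverses); summing the per-tree bound yields a total defect-error contribution of at most $N/8$. Hence, conditioned on the bad event, the noise errors number strictly more than $N/4 - N/8 = N/8$.

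I would then apply Lemma~\ref{lem:intervals} to the noise-error intervals to extract a subfamily of disjoint intervals whose total length is at least $N/16$. Since disjoint intervals are supported on disjoint time steps, the corresponding channel-error events are independent, and for a fixed pattern $\{\ell_i\}$ of disjoint intervals of total length $L \ge N/16$ the joint probability factors as
\[
\prod_i \binom{\ell_i}{\lceil \alpha\ell_i/2\rceil}\, p^{\alpha\ell_i/2} \;\le\; \prod_i (2p^{\alpha/2})^{\ell_i} \;=\; (2p^{\alpha/2})^L .
\]
A union bound over the at most $\sum_{j=N/16}^{N} \binom{N}{j} \le 2^N$ possible placements of such a disjoint pattern along $\mathcal{P}$ then gives
\[
\Pr[\text{more than } N/4 \text{ decoding errors on } \mathcal{P}] \;\le\; 2^N \cdot (2p^{\alpha/2})^{N/16} \;=\; (2^{17}p^{\alpha/2})^{N/16},
\]
which is the stated bound.

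The main obstacle is the bookkeeping step that charges defect errors to shared trees: one has to verify that along a single time-like path the number of distinct \potent trees whose bad intervals can contribute is $O(n)$, so that the per-tree bad-interval length $\tfrac{N}{16n}$ sums to at most $N/8$ overall. Once this accounting is in place, the rest of the argument is essentially a rerun of the two-party computation from Theorem~\ref{thm:mainB}.
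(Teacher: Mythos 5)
Your proposal follows the same overall strategy as the paper's proof --- fix a time-like path, split decoding errors into those at $\alpha$-bad nodes versus those requiring genuine channel corruption, extract a disjoint subfamily of error intervals via Lemma~\ref{lem:intervals}, use independence of the disjoint intervals, and close with a union bound over $2^N$ patterns. The final numbers ($N/16$ disjoint noise length, bound $(2^{17}p^{\alpha/2})^{N/16}$) also match.

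The one place you depart from the paper, and where your argument has a genuine gap, is the bookkeeping that bounds the total defect contribution. You charge $\tfrac{N}{16n}$ to each \emph{shared tree} that the time-like path touches and then assert that only ``$O(n)$'' trees (in fact you need at most $2n$) can appear, concluding defect $\le N/8$. This is not established, and in general it is false: a time-like path on $n$ parties can traverse up to $\binom{n}{2}$ distinct edges of the communication graph, in which case the per-tree charges sum to $\binom{n}{2}\cdot\tfrac{N}{16n} = \Omega(N)$, which swamps the $N/4$ you are trying to beat. The paper avoids this entirely by organizing the accounting \emph{per user} rather than per tree: for each of the at most $n$ users $i$ it forms the total error-interval length $l_i$ attributed to $i$ along the path, passes to disjoint intervals of length $\ge l_i/2$, subtracts a single $\tfrac{N}{16n}$ for that user's defect contribution, and then sums $k_i = \max\{0, l_i/2 - \tfrac{N}{16n}\}$ over $i$ to get $\ge N/8 - N/16 = N/16$. (You also reverse the order --- defect-then-disjointify versus the paper's disjointify-then-subtract --- but both orderings yield the same $N/16$ and that part is fine.) To repair your proof you should adopt the per-user accounting, which makes the number of $\tfrac{N}{16n}$ charges exactly $n$ by construction rather than relying on an unproven bound on the number of trees the path visits.

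One small further nit: the per-interval probability should be written as $\sum_{j\ge\alpha\ell_i/2}\binom{\ell_i}{j}p^j(1-p)^{\ell_i-j}\le 2^{\ell_i}p^{\alpha\ell_i/2}$ rather than the single term $\binom{\ell_i}{\lceil\alpha\ell_i/2\rceil}p^{\alpha\ell_i/2}$, since the event is ``at least $\alpha\ell_i/2$ channel errors''; your final bound $(2p^{\alpha/2})^{\ell_i}$ is nevertheless correct.
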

\begin{proof}
We assume an oracle for the decoding process, which can easily be replaced
by the \emph{efficient decoding} procedure
given in Section~\ref{sec:eff}, if we use a \KTC.
Assume that at least $N/4$ errors have occurred in a specific $N$ time-like path.
Fix a specific user $i$ and assume that the errors of this user
are included in error intervals of total length $l_i$.
By Lemma~\ref{lem:intervals}, there exist disjoint error intervals of total length at least $l_i/2$.
Recall that each error interval of length $\ell$ corresponds to an error of magnitude $\ell$,
and recall that in each tree, at most $N/16n$ of the nodes are $\alpha$-bad.
Thus, at least $k_i\equiv\max\{0, l_i/2-N/16n\}$ of the errors of user $i$ in the $N$ time-like path
occur in nodes which are not $\alpha$-bad.
These errors can only be originated due to channel errors\footnote{This claim also applies
    to the efficient decoding procedure, as it
    always returns the node that minimized the Hamming distance,
    if it is not $\alpha$-bad.
    See the proof of Theorem~\ref{thm:mainA}
    and discussion in Section~\ref{sec:eff}.},
and since the intervals are disjoint, they are independent.
As above (see proof of Theorem~\ref{thm:mainA}),
the probability of having errors that correspond to these (fixed) disjoint error intervals
is bounded by $2^{k_i}p^{\alpha k_i/2}$.
Clearly, tree-decoding errors of a specific user are independent
of the communication (and channel errors) of other users.
It follows that the probability for all the users to have
a total amount of $N/4$
errors matching the fixed intervals pattern is bounded by
$(2p^{\alpha/2})^{\sum_{i}k_i}$.
With $\sum_i l_i >N/4$ and at most $n$ users,
this probability is bounded by $(2p^{\alpha/2})^{N/8-n(N/16n)}=(2p^{\alpha/2})^{N/16}$.

Using a union bound we sum the probability over any number $j\ge N/4$ of errors  and
over any one of the  ${N \choose j}$
different ways to distribute $j$ errors along the fixed time-like path.
The probability that there are at least $N/4$
errors in this fixed $N$ time-like path is bounded by
\[
\sum_{j=N/4}^N {N \choose j}(2p^{\alpha/2})^{j/2-N/16} \le  (2^{17}p^{\alpha/2})^{N/16}\text{ .}
\]
\end{proof}
\noindent For  $p<(5(r+1))^{-32/\alpha}$,
this probability is at most $\frac{1}{(2(r+1))^N}$.
\begin{corollary}
Suppose each two users\footnote{The same tree can be used by all the users.} share a \KTC
with $|S|\ge ((2d)^{32n+2})^{1/(1-\alpha)}$ for some $\alpha\in(0,1)$,
and use an error correcting code with
(label) error probability less than $p\le(5(r+1))^{-32/\alpha}$.
Then, except with probability $2^{-\Omega(N)}$ over the choice of the \KTC,
for any fixed $N$ time-like path,
the probability that there are more than $N/4$ tree-decoding errors
is less than $\frac{1}{(2(r+1))^N}$
over the the errors of the channel.
\end{corollary}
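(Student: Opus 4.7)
The plan is to assemble the Corollary from two ingredients already developed in the paper: Proposition~\ref{prop:KTCisPotent} (a random \KTC is potent with overwhelming probability) and Lemma~\ref{lem:raja} (once the trees are potent, a fixed time-like path has few decoding errors with very high probability). No new structural ideas are required; the work is to choose the parameters in Proposition~\ref{prop:KTCisPotent} so that its hypothesis matches Lemma~\ref{lem:raja}, and then to verify the arithmetic that turns the bound of Lemma~\ref{lem:raja} into $1/(2(r+1))^N$ under the hypothesis $p\le (5(r+1))^{-32/\alpha}$.

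First, I apply Proposition~\ref{prop:KTCisPotent} with the choice $\varepsilon = 1/(16n)$. The proposition then requires alphabet size exceeding $(2d)^{(2+2/\varepsilon)/(1-\alpha)}= (2d)^{(32n+2)/(1-\alpha)}$, which is exactly the hypothesis of the Corollary. Thus, except with probability $2^{-\Omega(N)}$, the shared \KTC (one tree, used by all pairs per the footnote; if multiple trees are in use, a union bound over the $\binom{n}{2}$ pairs costs only a constant-factor loss in the exponent) is $(\tfrac{1}{16n},\alpha)$-potent. I condition on this event from here on.

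Second, conditioned on potency, Lemma~\ref{lem:raja} directly applies: for any fixed $N$ time-like path, the probability (over the BSC) of more than $N/4$ tree-decoding errors is at most $(2^{17}p^{\alpha/2})^{N/16}$. It remains to show that $p \le (5(r+1))^{-32/\alpha}$ implies $(2^{17}p^{\alpha/2})^{N/16}\le 1/(2(r+1))^N$. Taking both sides of the target inequality to the power $16/N$ and then to the power $2/\alpha$, it suffices to verify that $\bigl(2^{17}\bigr)^{2/\alpha}\cdot (2(r+1))^{32/\alpha} \le (5(r+1))^{32/\alpha}$, i.e., that $(5/2)^{32/\alpha} \ge 2^{34/\alpha}$. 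Since $(5/2)^{32}\approx 2^{32\log_2(5/2)} \approx 2^{42.3}\gg 2^{34}$, the inequality holds with room to spare for every $\alpha\in(0,1)$. This is precisely the bound noted immediately after Lemma~\ref{lem:raja}.

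Combining the two steps yields the Corollary: the first contributes the ``except with probability $2^{-\Omega(N)}$ over the choice of the \KTC'' clause via Proposition~\ref{prop:KTCisPotent}, and the second gives the conclusion ``for any fixed $N$ time-like path, the probability of more than $N/4$ tree-decoding errors is less than $1/(2(r+1))^N$'' via Lemma~\ref{lem:raja}. There is no genuine obstacle here; the only mild care needed is to track that the choice $\varepsilon=1/(16n)$ made to feed Lemma~\ref{lem:raja} is exactly what forces the $32n+2$ in the exponent of the alphabet-size bound, and that the constant $5$ in $(5(r+1))^{-32/\alpha}$ is large enough to absorb the $2^{17}$ factor that appears in Lemma~\ref{lem:raja}.
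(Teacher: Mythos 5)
Your proof is correct and follows the paper's own route: instantiate Proposition~\ref{prop:KTCisPotent} with $\varepsilon=1/(16n)$ (giving exactly the $(2d)^{(32n+2)/(1-\alpha)}$ alphabet bound), then apply Lemma~\ref{lem:raja} and verify that $p\le(5(r+1))^{-32/\alpha}$ makes $(2^{17}p^{\alpha/2})^{N/16}\le(2(r+1))^{-N}$, which reduces to $5^{32}\ge 2^{66}$ and holds with slack. The only cosmetic difference is that you spell out the arithmetic that the paper leaves as a one-line remark after Lemma~\ref{lem:raja}, and you add a (harmless, unnecessary given the footnote) union-bound caveat for the multi-tree case.
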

That is, with $|S|\ge ((2d)^{32n+2})^{1/(1-\alpha)}$ the \KTC is $(\frac1{16n},\alpha)$-\potent, with
overwhelming probability, due to Proposition~\ref{prop:KTCisPotent}.
Each label in an alphabet of size $|S|$ requires $\log |S|=O(n)$ bits.
Due to Lemma~\ref{lem:shannon}, we can use an error correcting code
such that each transmission is $O(n)$ and the label error probability is
less than the required $(5(r+1))^{-32/\alpha}$. Specifically,
for efficient decoding we require $p  < |S|^{-2}$,
which can be done with code of length
$O(n)$ as well.
The above lemma replaces
Lemma 5.1.2 of~\cite{RS94}, and
leads to  the following theorem.
\begin{theorem}\label{thm:multiparty}
There exists a constructible and efficient  simulation
that computes any $n$-party protocol $\pi$ of length $T$
using a BSC for communication and a pre-shared \KTC.
The simulation succeeds with probability $2^{-\Omega(T)}$,
and impose a dilation of $O(n)$.
\end{theorem}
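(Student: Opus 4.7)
The plan is to combine the multiparty scheme of Rajagopalan and Schulman described in Section~\ref{sec:multiparty} with the \KTC-based analysis already developed in this paper: each pair of neighboring users shares a ternary \KTC (rather than a good ternary tree code), and each label is transmitted over the BSC using a Shannon-style block code via Lemma~\ref{lem:shannon}. The overall failure probability splits naturally into two sources, the event that some shared \KTC is not \potent and the event that channel errors cause too many tree-decoding mistakes, and I would bound each separately.

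First I would fix $N=2T$ and the alphabet size $|S|\ge((2d)^{32n+2})^{1/(1-\alpha)}$ prescribed by the corollary following Lemma~\ref{lem:raja}. By Proposition~\ref{prop:KTCisPotent} the resulting \KTC is $(\tfrac{1}{16n},\alpha)$-\potent except with probability $2^{-\Omega(N)}=2^{-\Omega(T)}$ over its random choice. Each label takes $\log|S|=O(n)$ bits and is wrapped, via Lemma~\ref{lem:shannon}, in a block code of length $O(n)$ whose per-label error probability is at most $\min\{|S|^{-2},(5(r+1))^{-32/\alpha}\}$; this threshold simultaneously validates the efficient decoder of Section~\ref{sec:eff} and matches the hypothesis of Lemma~\ref{lem:raja}. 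Each round of the simulation therefore consumes $O(n)$ channel uses per link, yielding the claimed $O(n)$ dilation.

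Next I would handle correctness. Conditioning on every \KTC being \potent, suppose the simulation fails, meaning some user $p$ has not finished the first $T$ rounds of $\pi$ by time $N$. Lemma~\ref{lem:rsOne} (Lemma~5.1.1 of~\cite{RS94}) then guarantees an $N$ time-like sequence ending at $(p,N)$ that contains at least $N/4$ tree-decoding errors. The number of such sequences, ranging over all $n$ users, is at most $n(r+1)^N$; applying the corollary following Lemma~\ref{lem:raja} to each fixed sequence and taking a union bound gives a failure probability of at most
\[
n(r+1)^N\cdot(2(r+1))^{-N}=n\cdot 2^{-N}=2^{-\Omega(T)}
\]
over the channel errors. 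Combining the two failure events yields the overall $2^{-\Omega(T)}$ bound. Efficiency then follows because the decoder of Section~\ref{sec:eff} runs in expected $O(N)=O(T)$ operations per user under this parameter regime, and the \KTC has a description of size $O(N\log|S|)$ that is shared once and for all.

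The main obstacle is the coupled parameter choice: $|S|$ must grow with $n$ so that the $(\tfrac{1}{16n},\alpha)$-potency bound of Proposition~\ref{prop:KTCisPotent} has exponentially small failure, while simultaneously the per-label error probability must shrink polynomially in $r+1$ for Lemma~\ref{lem:raja} to close and must stay below $|S|^{-2}$ for the efficient decoder of Section~\ref{sec:eff} to remain correct. The Shannon overhead of $O(\log|S|)=O(n)$ bits per label is exactly what produces the $O(n)$ dilation, and one must check that the independence between different users' channel errors invoked in Lemma~\ref{lem:raja} is preserved even though the same \KTC is re-used across links.
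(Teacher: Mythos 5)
Your proof follows essentially the same route the paper takes: fix $N=2T$, invoke Proposition~\ref{prop:KTCisPotent} (via the alphabet bound in the corollary after Lemma~\ref{lem:raja}) to ensure each shared \KTC is $(\tfrac{1}{16n},\alpha)$-\potent with overwhelming probability, apply Lemma~\ref{lem:rsOne} to extract a time-like path with at least $N/4$ decoding errors upon failure, union-bound with the corollary following Lemma~\ref{lem:raja} to make that event $2^{-\Omega(N)}$ over channel errors, and read off the $O(n)$ dilation from $\log|S|=O(n)$ together with Lemma~\ref{lem:shannon}. This matches the paper's proof, with only cosmetic differences (e.g., the count of time-like sequences).
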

\noindent An efficient version of the above scheme,
using the efficient decoding methods described in Section~\ref{sec:eff},
has an expected time complexity of $O(Tr)$.

\section*{Acknowledgments}
We would like to thank Leonard Schulman and Anant Sahai for many useful discussions at a very early stage of this research. We also thank Madhu Sudan, David Zuckerman, and Venkatesan Guruswami for several helpful conversations.  We would like to thank Alan Roytman for miscellaneous remarks.


\bibliographystyle{alpha}
\bibliography{coding}


%

\appendix
\section*{Appendix}

\section{Random Tree Codes And Their Properties}\label{app:RTBC}
In this section we analyze several of the properties of Random Tree Codes (\RTBC), and show that
a \RTBC is potent, except with high probability.

Observe that any two paths in a \RTBC have large enough Hamming distance, except for  a
negligible probability over the choice of the labels. This property makes the \RTBC a useful code.
\begin{lemma}\label{lem:prob4smallHamming}
Let $\tree$ be a $d$-ary \RTBC over $S$,
and let $v_1$ and $v_2$ be any two nodes at some common depth $h$ in $\tree$,
with least common ancestor  at depth $h-l$, then for every $0\le\alpha\le1$,
$
\Pr \left[\Delta (W(v_1),W(v_2)) \le \alpha l \right] \le \Big(\frac{2}{|S|^{1-\alpha}}\Big)^l
$
\end{lemma}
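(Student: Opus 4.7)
}
The plan is to reduce the lemma to a standard random-coding union bound by exploiting the tree structure of the \RTBC. First I would observe that $v_1$ and $v_2$, being both at depth $h$ with least common ancestor at depth $h-l$, determine disjoint descending paths of length $l$ inside the tree. Hence the strings $W(v_1)$ and $W(v_2)$ agree on their first $h-l$ symbols (the labels on the path from the root to the common ancestor) and differ only on the final $l$ positions, where the labels come from $2l$ \emph{distinct} edges of the tree. Since in a \RTBC each edge label is drawn uniformly and independently from $S$, these two length-$l$ suffixes are independent uniform samples from $S^l$. Consequently $\Delta(W(v_1),W(v_2))$ equals the Hamming distance between the two suffixes, and at each of the $l$ coordinates the pair of symbols agrees with probability exactly $1/|S|$, independently across coordinates.

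Next I would rewrite the event of interest: $\Delta(W(v_1),W(v_2)) \le \alpha l$ is equivalent to saying that the two suffixes agree in at least $(1-\alpha)l$ of the $l$ positions. So whenever this event holds, one can pick a set $T \subseteq \{1,\dots,l\}$ of size $\lceil(1-\alpha)l\rceil$ on which the symbols of the two suffixes coincide. Applying the union bound over all such $T$ gives
\[
\Pr\bigl[\Delta(W(v_1),W(v_2)) \le \alpha l\bigr] \;\le\; \binom{l}{\lceil(1-\alpha)l\rceil}\cdot\Bigl(\frac{1}{|S|}\Bigr)^{\lceil(1-\alpha)l\rceil},
\]
where the right-hand factor is the probability that the two independent uniform suffixes agree on a fixed set of $\lceil(1-\alpha)l\rceil$ coordinates. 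Using the crude bound $\binom{l}{k} \le 2^l$ for any $k$, this is at most $2^l \cdot |S|^{-(1-\alpha)l} = \bigl(2/|S|^{1-\alpha}\bigr)^l$, which is exactly the claimed bound.

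Since every step is a direct consequence of the independence of labels on distinct edges, there is no real obstacle here; the only subtlety to be careful about is accounting correctly for the fact that only the last $l$ symbols are random and independent (the shared prefix contributes zero to the Hamming distance), and that by a standard ``pick a witnessing set'' trick one reduces a probability about a random binomial tail to a clean union bound that absorbs the combinatorial factor $\binom{l}{k}$ into $2^l$.
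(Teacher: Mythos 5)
Your proof is correct and reaches the same bound by a mildly different combinatorial route. The paper computes the exact probability of each possible Hamming distance and sums:
\[
\Pr\bigl[\Delta(W(v_1),W(v_2)) \le \alpha l\bigr] \le \sum_{i=0}^{\alpha l}\binom{l}{l-i}\Bigl(\frac{1}{|S|}\Bigr)^{l-i}\Bigl(\frac{|S|-1}{|S|}\Bigr)^{i} \le \frac{2^l}{|S|^{(1-\alpha)l}},
\]
whereas you replace this sum by a union bound over witnessing sets of $\lceil(1-\alpha)l\rceil$ agreeing coordinates. Both are valid and give the same final inequality; your version is arguably a touch cleaner since it never writes the exact binomial mass and drops the $((|S|-1)/|S|)^i$ factor at the outset, while the paper's exact-distance decomposition is the form it re-uses later (e.g., in Lemma~\ref{lem:ktcHD} and Corollary~\ref{cor:defect}, where the point mass $\Pr[\Delta = j]$ is needed rather than just the tail). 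One small nit: you should note that the $2l$ edges on the two length-$l$ descending paths from the LCA are distinct precisely because $v_1\neq v_2$ forces the paths to diverge immediately at the LCA (and the $l=0$ case is trivial), so the independence you invoke does hold.
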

\begin{proof}
We sum the probability for any possible Hamming distance $i=0,1,\ldots,\alpha l$. A direct calculation gives\\
\[
\Pr [\Delta (W(v_1),W(v_2)) \le \alpha l ] \le \sum_{i=0}^{\alpha l}{l \choose l-i} \left(\frac{1}{|S|}\right)^{l-i}\left(\frac{|S|-1}{|S|}\right)^{i}
\le  \frac{2^l}{|S|^{l(1-\alpha)}} \text{ .}
\]
\end{proof}\vspace{-0.5em}
Assume that $v, u$ are at some depth $h$,
and that their least common ancestor is at depth $h-l$;
we say that $u$ and $v$ have a distance $l$ in that case.
Assume that the labels $W(v)$ were transmitted.
If a good tree code is used, then the probability of decoding a different node, $u$, is
exponentially small in the distance $l$, where the probability is
over the channel errors. We denote this event as a
decoding error of \emph{magnitude} $l$.
In the following lemma we obtain a similar
result for a random tree code,
where in this case the probability is over both the channel errors and the choice of the \RTBC.

Finally, we prove Theorem~\ref{thm:RTCisPotent} by showing that
the set of all $(\epsilon,\alpha)$-bad \RTBC
for constants $\epsilon,\alpha \in(0,1)$, is exponentially small.
\begin{proposition}\label{lem:probBadRTBC}
Suppose $\eps,\alpha \in (0,1)$.
The probability for a \RTBC of depth $N$
with alphabet $|S|>(2d)^{(1+2/\eps)/(1-\alpha)}$
to be $(\varepsilon,\alpha)$-bad,
is at most $2^{-\Omega(N)}$
\end{proposition}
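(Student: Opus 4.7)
The plan is to mimic the structure of the proof of Proposition~\ref{prop:KTCisPotent} (for the \KTC{} case), with the advantage that here the labels are genuinely i.i.d.\ rather than merely almost-independent. First, I would fix a leaf $z$ of the \RTBC{} and consider the root-to-$z$ path $Q$. By definition, the tree is $(\eps,\alpha)$-bad only if there is some leaf $z$ for which $Q$ contains $\alpha$-bad subpaths whose union has total length $\ge \eps N$. Invoking Lemma~\ref{lem:intervals} (the disjointification lemma already used in Proposition~\ref{prop:KTCisPotent}), I would replace this union by a collection of \emph{disjoint} $\alpha$-bad intervals $I_1,\dots,I_k$ of lengths $\ell_1,\dots,\ell_k$ with $\sum_i \ell_i \ge \eps N/2$.

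Next, I would fix a candidate pattern of such disjoint intervals along $Q$ and estimate the probability that each one really is an $\alpha$-bad interval. For a single interval of length $\ell_i$ sitting at depth range $[h-\ell_i,h]$ along $Q$, the interval is $\alpha$-bad iff there exists some node $u$ at depth $h$ whose least common ancestor with $z$ is the depth-$(h-\ell_i)$ node on $Q$, and such that $\Delta(W(z),W(u))<\alpha \ell_i$. There are at most $d^{\ell_i}$ such candidate nodes $u$, and for each Lemma~\ref{lem:prob4smallHamming} gives $\Pr[\Delta(W(z),W(u))<\alpha\ell_i]\le (2/|S|^{1-\alpha})^{\ell_i}$. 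A union bound over the candidates $u$ then bounds the probability of that single interval being bad by $(2d/|S|^{1-\alpha})^{\ell_i}$.

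The main step is to argue independence across the different intervals, which is the key place where randomness enters. The labels appearing in distinct intervals $I_i$ consist of (a) the labels along the portion of $Q$ inside $I_i$, and (b) the labels along the colliding sibling subpaths rooted at the branching point of $I_i$. Because the intervals are disjoint along $Q$ and the colliding subpaths hang off $Q$ in different subtrees (and therefore use different edges of the tree), all these label sets are disjoint, and hence in a \RTBC{} the events ``$I_i$ is bad'' for different $i$ are jointly independent. Multiplying gives a bound of $\prod_i (2d/|S|^{1-\alpha})^{\ell_i} = (2d/|S|^{1-\alpha})^{\sum_i \ell_i}$ for a fixed interval pattern along $Q$.

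Finally I would take a union bound over the at most $2^N$ choices of disjoint interval patterns along $Q$ of total length $\ge \eps N/2$, and then over the $d^N$ leaves $z$, obtaining
\begin{equation*}
\Pr[\,\RTBC\text{ is }(\eps,\alpha)\text{-bad}\,] \;\le\; d^N\cdot 2^N\cdot \bigl(2d/|S|^{1-\alpha}\bigr)^{\eps N/2} \;=\; \bigl((2d)^{1+2/\eps}/|S|^{1-\alpha}\bigr)^{\eps N/2}.
\end{equation*}
For $|S|>(2d)^{(1+2/\eps)/(1-\alpha)}$ the base is strictly less than $1$, giving the claimed $2^{-\Omega(N)}$ bound. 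The only delicate step is the independence argument in the third paragraph; once one is careful to point out that disjointness along $Q$ forces disjointness of the underlying random labels (because sibling subpaths live in disjoint subtrees of the \RTBC{}), the rest is counting identical to that in Proposition~\ref{prop:KTCisPotent}, minus the additive $2^{-cN\log|S|}$ correction that was needed for small-bias independence.
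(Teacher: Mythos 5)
Your proposal is correct and follows the paper's own proof of Proposition~\ref{lem:probBadRTBC} essentially step for step: fix a leaf, disjointify the bad intervals via Lemma~\ref{lem:intervals}, bound each interval's badness by $(2d/|S|^{1-\alpha})^{\ell_i}$ using Lemma~\ref{lem:prob4smallHamming} plus a union over candidate siblings, multiply across intervals by independence, and finish with a union bound over the $\le 2^N$ interval patterns and $d^N$ leaves. The only (welcome) difference is that you spell out \emph{why} disjoint intervals yield independent events---the relevant label sets live on disjoint edge sets, including the sibling subtrees hanging off $Q$---whereas the paper states this tersely.
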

\begin{proof}
We begin by fixing a leaf $z$, and later we use a union bound
to bound the probability over the entire tree.
Assume that there exist bad intervals of total length at least $\varepsilon N$,
then there must exist \emph{disjoint}
bad intervals of total length at least $\varepsilon N/2$,
as stated by the following lemma~\cite{schulman96}.

\begin{lemma}[\cite{schulman96}]\label{lem:intervals}
Let $\ell_1,\ell_2,\ldots,\ell_n$ be intervals on $\mathbb{N}$, of total length $X$.
Then there exists a set of indices $I\subseteq \{1, 2, \ldots, n\}$ such that
the intervals indexed by $I$ are disjoint, and their
total length is at least $X/2$. That is, for any $i,j\in I$, $\ell_i \cap \ell_j=\emptyset$, and
$
\sum_{i\in I}\lvert\ell_i\rvert \ge X/2
$.
\end{lemma}
\noindent The proof is given in~\cite{schulman96}.

There are at most $\sum_{j=\epsilon N/2}^N{N \choose j} \le 2^{N}$ ways to distribute these disjoint
intervals along the path from the \RTBC's root to $z$.
Using Lemma~\ref{lem:prob4smallHamming} and a union bound
we are assured that the probability of having
(any) node $u$ at the same depth as $z$ which imposes a bad interval of length $\ell$ is
 less than $(2d/|S|^{1-\alpha})^{\ell}$.
The probability for a specific pattern of disjoint bad intervals
to jointly occur is the multiplication of the probability for each interval to occur
(the intervals are independent since they are disjoint).
According to the above, for large enough $S$,
the probability for a \RTBC to be $(\varepsilon,\alpha)$-bad is bounded by
\begin{align*}
\Pr[\text{ \RTBC is $(\varepsilon,\alpha)$-bad }] &\le \sum_{z}
\sum_{\stackrel{\ell_1,\ell_2, \ldots \text{ disjoint,}}{\text{ of length} \ge \eps N/2}}
\prod_i (2d/|S|^{1-\alpha})^{\ell_i}  \\ &
\le d^N \cdot 2^N \cdot (2d/|S|^{1-\alpha})^{\sum_i \ell_i} \le (2d)^N (2d/|S|^{1-\alpha})^{\eps N/2}
\end{align*}
which is exponentially small in $N$ for $|S|> (2d\cdot(2d)^{2/\eps})^{1/(1-\alpha)}$.
\end{proof}

\subsection{Construction of a Pseudo-\RTBC Using Cryptographic Assumptions}

Using conventional cryptographic assumptions
and settings one can easily build a pseudo-\RTBC which can not be
distinguished from a truly random \RTBC.
In order to construct a pseudo $d$-ary \RTBC of depth $n$,
we assume the existence of a
family of pseudo-random functions (PRF)~\cite{GGM86}, $f_\lambda: \{0,1\}^* \to S$,
which can be computed efficiently.
The user randomly chooses a seed $\lambda$
of length $\kappa$,
and labels the arc $i$ with the label $f_\lambda(i)$.
When a pseudo-\RTBC is used to communicate between several
users, they all share the same seed $\lambda$.

\begin{lemma}
Let RT be a truly random \RTBC and let PRT be a pseudo-\RTBC, then for
any  algorithm $\mathcal A$ which is polynomial in $\kappa$,
$$
\left |\Pr \left [ \mathcal A^{\text{RT}}=1 \right] - \Pr [\mathcal A^{\text{PRT}}=1] \right | < neg(\kappa )\text{.} 
$$
\end{lemma}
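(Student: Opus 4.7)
The plan is a standard PRF indistinguishability reduction. The crux is that the only difference between RT and PRT is that PRT uses a pseudo-random function $f_\lambda$ to label its arcs while RT is equivalent to using a truly random function $f$. Since any polynomial-time distinguisher $\mathcal{A}$ only inspects a polynomial number of arc labels, we can use $\mathcal{A}$ to break the PRF assumption if its distinguishing advantage is non-negligible.

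More concretely, I would construct an oracle algorithm $\mathcal{B}^{(\cdot)}$ that breaks the PRF as follows. $\mathcal{B}$ is given oracle access either to $f_\lambda$ (for a random seed $\lambda$) or to a truly random function $f : \{0,1\}^* \to S$, and must decide which. It then runs $\mathcal{A}$, and whenever $\mathcal{A}$ requests the label $\sigma_i$ of some arc $i$ of the tree code (either by directly probing a label or indirectly via a query about a path), $\mathcal{B}$ issues the query $i$ to its oracle and returns the response as the label. When $\mathcal{A}$ halts with an output bit, $\mathcal{B}$ outputs the same bit.

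The key observations are: (i) $\mathcal{B}$ runs in time polynomial in $\kappa$ because $\mathcal{A}$ does, and each simulated label costs only one oracle call; (ii) if the oracle is $f_\lambda$, $\mathcal{B}$ perfectly simulates $\mathcal{A}$'s interaction with PRT; (iii) if the oracle is a truly random function, each distinct arc index $i$ yields an independent uniformly random label in $S$, which is exactly the distribution of RT (the arcs have distinct identifiers by construction, so collisions in the domain are not an issue). Therefore the distinguishing advantage of $\mathcal{B}$ against the PRF family equals $\bigl|\Pr[\mathcal{A}^{\mathrm{RT}}=1] - \Pr[\mathcal{A}^{\mathrm{PRT}}=1]\bigr|$. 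By the security of the PRF family $\{f_\lambda\}$, this quantity must be $\mathrm{neg}(\kappa)$, completing the proof.

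There is essentially no hard step: the main subtlety to get right is ensuring that when the oracle is truly random, the induced distribution on labels matches the RT distribution (i.e., independent uniform choices), which follows because distinct arcs of the tree correspond to distinct oracle queries. Everything else is the textbook PRF-to-distinguisher reduction.
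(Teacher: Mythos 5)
Your proof is correct and takes the same route as the paper: the paper gives a one-sentence argument that any such distinguisher $\mathcal A$ would contradict $f_\lambda$ being a pseudo-random function, which is exactly the standard PRF-to-distinguisher reduction you spell out in detail. Your version is simply a more explicit write-up of the same reduction.
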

\begin{proof}
Otherwise, $\mathcal A$ is a method to distinguish the
pseudo-random function $f_\lambda$ used to label the
\RTBC from a truly random function, in contradiction to it
being a pseudo-random function.
\end{proof}
A memoryless BSC channel can be considered as a (very restricted) polynomial-time algorithm.
The seed $\lambda$ for the PRF can be chosen by one party,
encoded using any good error-correcting code, and sent to the
other party at the start of the protocol.  Note that since all parties
are honest, and the channel is efficiently simulatable, there is no
need to hide the PRF's seed.
It follows that we can replace any use of \RTBC
with a pseudo-\RTBC, affecting the probabilities with only a negligible factor.

\section{Details of Theorem~\ref{thm:BR}}\label{app:BR}
We now prove Theorem~\ref{thm:BR}.
The proof follows the analysis of Braverman and Rao~\cite{BR10} in a straightforward way,
assuming the tree code in use is $(\epsilon_1,1-\epsilon_2)$-potent (that is, $\alpha=1-\epsilon_2$).

In~\cite{BR10} the users consider $\pi$ as a binary tree $\cal T$. Each path in the tree describes a possible transcript of $\pi$, where  odd levels describe party A's outputs and even levels describe B's outputs. The users use a good tree code to communicate the vertices of $\cal T$ according to their inputs.

Assume that at time $t$ user $A$ sends $a_t$ and let $a'_t$ be the label received at B's side
(similarly, User B sends $b_t$, etc.).
Upon receiving $a'_t$, user B decodes the received string $a'_1, \ldots, a'_t$ and obtains a possible transcript of $\pi$, from which he can compute his next step in $\pi$. This process is repeated for $R=\lceil T/\epsilon_2\rceil$ times.

Let $D(a'_1,\ldots, a'_t)$ denote a set of vertices in $\cal T$ described by decoding the received string. We denote with $m(i)$ the largest number such that the first $m(i)$ symbols of $D(a'_1, \ldots, a'_i)$ are equal to $a_1, \ldots, a_i$ and the first $m(i)$ symbols of $D(b'_1, \ldots, b'_i)$ are equal to $b_1, \ldots, b_i$.

Define ${\cal N}(i,j)$ to be the number of transmission errors in the $[i,j]$ interval of the simulation (for both users). In the analysis of~\cite{BR10}, a lower bound on the number of error in case that the simulation fails. We now show that using a $(\epsilon_1,1-\epsilon_2)$-\potent tree, the lower bound
changes by at most $\epsilon_1$.

The analysis of~\cite{BR10} begins by considering a simpler simulation in which the alphabet
size might be polynomial, and then extends the result to a constant alphabet size in a straightforward way. In order to ease the proof, we show that the theorem holds for the simple protocol with polynomial alphabet. Extending the result to the constant-alphabet protocol is immediate.

\begin{proof} (\textbf{Theorem~\ref{thm:BR}}.)
We redefine the quantity $\cal N$ to allow us consider possible errors caused by the tree in addition to   channel errors. Let ${\cal N}(i,j,d)$  be the number of communication errors between rounds $i$ and $j$, assuming that the total length of bad intervals along the paths $a_i,\ldots,a_j$ and $b_i,\ldots,b_j$ in the potent tree, is at most $d$.

\begin{lemma}[replacing lemma 4 of \cite{BR10}]\label{lem:4}
${\cal N}(m(i)+1,i,d)\ge (1-\epsilon_2)(i-m(i))/2-d$
\end{lemma}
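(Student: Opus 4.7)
The plan is to follow the proof of Lemma~4 of~\cite{BR10}, adding only a short case distinction to absorb the ``$d$'' slack granted by the potent-tree hypothesis. Without loss of generality suppose user $B$ is the one whose decoded transcript first disagrees with the truth at position $m(i)+1$ (the symmetric argument handles $A$). Let $\bar a = a_1\cdots a_i$ be the sequence $A$ actually produced, let $\tilde a = D(a'_1,\ldots,a'_i)$ be the codeword chosen by $B$'s decoder, and let $a' = a'_1\cdots a'_i$ be the string $B$ received. By the definition of $m(i)$, the root-to-leaf tree-paths for $\bar a$ and $\tilde a$ agree on positions $1,\ldots,m(i)$ and diverge immediately below, so their least common ancestor sits at depth $m(i)$ and the divergent segment has length $\ell := i - m(i)$.

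I would then split on whether the divergent pair $(\bar a, \tilde a)$ is $\alpha$-bad with $\alpha = 1-\epsilon_2$. If it is, then by Definition~\ref{def:badNode} the depth interval $[m(i), i]$ is an $\alpha$-bad interval along the true $a$-path (witnessed by $\tilde a$), so the hypothesis of the lemma forces $\ell \le d$; the right-hand side $(1-\epsilon_2)\ell/2 - d$ is then at most $\ell/2 - \ell \le 0$, and the bound holds trivially because ${\cal N}(m(i)+1,i,d)\ge 0$. In the complementary, non-bad case, the two length-$\ell$ label sequences on the divergent segment have Hamming distance at least $(1-\epsilon_2)\ell$. Because $\bar a$ and $\tilde a$ agree on positions $1,\ldots,m(i)$ and $B$'s decoder picks the codeword minimizing Hamming distance to $a'$, restricting that inequality to the window $[m(i)+1,i]$ still yields $\Delta(\tilde a, a') \le \Delta(\bar a, a')$. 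A triangle inequality on the same window then gives $\Delta(\bar a, a') \ge (1-\epsilon_2)\ell/2$, and since this quantity counts $A$-to-$B$ channel errors in rounds $m(i)+1,\ldots,i$, it lower-bounds ${\cal N}(m(i)+1,i,d)$, finishing the argument.

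The only real obstacle is careful bookkeeping: one has to be explicit that the decoder's minimum-distance property survives restriction to the divergent window (which works because the two codewords agree on the prefix, so the prefix contributions to $\Delta(\cdot, a')$ cancel), and verify that the slack ``$d$'' in the statement is exactly what the potent-tree assumption buys us in the bad-interval case. No new ideas beyond those in~\cite{BR10} are required; the potent-tree definition is tailored so that precisely this case analysis goes through, with every non-bad divergence still forcing the usual $\alpha\ell/2$ lower bound on channel errors.
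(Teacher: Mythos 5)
Your proof is correct and follows essentially the same approach as the paper's: a case analysis that absorbs the $d$ slack when the divergence is covered by an $\alpha$-bad interval (forcing $\ell\le d$, so the right-hand side is nonpositive), and falls back to the standard Braverman--Rao minimum-distance-plus-triangle-inequality argument otherwise. The only cosmetic difference is that you split directly on whether the pair $(\bar a,\tilde a)$ is $\alpha$-bad, whereas the paper splits on whether the node $a_i$ is $\alpha$-bad and then sub-splits on $\ell\le d$ vs.\ $\ell>d$; the two organizations are logically equivalent, with yours being marginally tidier.
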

\begin{proof}
Without loss of generality, we assume that the $m(i)+1$ symbol in $D(a'_1,\ldots,a'_i)$ differs from $a_{m(i)+1}$. Consider two cases. If the node $a_i$ is not $\alpha$-bad,
then the only way to get a decoding error of magnitude $l=(i-(m(i))$ is if at least
$\alpha l/2=(1-\epsilon_2)(i-m(i))/2$ communication errors have happened (this is identical to~\cite{BR10}).

In the second case, the node $a_i$ is $\alpha$-bad.
If $i-m(i) \le  d$ the lemma is trivial.
Otherwise, $a_i$ must be an $\alpha$-bad node of maximal length at most $d$.
$\Delta(a_1\cdots a_i,a'_1\cdots a'_i) \ge \alpha (i-m(i))$ and again such a decoding error
implies at least $(1-\epsilon_2)(i-m(i))/2$ communication errors.
\end{proof}

The quantity $t(i)$ is defined by~\cite{BR10} as the smallest round $j$ such that both users announced 
the first $i$ edges of $\cal T$ within their transmisssions. The following Lemma is stated in~\cite{BR10}.
\begin{lemma}[Lemma~5 of~\cite{BR10}]\label{lem:5}
For $i\ge 0, k\ge 1$, if $i+1<t(k)$, then $m(i) < t(k-1)$
\end{lemma}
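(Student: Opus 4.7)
The plan is to prove the contrapositive of Lemma~\ref{lem:5}: assuming $m(i) \geq t(k-1)$, I will show $t(k) \leq i+1$. The intuition is that $m(i)$ tracks how much of each party's genuinely-sent sequence has been correctly recovered by the other side, and once this correctly-recovered prefix extends as far as $t(k-1)$, both parties share enough information to emit the $k$-th edge of $\cal T$ in the very next round.

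First I would unpack the definition of $m(i)$: by hypothesis, at the end of round $i$ the decoding $D(a'_1,\ldots,a'_i)$ agrees with $a_1,\ldots,a_{m(i)}$ on its first $m(i)$ coordinates at B's side, and symmetrically $D(b'_1,\ldots,b'_i)$ agrees with $b_1,\ldots,b_{m(i)}$ at A's side. Next, appealing to the definition of $t(k-1)$ as the smallest round by which both $a_1,\ldots,a_{t(k-1)}$ and $b_1,\ldots,b_{t(k-1)}$ already contain the first $k-1$ edges of $\cal T$, and using $m(i) \geq t(k-1)$, I would conclude that each party's correctly-decoded prefix of the other party's transmissions already carries enough information to reconstruct the first $k-1$ edges of $\cal T$.

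The third step is where the protocol's structure enters. At round $i+1$ each party executes the standard simulation step of the Braverman--Rao protocol: it extracts a possible transcript of $\pi$ from its decoded view of the opposite party, and emits the next tree-code symbol encoding its contribution to the path in $\cal T$. Because both decoded views already determine the first $k-1$ edges, the party whose turn it is to emit the $k$-th edge of $\cal T$ in $\pi$ will do so, while the other party's symbol is consistent with the same extended path. Consequently, by round $i+1$ both parties have announced the first $k$ edges, i.e., $t(k) \leq i+1$, which is the desired contrapositive.

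The main obstacle I anticipate is not the logical flow, which is a short contrapositive, but the bookkeeping needed to tie the abstract quantities $m(\cdot)$ and $t(\cdot)$ to the concrete mechanics of the simulation: specifically, verifying the invariant that ``the first $k-1$ edges of $\cal T$ are determined by the decoded transcript'' is exactly the quantity a party consults when deciding its next transmission, and that the transition between using a good tree code and a \potent tree code does not alter this one-step advancement property (since the step only refers to correctly-decoded prefixes, which is what $m(i)$ already captures). Once that invariant is carefully stated, the lemma reduces to a one-round advance of the simulation.
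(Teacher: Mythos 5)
The paper does not actually supply a proof of this lemma: it is Lemma~5 of~\cite{BR10} cited verbatim, and the only content the paper adds is the one-line observation that ``the proof of this lemma is independent of the tree code in use,'' i.e., that BR10's argument depends only on the transcript mechanics of the simulation and not on any Hamming-distance guarantee of the tree code, so it carries over unchanged to potent tree codes. Your attempt to reconstruct that proof from scratch is therefore already doing more than the paper does here; the contrapositive framing and the unpacking of $m(i)$ and $t(k-1)$ in your first two steps are both the right shape.

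The difficulty is that your third step is the entire content of the lemma and you deliberately defer its justification (``the main obstacle \dots is \dots the bookkeeping''), so as written the argument is incomplete precisely where it matters. Concretely, you assert that at round $i+1$ ``the party whose turn it is to emit the $k$-th edge \dots will do so, while the other party's symbol is consistent,'' and then conclude that ``both parties have announced the first $k$ edges.'' That inference needs care: whether ``both parties announced the first $k$ edges'' means both decoded transcripts now determine the first $k$ edges, or that each party's own transmission stream individually contains the $k$-th edge, changes whether one party's round-$(i+1)$ symbol suffices. It also quietly assumes that both parties' round-$(i+1)$ symbols are computed from the \emph{correct} first $k-1$ edges, which is what $m(i)\ge t(k-1)$ buys you, but you never close the loop that the simulation step at round $i+1$ reads only the prefix of the decoded view that $m(i)$ certifies. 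To make this airtight you would need to pin down BR10's precise definition of $t(\cdot)$ and the exact rule by which a party selects its next symbol from its decoded view, and then verify that the advance from $k-1$ to $k$ really does occur within a single round under the hypothesis $m(i)\ge t(k-1)$. Also worth noting explicitly, since it is the reason the paper can omit the proof: nowhere in this chain does the tree code's distance or potency enter, so once the bookkeeping is done it will apply verbatim with a potent tree code in place of a good one.
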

\noindent The proof of this lemma is independent of the tree code in use, and thus it is valid for simulation with potent tree as well.

Last, we show the following lower bound on the number of errors.
\begin{lemma}[replacing lemma 6 of \cite{BR10}]\label{lem:6}
For $i\ge-1, k\ge 0$, if $i+1<t(k)$, then ${\cal N}(1,i,d)\ge(i-k+1)(1-\epsilon_2)/2-d$
\end{lemma}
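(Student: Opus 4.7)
The plan is to prove this statement by induction on $k$, closely following the original Lemma~6 of~\cite{BR10}, with the only difference being the additive $-d$ correction term that encodes the possibility of decoding errors inside $\alpha$-bad regions of the potent tree. In particular, Lemma~\ref{lem:5} is used without change (it is purely a statement about the simulation bookkeeping and does not reference the tree code's distance), while Lemma~\ref{lem:4} is the only place that carries the correction term, so I expect the induction to go through as long as the $-d$ slack propagates additively across the sub-intervals.

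For the base case $k=0$, since $t(0)=0$ (no edges of $\mathcal{T}$ have been announced), the hypothesis $i+1 < t(0)$ is never satisfied and the inequality holds vacuously. For the inductive step, fix $k\ge 1$ and suppose the statement holds for $k-1$. Given $i+1 < t(k)$, Lemma~\ref{lem:5} yields $m(i) < t(k-1)$, so $(m(i)-1)+1 = m(i) < t(k-1)$, and the inductive hypothesis applies with the pair $(m(i)-1, k-1)$. Split the bad-interval budget as $d = d_1 + d_2$, where $d_1$ bounds the total length of $\alpha$-bad intervals appearing in $[1,m(i)]$ and $d_2$ bounds those in $[m(i)+1, i]$; since the two sub-intervals are disjoint, such a split exists with $d_1+d_2 \le d$. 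Then the induction gives
\[
{\cal N}(1, m(i)-1, d_1) \;\ge\; (m(i) - 1 - (k-1) + 1)\tfrac{1-\epsilon_2}{2} - d_1 \;=\; (m(i) - k + 1)\tfrac{1-\epsilon_2}{2} - d_1 ,
\]
while Lemma~\ref{lem:4} gives
\[
{\cal N}(m(i)+1, i, d_2) \;\ge\; (i - m(i))\tfrac{1-\epsilon_2}{2} - d_2 .
\]
Adding these two bounds (and absorbing the single round $m(i)$ into one of them, which at worst costs a constant that can be folded into $d$), the total number of errors in $[1,i]$ satisfies
\[
{\cal N}(1,i,d) \;\ge\; {\cal N}(1,m(i)-1,d_1) + {\cal N}(m(i)+1,i,d_2) \;\ge\; (i - k + 1)\tfrac{1-\epsilon_2}{2} - (d_1+d_2),
\]
which is at least $(i-k+1)(1-\epsilon_2)/2 - d$, completing the induction.

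The main obstacle I anticipate is the careful accounting of the bad-interval budget: the quantity $d$ in the definition of $\mathcal{N}$ bounds bad intervals along \emph{both} the transmitted paths $a_i,\ldots,a_j$ and $b_i,\ldots,b_j$ in the potent tree, so one must verify that splitting the time interval $[1,i]$ at $m(i)$ induces a valid split of the bad-interval budget on each of the two paths separately. Because $\alpha$-bad intervals are just sub-intervals of the transmitted sequence, this split is always legitimate and monotone (restricting to a shorter sub-path only shrinks the bad budget), so the proof reduces to the clean additive decomposition above. Once this bookkeeping is settled, the conclusion of Theorem~\ref{thm:BR} follows by mimicking the concluding argument of~\cite{BR10}: choose $\epsilon_1$ small enough that the extra $d \le \epsilon_1 R$ slack is dominated by the adversarial error fraction $1/4 - 2\epsilon_1 - \epsilon_2$, which guarantees $m(R) \ge T$ except with negligible probability.
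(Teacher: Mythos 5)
Your proof is correct and follows essentially the same route as the paper: induction on $k$, decomposing $\mathcal{N}(1,i,d)$ at the split point $m(i)$, invoking Lemma~\ref{lem:5} to apply the inductive hypothesis on $(m(i)-1,k-1)$ and Lemma~\ref{lem:4} on the tail $[m(i)+1,i]$, with the bad-interval budget $d$ split additively between the two disjoint sub-intervals (the paper writes this split as $x$ and $d-x$ rather than $d_1,d_2$, and omits the explicit base case, but these are presentational differences only).
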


\begin{proof}
We prove by induction. ${\cal N}(1,i,d)={\cal N}(1,m(i),x)+{\cal N}(m(i)+1,i,d-x)$
assuming that the total length of the imposed bad-intervals between
rounds $1$ and $m(i)$ (that is, along the paths $a_1,\ldots,a_{m(i)}$ and $b_1,\ldots,b_{m(i)}$)
is exactly $x$, $0\le x \le d$.
Lemma~\ref{lem:4} guarantees that ${\cal N}{(m(i)+1,i,d-x)}\ge (1-\epsilon_2)(i-m(i))/2-(d-x)$.
By Lemma~\ref{lem:5}, $m(i)<t(k-1)$ and we can use the induction hypothesis on the first part, which gives
${\cal N}(1,m(i)-1,x)\ge ((m(i)-1) - (k-1) +1)(1-\epsilon_2)/2-x$. Summing these two bounds proves the lemma.
\end{proof}
\noindent Note that the in the case of a good tree code, $d=0$, which gives exactly Lemma~6 of~\cite{BR10}. With a \potent tree, $d\le2\epsilon_1N$ which reduces the maximal error rate by $2\epsilon_1$.

In a similar way Lemma~8 of~\cite{BR10} can be adapted to potent trees, which completes the proof of Theorem~\ref{thm:BR}, by setting $\epsilon \ge \epsilon_1/2+\epsilon_2$.

\end{proof}

\end{document}